\theoremstyle{theorem}
\newtheorem{theorem}{Theorem}[section]
\newtheorem{lemma}[theorem]{Lemma}
\newtheorem{proposition}[theorem]{Proposition}
\newtheorem{corollary}[theorem]{Corollary}
\theoremstyle{definition}
\newtheorem{definition}[theorem]{Definition}
\theoremstyle{remark}
\newtheorem{remark}[theorem]{Remark}
\numberwithin{equation}{section}
\newcommand{\IN}{\mathbb{N}}
\newcommand{\subf}{\mathrm{sf}}
\newcommand{\kxs}{\mathrm{K4\times S5}}
\newcommand{\sxs}{\mathrm{S4\times S5}}
\newcommand{\Kfour}{\mathrm{K4}}
\newcommand{\Sfour}{\mathrm{S4}}
\newcommand{\Sfive}{\mathrm{S5}}
\newcommand{\ssl}{\mathrm{SSL}}
\newcommand{\SSL}{\mathrm{SSL}}
\newcommand{\EXPSPACE}{\mathrm{EXPSPACE}}
\newcommand{\ESPACE}{\mathrm{ESPACE}}
\title%[$\ESPACE$-Algorithms for Three Bimodal Logics]
{$\EXPSPACE$-Completeness of the Logics $\kxs$ and $\sxs$ and the Logic of Subset Spaces, \\
Part 1: $\ESPACE$-Algorithms}
\author{Peter Hertling and Gisela Krommes\\
Fakult\"at f\"ur Informatik \\
Universit\"at der Bundeswehr M\"unchen \\
85577 Neubiberg, Germany \\[2mm]
Email: peter.hertling@unibw.de, gisela.krommes@unibw.de}
\date{\today}
\begin{document}

\maketitle

\begin{abstract}
It is known that the satisfiability problems of the product logics $\kxs$ and $\sxs$ and of the logic $\ssl$ of subset spaces are in $\mathrm{N2EXPTIME}$.
We improve this upper bound for the complexity of these problems by presenting $\mathrm{ESPACE}$-algorithms for these problems.
In another paper we show that these problems are $\EXPSPACE$-hard. This shows that all three problems are $\EXPSPACE$-complete.
\end{abstract}

\bigskip

\noindent{\bf Keywords:}
bimodal product logics, subset space logic, satisfiability problem, complexity theory, $\EXPSPACE$-completeness

\maketitle

\section{Introduction}
One of the fundamental complexity-theoretic results about logic is Cook's theorem
which says that the satisfiability problem for Boolean formulas is $\mathrm{NP}$-complete~\cite{Cook1971}.
Since then the complexity of many other logics has been analysed.
In this article we are concerned with the bimodal product logics $\kxs$ and $\sxs$
and with the subset space logic $\ssl$, a bimodal logic as well.
To the best of our knowledge, the complexity of $\kxs$, of $\sxs$, and of $\ssl$ were open problems.
The main results of this article can be summarized in the following theorem.

\begin{theorem}
	The logics
	$\kxs$, $\sxs$, and $\ssl$ are in $\ESPACE$.
\end{theorem}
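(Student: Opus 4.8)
The plan is to give, for each of the three logics, a nondeterministic algorithm that decides satisfiability using space $2^{O(n)}$ on inputs of length $n$, and then to invoke Savitch's theorem (which collapses $\mathrm{NSPACE}(2^{O(n)})$ into $\mathrm{DSPACE}(2^{O(n)})=\ESPACE$) to obtain the claimed deterministic bound. First I would fix the input formula $\varphi$ and pass to a finite set $\Sigma$ of relevant formulas, namely the subformulas of $\varphi$ together with their single negations, so that $|\Sigma|=O(n)$. A \emph{type} is then a Hintikka set over $\Sigma$, i.e.\ a Boolean-saturated subset; there are at most $2^{|\Sigma|}=2^{O(n)}$ types, each storable in space $O(n)$.

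Second, I would exploit the product structure. For $\kxs$ and $\sxs$ a model is, up to the usual adjustments, a grid whose columns are indexed by the transitive $\Kfour$/$\Sfour$ frame and whose rows are indexed by the $\Sfive$ frame, with the horizontal relation transitive, the vertical relation an equivalence, and the whole subject to the commutativity and Church--Rosser (confluence) conditions that characterise product frames. Because the vertical relation is an equivalence, the mutually $\Sfive$-related worlds of a single column form one cluster, which I would summarise by the \emph{set} of types it realises; call such a set a \emph{column-cluster}. A column-cluster is an object of size $2^{O(n)}$, and I would isolate exactly the local and $\Sfive$-internal coherence conditions it must satisfy (Boolean closure, realisation of every vertical $\Sfive$-diamond demand by some type in the set, and the $\Sfive$ box and reflexivity constraints). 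The horizontal modality then becomes a transition relation between column-clusters: $C'$ is an admissible successor of $C$ when the types of $C$ can be matched to compatible types of $C'$ so that all horizontal box-persistence conditions hold and the confluence conditions linking the two directions are respected.

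Third, and this is the crux, I would establish a bounded-model property with two quantitative parts: (i) each column-cluster may be taken to realise only boundedly many types, so that it still fits in space $2^{O(n)}$; and (ii) every maximal chain of column-clusters in the transitive direction may be taken to have length at most $2^{O(n)}$. Part (ii) rests on a monotonicity argument: along a horizontal chain the accumulated horizontal-box content only grows, by transitivity of $\Kfour$/$\Sfour$, and one never needs to revisit an already realised horizontal profile; the interaction with $\Sfive$ blocks the cleaner linear bound available for $\Sfour$ alone, but still leaves only exponentially many profiles to traverse. Granting this, the algorithm performs a depth-first search over the column-cluster graph: it guesses a root column-cluster forcing some type to contain $\varphi$, verifies its coherence, and then, for each outstanding horizontal diamond demand, guesses an admissible successor column-cluster and recurses. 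At any moment the machine stores only the current root-to-leaf path, that is at most $2^{O(n)}$ column-clusters each of size $2^{O(n)}$ together with the bookkeeping of pending demands, hence total space $2^{O(n)}$. Soundness holds because a successful run lays out the reachable part of a genuine product model, and completeness holds because the bounded-model property guarantees that some model can be explored within the depth and width bounds.

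Finally, for $\ssl$ I would run the same machinery after translating the subset-space semantics into the grid picture: the knowledge modality plays the role of the vertical $\Sfive$ relation on the points of a fixed open set, while the effort modality plays the role of the transitive horizontal relation that passes to smaller open sets, so that shrinking sequences of opens correspond to transitive chains of column-clusters. The additional subset-space constraints --- that a point persists in every open containing it, and the cross axiom coupling effort and knowledge --- are imposed as further admissibility conditions on column-clusters and on the transition relation, without affecting the space analysis. The main obstacle throughout is part (ii) of the bounded-model property together with the correct formulation of the inter-column transition: one must certify the global commutativity and confluence (and, for $\ssl$, the cross axiom) while the search remembers only a single path. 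Once these local admissibility conditions are shown to capture exactly the product and subset-space frame conditions, the space bound and Savitch's theorem finish the proof.
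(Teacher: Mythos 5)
Your overall architecture matches the paper's: Hintikka sets over $\subf(\varphi)$ (the paper's tableau-sets $\mathcal{T}^X_\varphi$), sets of such types representing $\Sfive$-classes (your column-clusters are the paper's tableau-clouds), a successor relation between clusters, and a pathwise depth-first search storing one root-to-leaf branch. But your step (ii) --- the claim that maximal chains of column-clusters have length $2^{O(n)}$ --- is exactly the heart of the matter, and the monotonicity argument you sketch for it does not work. The difficulty is that clusters are \emph{subsets} of the type set, so there are $2^{|\mathcal{T}^X_\varphi|}=2^{2^{\Theta(n)}}$ of them, doubly exponentially many. If ``horizontal profile'' means the cluster itself, then ``never revisit a realised profile'' bounds the path only by the number of clusters, which is doubly exponential, and the branch no longer fits in $\ESPACE$ (this is essentially why the previously known bound was only doubly exponential). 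If instead ``profile'' means the accumulated $\Box$-content, there are only $2^{O(n)}$ profiles, but the cluster-level successor relation $\leq_X$ is a non-antisymmetric preorder and the search must in general pass through long stretches of mutually $\leq_X$-equivalent yet \emph{distinct} clusters to discharge diamond demands; ``box content only grows'' says nothing about the length of such plateaus. The paper needs two separate, nontrivial ingredients here, neither of which your sketch supplies: (a) a combinatorial bound on the number of \emph{strict} steps, namely that the relation induced on the power set $\mathcal{P}(S)$ by a transitive relation on $S$ has maximum chain length at most $2\cdot|S|$ --- linear in $|S|$, not in $|\mathcal{P}(S)|$ --- proved via a minimal-elements-modulo-equivalence argument in which each class enters and leaves the minimal set at most once (Proposition~\ref{prop:mcl3}); and (b) an explicit backward-loop blocking condition in the algorithm (condition (I) of $\mathit{alg}_X$) together with a counting argument over pairs $(\Box\chi,F)$, showing each plateau of equivalent clusters has length at most $(n-1)\cdot|\mathcal{T}^X_\varphi|$, which combine to give recursion depth $O(n\cdot|\mathcal{T}^X_\varphi|^2)=2^{O(n)}$ (Proposition~\ref{prop:recursiondepth}). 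Without (a) and (b) your space analysis collapses, so as it stands the proposal has a genuine gap precisely at the point you yourself flag as the crux.

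A secondary inaccuracy: your soundness claim that ``a successful run lays out the reachable part of a genuine product model'' cannot be literally right, since there are $\kxs$- and $\sxs$-satisfiable formulas all of whose product models are infinite, while your run is finite. The finite object a run produces is a \emph{commutator} model (transitive/preorder plus equivalence plus left and right commutativity), and one must invoke the Gabbay--Shehtman result that commutator-satisfiability coincides with product-satisfiability; similarly for $\ssl$ the run yields a cross axiom model, not a subset space (some satisfiable $\ssl$-formulas have no finite subset space model). This is repairable, but the detour through commutator and cross axiom models is a necessary part of a correct proof, not a ``usual adjustment.'' Your use of nondeterminism plus Savitch is harmless (squaring $2^{cn}$ keeps a linear exponent), though unnecessary: once the depth bound (a)--(b) is available, the successor guesses can simply be replaced by deterministic enumeration, as the paper does.
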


Actually, we are considering the satisfiability problems of these three logics, and we are going to show that the satisfiability problems of these logics are in $\ESPACE$. Of course, this assertion is equivalent to the theorem above because $\ESPACE$ is closed under complements. In another paper \cite{HK2019-2} we show that these problems are $\EXPSPACE$-hard under logspace reduction. Both results together imply that all three logics are $\EXPSPACE$-complete under logspace reduction.

Let us recap the history of the questions and results concerning the complexity of these problems. The following text is almost identical with a corresponding text in ~\cite{HK2019-2}.

In \cite[Question 5.3(i)]{marx1999complexity} Marx posed the question what the complexity of the bimodal logic $\sxs$ is. 
This question is restated and extended to the logic $\kxs$ in~\cite[Problem 6.67, Page 334]{Kurucz2003}.
There it is also stated that
``M. Marx conjectures that these logics are also EXPSPACE-complete''.
That it is desirable to know the complexity of $\ssl$ and similar logics is mentioned by Parikh, Moss, and Steinsvold in \cite[Page 30]{parikh2007topology}
and by Heinemann in \cite[Page 153]{heinemann2016augmenting} and in \cite[Page 513]{heinemann2016subset}. 

For the complexity of the satisfiability problems of the logics $\kxs$ and $\sxs$ the best upper bound known is $\mathrm{N2EXPTIME}$ \cite[Theorem 5.28]{Kurucz2003}, that is, they can be solved by a nondeterministic Turing machine working in doubly exponential time. The best lower bound known for the satisfiability problems of these two logics is $\mathrm{NEXPTIME}$-hardness \cite[Theorem 5.42]{Kurucz2003}; compare also \cite[Table 6.3, Page 340]{Kurucz2003}.
It is known as well that for any $\ssl$-satisfiable formula there exists a cross axiom model of at most doubly exponential size~\cite[Section 2.3]{Dabrowski1992}. This shows that the complexity of the satisfiability problem of $\ssl$ is in $\mathrm{N2EXPTIME}$ as well. The best lower bound known for $\ssl$ is $\mathrm{PSPACE}$-hardness ~\cite{krommes2003,krommes2003new}.

In this paper we improve the upper bound $\mathrm{N2EXPTIME}$ for the satisfiability problems of these three logics to $\ESPACE$. In another paper~\cite{HK2019-2} we show a matching lower bound by showing that these problems are $\EXPSPACE$-hard. This shows that they are $\EXPSPACE$-complete. Thus, Marx's conjecture for $\kxs$ and $\sxs$ stated above is true.

In Section~\ref{section:combined-modal-logics} we introduce the bimodal logics $\kxs$, $\sxs$, and $\ssl$. Actually, we restrict ourselves to defining only those notions concerning these logics that we need. First the syntax of bimodal formulas is defined, then various kinds of models are presented, and then we define when a bimodal formula is $X$-satisfiable, for $X\in \{\kxs,\sxs,\ssl\}$.
In Section~\ref{section:overview} we formulate a more precise version of the main theorem, that is, a stronger upper bound for the complexity of the satisfiability problems of these three logics than just $\ESPACE$. And we give an overview of the proof. In the following sections we do some more preparations, present the algorithms, prove their correctness, and prove the claimed upper bounds for the space needed by these algorithms. 
We present recursive decision algorithms for these problems that are based on certain kinds of tableaux. We will construct tableaux not as usual brick by brick. Instead we shall use prefabricated parts that we call ``tableau-clouds'' and that are somewhat similar to mosaics~\cite{nemeti1995decidable}. 
Our recursive algorithms are similar to the recursive algorithm of  Ladner~\cite{ladner1977computational} for the modal logic $\mathrm{S4}$. We would like to point out that Section~\ref{section:relations}, in particular Subsection~\ref{subsection:maximum-chain-length}, contains some general combinatorial observations on certain binary relations that may be of interest elsewhere as well.

Let us end this introduction by mentioning some complexity-theoretic notions that will be used. The required notions from logic will be introduced in Section~\ref{section:combined-modal-logics}.
First, as usual $\IN=\{0,1,2,\ldots\}$ is the set of natural numbers, that is, of non-negative integers.
An \emph{alphabet} is a finite, nonempty set. For an alphabet $\Sigma$ let $\Sigma^*$ be the set of all finite strings over $\Sigma$. A \emph{language} is any subset $L\subseteq \Sigma^*$, where $\Sigma$ is any alphabet. For a function $s:\IN\to\IN$ we say that a language $L$ \emph{can be decided in space $O(s)$} if there exists a deterministic Turing machine that decides $L$ in space $O(s)$; for the precise definition of what this means the reader is referred to \cite{Papadimitriou1994} or to any other textbook on complexity theory. The following two complexity classes have already been mentioned.
\begin{itemize}
\item
$\EXPSPACE$ is the set of languages that can be decided by a deterministic Turing machine in space $2^{p(n)}$ for some polynomial $p$. 
\item
$\ESPACE$ is the set of languages that can be decided by a deterministic Turing machine in space $2^{c\cdot n+c}$, for some constant $c\in \IN$, that is, the exponent is linear. 
\end{itemize}
Note that in order to speak about the complexity of a decision problem one should encode the instances of the decision problem by strings. In this way one gets a language.
At first sight it might seem surprising that here we establish $\ESPACE$ as an upper bound and in another paper \cite{HK2019-2} $\EXPSPACE$-hardness as a lower bound. But this is not a contradiction because the complexity class $\ESPACE$ is not closed under reduction, neither reductions running in polynomial time nor those running in logarithmic space.

\section{Definition of the Satisfiability Problems of the logics $\kxs$, $\sxs$, and $\ssl$}
\label{section:combined-modal-logics}

In the first subsection of this section we define the syntax of bimodal formulas. Then we introduce various kinds of models. Finally, we define $X$-satisfiability of bimodal formulas, for each $X\in\{\kxs,\sxs,\ssl\}$.

\subsection{Bimodal Formulas}

Bimodal formulas are defined just like Boolean formulas but with two additional unary (modal) operators, that we write as $\Box$ and as $K$.
For the aimed complexity proofs it is convenient to define the syntax in such a way that propositional variables are represented as $x\,binary$ where $binary$ is some binary number without leading zeros. The set of well-formed bimodal formulas $\mathcal{L}$ is generated by a context-free grammar. 

\begin{definition}[Syntax of Bimodal Formulas] \label{def:syntax}
	The set $\mathcal{L}$ of well-formed \emph{bimodal formulas} is recursively generated using the following Backus-Naur grammar:	 
	\[\begin{array}{lll} 
	\varphi	&::=	& \neg\varphi \mid (\varphi\wedge\varphi) \mid K\varphi \mid \Box\varphi \mid \langle var\rangle\\
	\langle var\rangle	&::=	&x0 \mid x1 \mid x1\langle binstring\rangle\\
	\langle binstring\rangle &::= & 0 \mid 1 \mid 0\langle binstring\rangle \mid 1\langle binstring\rangle
	\end{array} \]
\end{definition}

The set $AT$ of propositional variables in $\mathcal{L}$ is defined by 
$$AT:=\{w\in\mathcal{L}\mid x \text{ is prefix of } w\}.$$ 	
We also need some formulas of special type. For a modal operator $\circ\in\{K,\Box\}$ we define the set
\[\mathcal{L}_\circ := \{\psi \in \mathcal{L} \mid (\exists \chi \in \mathcal{L}) \ \psi = \circ \chi\}\]
We adopt standard abbreviations for additional propositional connectives and the dual modal operators: $(\varphi\vee\psi):=\neg(\neg\varphi\wedge\neg\psi)$, $(\varphi\to\psi):=(\neg\varphi\vee\psi)$, 
$(\varphi\leftrightarrow\psi):=((\neg\varphi\vee\psi)\wedge(\neg\psi\vee\varphi))$,
$L\varphi:=\neg K\neg\varphi$ and $\Diamond\varphi:=\neg\,\Box\neg\varphi$.
We will omit brackets whenever there is no danger that this might lead to confusion.
We introduce some further syntactical concepts and notions:
\begin{definition}[Subformula]
	The set $\subf(\varphi)$ of {\em subformulas} of a bimodal formula $\varphi$ is
	defined as usual by recursion:
	\begin{eqnarray*}
		\subf(A) &:=& \{A\} \text{ for } A\in AT, \\
		\subf(\neg \varphi) &:=& \{\neg \varphi\} \cup\subf(\varphi), \\
		\subf((\varphi \wedge \psi)) &:=& \{(\varphi \wedge \psi)\} \cup \subf(\varphi) \cup \subf(\psi), \\
		\subf(\Box \varphi) &:=& \{\Box \varphi\} \cup \subf(\varphi), \\
		\subf(K \varphi) &:=& \{K \varphi\} \cup \subf(\varphi).
	\end{eqnarray*}	
\end{definition}

\subsection{Several Kinds of Models for Bimodal Logics}
\label{subsection:models}

All three logics considered, $\kxs$, $\sxs$, and $\ssl$, can be considered as combinations of either $\mathrm{K4}$ or $\mathrm{S4}$ with $\mathrm{S5}$.
The logic $\kxs$ is defined as the logic of $\kxs$-product frames, and $\sxs$ is defined as the logic of $\sxs$-product frames, defined as follows.

\begin{definition}[$\kxs$- and $\sxs$-Product Models]
	\label{definition:S4S5product} 
	\begin{enumerate}
		\item
		\sloppy{
		A K4-\emph{frame} is a pair $(W,R_{\Diamond})$
		such that $W$ is a non-empty set and $R_{\Diamond}\subseteq W\times W$ is a transitive relation on $W$. 
		}
		
		An S4-\emph{frame} is a pair $(W,R_{\Diamond})$
		such that $W$ is a non-empty set and $R_{\Diamond}\subseteq W\times W$ is a preorder on $W$,
		that is a reflexive and transitive relation. 
		
		An S5-\emph{frame} is a pair $(W,R_{L})$ such that $W$ is a non-empty set and $R_L\subseteq W\times W$ is an equivalence relation on $W$,
		that is a reflexive, transitive and symmetric relation.
		\item
		Let $X\in\{\Kfour,\Sfour\}$.
		Let $F_1:=(W_1,R_{\Diamond})$ be some $X$-frame,
		$F_2:=(W_2,R_{L})$ be some  S5-frame.
		Then the {\em product $F_1\times F_2$} is the triple
		\[ F:= (W_1\times W_2, \, \stackrel{\Diamond}{\to}, \, \stackrel{L}{\to})\]
		where $\stackrel{\Diamond}{\to}$ and $\stackrel{L}{\to}$ are the binary relations
		on $W_1 \times W_2$ defined by			
		\begin{eqnarray*}
			(v_1,v_2) \stackrel{\Diamond}{\to} (w_1,w_2) 
			&\iff& v_1R_\Diamond w_1  \mbox{ and } v_2 = w_2, \\
			(v_1,v_2) \stackrel{L}{\to} (w_1,w_2) 
			&\iff& v_2R_L w_2  \mbox{ and } v_1 = w_1,
		\end{eqnarray*}
		for all $(v_1,v_2), (w_1,w_2) \in W_1\times W_2$.
		Any such product is called an $X\times\Sfive$-\emph{product frame}.
		\item
		Let $X\in\{\Kfour,\Sfour\}$.
		Then an {\em $X\times\Sfive$-product model} is a
		quadruple $(W, \stackrel{\Diamond}{\to}, \stackrel{L}{\to}, \sigma)$
		such that the triple $(W, \stackrel{\Diamond}{\to}, \stackrel{L}{\to})$ is an
		$X\times\Sfive$-product frame and 
		$$\sigma:AT\to\mathcal{P}(W)$$
		is a function mapping proposition letters to subsets of $W$.
	\end{enumerate}
\end{definition}

Let $X\in\{\Kfour,\Sfour\}$.
Note that the relation $\stackrel{\Diamond}{\to}$ in an $X\times\Sfive$-product frame is automatically transitive and in the case of $X=\Sfour$ even a preorder and that the relation $\stackrel{L}{\to}$ in a product frame is automatically an equivalence relation.
In diagrams we will usually depict the relation $\stackrel{\Diamond}{\to}$ as the `vertical' relation and the relation $\stackrel{L}{\to}$ as the `horizontal' relation, as in Figure~\ref{figure:left-and-right-commutativity}.
Note that it is obvious that any $X\times\Sfive$-product frame, for $X\in\{\kxs,\sxs\}$, satisfies the following two properties:
\begin{itemize}
\item \emph{left commutativity}:
		$\forall w\forall u \forall u' \left( 
		(w \stackrel{\Diamond}{\to} u \wedge u \stackrel{L}{\to} u')
		\rightarrow
		\exists w' (w \stackrel{L}{\to} w' \wedge w' \stackrel{\Diamond}{\to} u')		 
		\right)$, 
\item  \emph{right commutativity}:
		$\forall w\forall w'\forall u' \left(
		(w \stackrel{L}{\to} w' \wedge w' \stackrel{\Diamond}{\to} u')
		\rightarrow
		\exists u (w \stackrel{\Diamond}{\to} u \wedge u \stackrel{L}{\to} u')		 
		\right)$.
\end{itemize}
These properties are illustrated in Figure~\ref{figure:left-and-right-commutativity}, essentially copied from \cite{Kurucz2003}. 
\begin{figure}[h]
   \centering			
	\includegraphics[width=0.6\linewidth]{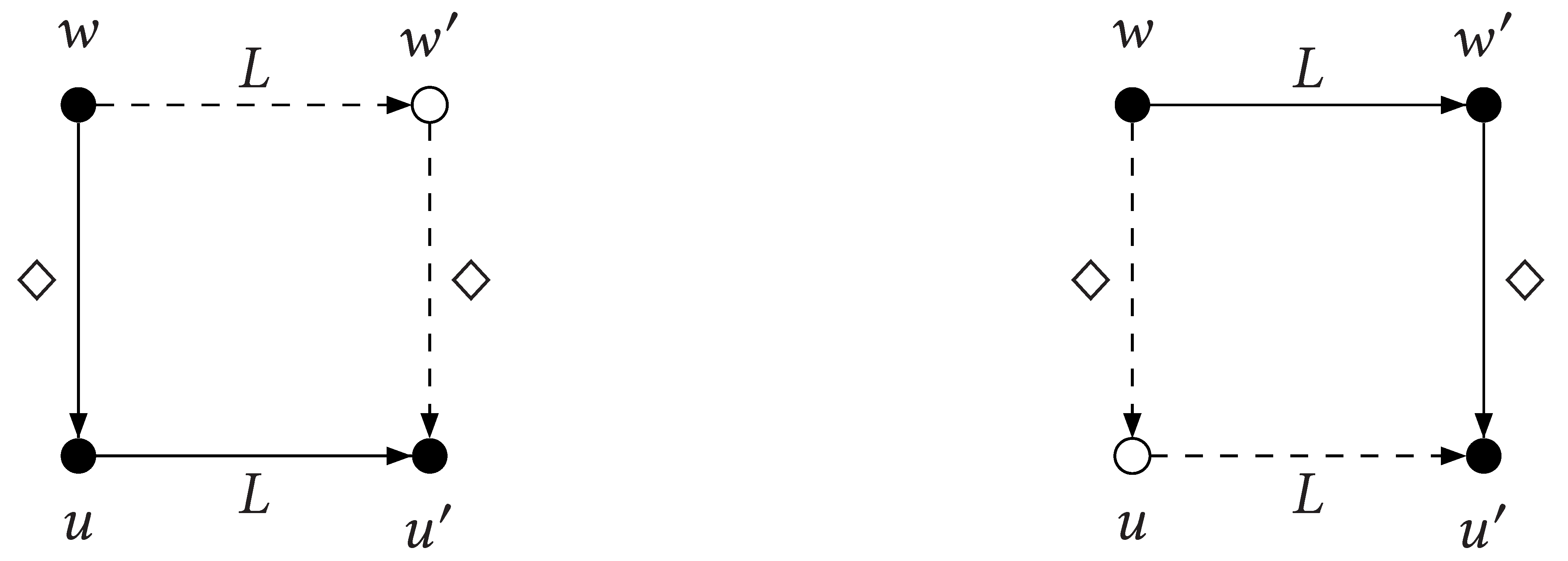}
	\caption{Left commutativity (on the left) and right commutativity (on the right).}
%	\Description{Left commutativity and right commutativity}
	\label{figure:left-and-right-commutativity}
\end{figure}

In the following subsection we define the semantics of bimodal formulas with respect to such models. For a reason that will be explained in the following subsection we shall actually not work with product models but with the following, slightly more general kinds of models.

\begin{definition}[$\kxs$- and $\sxs$-Commutator Models]
	\label{def:S4S5commutator}
	\begin{enumerate}
		\item
		\sloppy{
		A \emph{$\kxs$-commuta\-tor frame} is a triple
		$(W, \stackrel{\Diamond}{\to}, \stackrel{L}{\to})$
		such that $\stackrel{\Diamond}{\to}$ is a transitive relation on $W$,
		such that $\stackrel{L}{\to}$ is an equivalence relation on $W$,
		and such that left commutativity and right commutativity hold.
		}
		
		An \emph{$\sxs$-commutator frame} is a $\kxs$-commutator frame such that additionally the relation $\stackrel{\Diamond}{\to}$ is reflexive.
		\item
		Let $X\in\{\Kfour,\Sfour\}$.
		An {\em $X\times\Sfive$-commutator model} or short \emph{$X\times\Sfive$-model} is a
		quadruple ${(W, \stackrel{\Diamond}{\to}, \stackrel{L}{\to}, \sigma)}$
		such that the triple $(W, \stackrel{\Diamond}{\to}, \stackrel{L}{\to})$ is an
		$X\times\Sfive$-commutator frame and 
		$$\sigma:AT\to\mathcal{P}(W)$$
		is a function mapping proposition letters to subsets of $W$.		
	\end{enumerate}
\end{definition}

It is clear that any $X\times\Sfive$-product frame is an $X\times\Sfive$-commutator frame and any $X\times\Sfive$-product model is an $X\times\Sfive$-commutator model, for any $X\in\{\Kfour,\Sfour\}$.

The subset space logic $\ssl$ has been defined originally via so-called subset space models~\cite{Dabrowski1992}. As we will not use them we refrain from introducing them. Dabrowski, Moss, and Parikh~\cite{Dabrowski1992} have shown that the logic $\ssl$ can equivalently be characterized by so-called cross axiom models.

\begin{definition}[Cross Axiom Models]
\begin{enumerate}
\item
	A \emph{cross axiom frame} is a tuple 
	$$M:=(W,\stackrel{\Diamond}{\to},\stackrel{L}{\to})$$ 
	such that $W$ is a non-empty set, $\stackrel{\Diamond}{\to}$ is a preorder on $W$, 	
	$\stackrel{L}{\to}$ is an equivalence relation on $W$, and left commutativity holds.
\item
	A \emph{cross axiom model} or short \emph{$\ssl$-model} is a cross axiom frame together with a function $$\sigma:AT\to\mathcal{P}(W)$$ 
	mapping proposition letters to subsets of $W$ and satisfying the following condition 
	for all $v,w\in W$ and for all propositional variables $A$: 
	$$w\stackrel{\Diamond}{\to}v \;\rightarrow \left(w\in\sigma(A) \,\leftrightarrow\, v\in\sigma(A)\right).$$
\end{enumerate}
\end{definition}

In the context of the logic $\ssl$ the left commutativity property is usually called \emph{cross property}. 
The last condition in the previous definition is often called \emph{persistence} of propositional variables.

\subsection{Three Satisfiability Notions for Bimodal Formulas}

The semantics is defined in the same way for all considered kinds of models via the satisfaction relation $\models\,\subseteq W\times\mathcal{L}$.

\begin{definition}[Semantics]
	Let $M=(W,\stackrel{\Diamond}{\to},\stackrel{L}{\to}\sigma)$ be either some $X\!\!\times\Sfive$-commutator model, for some $X\in\{\Kfour,\Sfour\}$, or a cross axiom model. The satisfaction relation $\models\,\subseteq W\times\mathcal{L}$ is defined as follows.
	Let $w\in W$, let $A$ be an arbitrary propositional variable, and let $\varphi$ be a bimodal formula. Then
	\[\begin{array}{lllcl}
	M,w\models A                 & :\iff & w\in \sigma(A),\\
	M,w\models \neg\varphi			& :\iff & M,w\not\models \varphi,\\
	M,w\models (\varphi\wedge\psi)	& :\iff & M,w\models \varphi \text{ and } M,w\models\psi,\\
	M,w\models \Box\varphi			& :\iff	& \text{for all $v\in M$ with } 
	w\stackrel{\Diamond}{\to}v \text{ we have } M,v\models\varphi,\\
	M,w\models K\varphi			    & :\iff & \text{for all $v\in M$ with }
	w\stackrel{L}{\to}v \text{ we have } M,v\models\varphi.
	\end{array}\]
\end{definition}

When the model $M$ is clear, then we often write $w \models \varphi$ instead of 
$M,w \models \varphi$. 

\begin{lemma}\label{lemma-conditions-equivalent}
	Let $X\in\{\Kfour,\Sfour\}$.
	For a bimodal formula $\varphi \in\mathcal{L}$
	the following two conditions are equivalent.
	\begin{enumerate}
		\item
		There exist an $X\times\Sfive$-product model $M$
		and some point $w$ in $M$ such that $M,w\models\varphi$.
		\item
		There exist an $X\times\Sfive$-commutator model $M$ and some $w$ in $M$ such that $M,w\models\varphi$.
	\end{enumerate}
\end{lemma}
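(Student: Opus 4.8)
The plan is to prove the two implications separately. The direction from the first condition to the second is immediate, and essentially all the content sits in the converse.

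For the implication from condition~1 to condition~2, the text already records that every $X\times\Sfive$-product model is an $X\times\Sfive$-commutator model, and the satisfaction relation $\models$ is given by literally the same clauses for both kinds of models. Hence the very same model $M$ and point $w$ that witness the first condition witness the second.

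For the converse, suppose we are given a commutator model $M=(W,\stackrel{\Diamond}{\to},\stackrel{L}{\to},\sigma)$ with $M,w\models\varphi$. I would produce a product model satisfying $\varphi$ by exhibiting a bounded morphism (p-morphism) $f$ from a suitable $X\times\Sfive$-product frame onto the point-generated subframe of $(W,\stackrel{\Diamond}{\to},\stackrel{L}{\to})$ rooted at $w$, and then defining the valuation on the product frame by $\sigma'(A):=f^{-1}(\sigma(A))$. Since a bounded morphism of models preserves the truth value of every bimodal formula, i.e.\ $M',w'\models\psi \iff M,f(w')\models\psi$ for all $\psi$, any preimage $w'$ of $w$ satisfies $\varphi$ in the resulting product model $M'$. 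Concretely, I would exploit that the $\Sfive$-factor may be taken to be a single cluster, i.e.\ an equivalence relation $R_L=W_2\times W_2$ on a set $W_2$ of large cardinality $\kappa\ge|W|$. For such a choice the product $F_1\times F_2$ consists of $\kappa$ synchronized copies of the $X$-frame $F_1=(W_1,R_\Diamond)$: the relation $\stackrel{\Diamond}{\to}$ stays inside a copy and mirrors $R_\Diamond$, while each $\stackrel{L}{\to}$-class is a full ``row'' $\{(a,b)\mid b\in W_2\}$. Unwinding the bounded-morphism conditions, it then suffices to build $F_1$ together with a family $(f_b)_{b\in W_2}$ of maps $f_b\colon W_1\to W$ such that each $f_b$ is a bounded morphism for the $\Diamond$-relation alone, such that for every $a\in W_1$ all the points $f_b(a)$ lie in one single $\stackrel{L}{\to}$-class, and such that they jointly exhaust that class. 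I would construct $F_1$ and this family by a step-by-step unraveling starting from $w$, repeatedly adjoining points to satisfy the still-unfulfilled $\Diamond$-back and $L$-back demands, closing $R_\Diamond$ under transitivity (and, in the $\Sfour$ case, reflexivity) and taking $\kappa$ large enough to serve all demands.

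The main obstacle is keeping the two back conditions compatible on the rigid grid of the product: whenever the $\Diamond$-back condition forces a new $\Diamond$-successor in one copy, the $L$-back condition demands matching points in the other copies, and conversely filling out an $L$-class in one row must not disturb the $\Diamond$-structure of the columns. This is exactly what the two commutativity properties deliver. Left commutativity lets an $\stackrel{L}{\to}$-related variant of a given $\Diamond$-successor be found, which feeds the $L$-back step across copies; right commutativity lets a $\Diamond$-successor be found from an $\stackrel{L}{\to}$-related point, which feeds the $\Diamond$-back step across copies. Thus at every stage the construction can be continued coherently. Checking that the resulting map is a genuine bounded morphism and invoking the standard p-morphism preservation lemma for the pulled-back valuation then completes the argument. (Alternatively, the statement may be viewed as an instance of the known Kripke-completeness of these product logics with respect to their commutator frames.)
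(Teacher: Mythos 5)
Your proof is sound in outline, but it takes a genuinely different route from the paper: the paper disposes of the lemma in two lines, observing that the direction $1\Rightarrow 2$ is immediate (every product model is a commutator model, and the semantics is defined by the same clauses) and simply citing Gabbay and Shehtman (Theorem 7.12 of \emph{Products of Modal Logics, Part 1}) for the hard direction $2\Rightarrow 1$ --- which is exactly the ``alternative'' you mention in your final parenthesis. What you do instead is sketch a self-contained proof of the cited result: you unwind the bounded-morphism conditions on a product $F_1\times F_2$ whose $\Sfive$-factor is a single cluster of size $\kappa$, reducing the task to a family $(f_b)_{b\in W_2}$ of $\Diamond$-bounded morphisms $f_b\colon W_1\to W$ whose values at each $a\in W_1$ fill out one $\stackrel{L}{\to}$-class, and you correctly identify the division of labour between the two commutativity properties: right commutativity propagates a $\Diamond$-back repair from one column to all others (each $f_b(a)$ needs its own $\Diamond$-successor in the new target cloud), while left commutativity supplies, for each point $v''$ of the target cloud, a father in the source cloud so that the new row can be made surjective onto its class. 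The thinnest point of your sketch is the cardinality bookkeeping: since each column contributes only one value per row, surjectivity of new rows forces you to maintain as an invariant that every element of every relevant cloud is hit by $\kappa$-many columns (not merely one, as the back condition at the previous row literally gives), together with a fair scheduling of the $\Diamond$-back demands arising at later stages and closure of $R_\Diamond$ under transitivity (harmless, since $\stackrel{\Diamond}{\to}$ is transitive in $M$) and, for $\Sfour$, reflexivity. These are standard step-by-step devices and the argument goes through, so the trade-off is clear: your route is self-contained but requires this careful combinatorial maintenance, whereas the paper's citation buys brevity and offloads exactly these details to the literature.
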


\begin{proof}
	The direction ``$1 \Rightarrow 2$'' is clear.
	The direction ``$2 \Rightarrow 1$'' was shown by Gabbay and Shehtman \cite[Theorem 7.12]{gabbay1998products}.
\end{proof}

\begin{definition}[$\kxs$-Satisfiable and $\sxs$-Satisfiable Formulas]
	Let $X\in\{\Kfour,\Sfour\}$.
	A bimodal formula $\varphi \in \mathcal{L}$ is \emph{$X\times\Sfive$-satisfiable} iff 
	one and then both of the two equivalent conditions in Lemma \ref{lemma-conditions-equivalent} are satisfied.
\end{definition}

Let $X\in\{\Kfour,\Sfour\}$. Actually, it is known that whenever a formula $\varphi$ is $X\times\Sfive$-satisfiable then there exists even an $X\times\Sfive$-commutator model of size doubly exponential in the length of $\varphi$ \cite[Theorem 5.27]{Kurucz2003}. This is not true for product models: there exists an $X\times\Sfive$-satisfiable formula $\varphi$ such that any $X\times\Sfive$-product model $(M, w)$ of $\varphi$ is infinite \cite[Theorem 5.32]{Kurucz2003}. This is the reason why in this article we shall work with commutator models.

\begin{definition}[$\ssl$-Satisfiable Formulas]
	A bimodal formula $\varphi \in \mathcal{L}$ is \emph{$\ssl$-satisfiable} iff 
	there exist a cross axiom model $M$ and some point $w$ in $M$ such that $M,w\models\varphi$.
\end{definition}

We already mentioned that the original definition of the subset space logic $\ssl$ was via so-called subset space models. A bimodal formula has a subset space model iff it has a cross axiom model~\cite{Dabrowski1992}. But with respect to these two kinds of models the situation is similar as above. On the one hand, for every $\ssl$-satisfiable bimodal formula there exists a finite cross axiom model, even a cross axiom model of size doubly exponential in the length of $\varphi$ \cite[Section 2.3]{Dabrowski1992}. But there are $\ssl$-satisfiable bimodal formulas that do not have a finite subset space model ~\cite[Example B]{Dabrowski1992}. This is the reason why in this article we shall work with cross axiom models.

\section{A Stronger Main Result and an Overview of the Proof}
\label{section:overview}

It is the goal of this article to show that the satisfiability problems of the three bimodal logics $\kxs$, $\sxs$, and $\ssl$ are in $\ESPACE$. Actually, we shall prove the following theorem.

\begin{theorem}
\label{theorem:upperbound}
\begin{enumerate}
\item
The satisfiability problem of the bimodal logic $\kxs$ can be decided in space $O(n\cdot 2^{3n})$.
\item
The satisfiability problems of the two bimodal logics $\sxs$ and $\ssl$ can be decided in space $O(n\cdot 2^{2n})$.
\end{enumerate}
\end{theorem}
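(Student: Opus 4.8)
The plan is to replace the doubly-exponential model search underlying the $\mathrm{N2EXPTIME}$ bound by a singly-exponential-space recursive tableau search in the style of Ladner's algorithm for $\Sfour$, using the ``tableau-clouds'' as the unit of recursion. The first ingredient is the notion of a \emph{type}: a propositionally consistent subset $t$ of $\subf(\varphi)$ closed under the usual Hintikka conditions for $\neg$ and $\wedge$, of which there are at most $2^n$. Since $\stackrel{L}{\to}$ is an equivalence relation, each formula $K\psi$ has a constant truth value across one $\stackrel{L}{\to}$-cluster; I therefore encode a whole cluster by a \emph{cloud}, i.e.\ a nonempty set $\mathcal{C}$ of types sharing a common $K$-part and satisfying the internal coherence conditions that $K\psi\in t$ forces $\psi\in t'$ for every $t'\in\mathcal{C}$, while each demand $\neg K\psi\in t$ is witnessed by some $t'\in\mathcal{C}$ with $\neg\psi\in t'$. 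A single cloud can be stored in space $O(n\cdot 2^n)$, namely as a list of at most $2^n$ types each of length $O(n)$.

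Next I would introduce a transition relation $\mathcal{C}\Rightarrow\mathcal{D}$ between clouds capturing when $\mathcal{D}$ can serve as the cluster of $\stackrel{\Diamond}{\to}$-successors of $\mathcal{C}$. The correctness of this cluster-level encoding rests on the commutativity laws: from left and right commutativity one checks that one $\stackrel{\Diamond}{\to}$-edge between two clusters already forces every point of the source to have a $\stackrel{\Diamond}{\to}$-successor in the target and vice versa, and that $\Rightarrow$ inherits transitivity (and, in the $\Sfour$ and $\ssl$ cases, reflexivity) from $\stackrel{\Diamond}{\to}$. I would then prove the characterization that $\varphi$ is $X$-satisfiable iff there is a finite, locally consistent system of clouds under $\Rightarrow$ in which every $\Diamond$-demand $\neg\Box\psi\in t$ occurring in a reachable cloud is fulfilled by some $\Rightarrow$-successor cloud containing a type with $\neg\psi$. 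Soundness---reconstructing a genuine $X\times\Sfive$-commutator model (respectively a cross axiom model) from such a system---is where the clauses for $\Box$ and $K$, right commutativity, and, for $\ssl$, persistence and the cross property must all be verified; completeness comes from the type/cloud quotient of a given model.

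The combinatorial heart is a bound on the length of the $\Rightarrow$-chains down which the recursion must descend, which is exactly the content of the maximum-chain-length analysis of Section~\ref{section:relations}. The basic monotonicity is that the set of $\Box$-formulas forced to hold throughout a cluster is non-decreasing along $\Rightarrow$, using only transitivity of $\stackrel{\Diamond}{\to}$. Combining this monotone, $O(n)$-valued potential with the fact that there are at most $2^n$ types---so that a sufficiently long chain must repeat cloud-data in a way that can be short-circuited---yields an explicit bound on the length of an irredundant chain. The $\Kfour$ versus $\Sfour$ distinction surfaces precisely here: reflexivity of $\stackrel{\Diamond}{\to}$ lets one contract the repetitions directly, giving chains of length $O(2^n)$ for $\sxs$ and $\ssl$, whereas in the irreflexive $\Kfour$ case breaking cycles costs an additional factor of $2^n$, giving $O(2^{2n})$.

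Finally, the algorithm recurses along $\Rightarrow$-chains depth first: it guesses the cloud containing a type with $\varphi$, checks internal consistency, and for each unfulfilled $\neg\Box\psi$-demand recurses into a candidate successor cloud, reusing the work space across the different demands so that at any moment only the clouds on the current chain are stored. Termination, and hence the recursion depth, is governed by the chain-length bound, so the total space is the product of the cloud size $O(n\cdot 2^n)$ with the depth bound: $O(2^n)\cdot O(n\cdot 2^n)=O(n\cdot 2^{2n})$ for $\sxs$ and $\ssl$, and $O(2^{2n})\cdot O(n\cdot 2^n)=O(n\cdot 2^{3n})$ for $\kxs$; any remaining nondeterministic guessing is eliminated by exhaustive enumeration within this budget (or by Savitch's theorem), keeping the whole procedure in $\ESPACE$. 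I expect the chain-length bound---pinning down the exact $2^n$ separation between the two cases and proving the matching general lemma on binary relations---to be the main obstacle, with the soundness direction (reconstructing a model that actually satisfies right commutativity, and for $\ssl$ persistence and the cross property, from an abstract cloud system) the second most delicate point.
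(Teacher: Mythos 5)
Your overall architecture---tableau-clouds as the unit of recursion, a Ladner-style depth-first search that first tries backward loops into the current chain, and a chain-length analysis of the cloud-successor relation---is exactly the paper's. But there is a genuine gap in the complexity analysis, precisely at the point you yourself flag as the main obstacle, and the mechanism you propose for closing it is wrong. You locate the $2^{2n}$-versus-$2^{3n}$ separation in the chain lengths, claiming that reflexivity of $\stackrel{\Diamond}{\to}$ yields chains of length $O(2^n)$ for $\sxs$ and $\ssl$ while the irreflexive $\Kfour$ case costs an extra factor $2^n$. In the paper the chain-length and recursion-depth bounds are of the same order for all three logics: Proposition~\ref{prop:mcl3} bounds the maximum chain length of the lifted relation by $2\cdot|S_\equiv|$ for \emph{any} transitive relation, so $\mathrm{mcl}(\leq_X)\leq 4\cdot|\mathcal{T}^X_\varphi|$ for $\kxs$ and $\sxs$ alike (and $2\cdot|\mathcal{T}^{\ssl}_\varphi|$ for $\ssl$, Corollary~\ref{corollary:mcl-s4s5-ssl}), and the recursion depth is $O(n\cdot|\mathcal{T}^X_\varphi|^2)$ uniformly in all three cases (Proposition~\ref{prop:recursiondepth}); reflexivity plays no role in the exponent gap. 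Moreover, your claimed depth $O(2^n)$ for $\sxs$/$\ssl$ is stronger than what any argument in the paper supports: a chain can stall inside a single $\equiv_X$-equivalence class for up to roughly $n\cdot|\mathcal{T}^X_\varphi|$ steps, each step forced by a distinct unsatisfied pair $(\Box\chi,F)$ with $F$ in the current cloud---this is exactly where the factor $n\cdot|\mathcal{T}|^2$ comes from, and nothing in your sketch short-circuits it.

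The ingredient that actually delivers part~2 of the theorem is entirely absent from your proposal: a sharper \emph{count} of tableau-sets. Because $\sxs$- and $\ssl$-tableau-sets satisfy the extra closure condition $\Box\chi\in F\Rightarrow\chi\in F$ (Definition~\ref{def: tableau-sets}.2(d)), the paper proves $|\mathcal{T}^X_\varphi|\leq 2^{2n/3}$ for $n\geq 3$ by a structural induction on $\varphi$ (Proposition~\ref{prop:number-of-tableau-sets}), and feeding this into the uniform space bound $O(n\cdot(n+|\mathcal{T}^X_\varphi|)^3)$ gives $O(n\cdot 2^{2n})$; with only the trivial bound $|\mathcal{T}^X_\varphi|\leq 2^n$ and the provable depth $O(n\cdot 2^{2n})$, your own accounting yields $O(n\cdot 2^{3n})$ for all three logics, which proves part~1 but not part~2. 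Two smaller points: storing each cloud on the recursion stack as an explicit list of types costs $O(n\cdot 2^n)$ bits per level, whereas the paper keeps one global table of all tableau-sets and represents a cloud as a bit-vector of length $|\mathcal{T}^X_\varphi|$, making the per-level cost $O(n+|\mathcal{T}^X_\varphi|)$---this matters for the final arithmetic; and for $\ssl$ only \emph{left} commutativity is available in cross axiom models, so your ``and vice versa'' direction (every source point has a successor in the target), which needs right commutativity, must be dropped there, as reflected in the one-sided condition defining $\leq_{\ssl}$.
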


We present decision algorithms for these problems that are based on certain kinds of tableaux.
Details about tableau methods for modal logics can be found in the following sources:
Fitting \cite{Fitting1983,MR3618504},
Gor{\'e} \cite{gore1999tableau},
Governatori \cite{DBLP:conf/aimlGovernatori08},
and Baader and Sattler \cite{MR1866611}.
The rest of the paper is organized as follows.
\begin{itemize}
	\item
	We start with some general observations about transitive relations, equivalence relations and the maximum chain length of a finite relation. Some of them will be used for formulating the algorithms, others will be used for upper estimates of the space used by the algorithms. In particular the observations about the maximum chain length might turn out to be useful in other contexts as well.
	\item 
	Then we define what we call \emph{partial tableaux}, for each of the three logics. 
	\item
	We then show that the existence of a partial tableau for a bimodal formula $\varphi$ is equivalent to its satisfiability in the respective class of models. 
	\item 
	We present recursive tableau algorithms that decide if there exists a partial tableau for a given bimodal formula $\varphi$ or not, and we prove the correctness of these algorithms. They are somewhat similar to the recursive algorithm of  Ladner~\cite{ladner1977computational} for the modal logic $\mathrm{S4}$. 
	\item 
	We show that, for $X\in \{\kxs, \sxs, \ssl\}$, given a bimodal formula $\varphi$ of length $n$ the space used by the algorithm for the logic $X$ is of the order $O(n\cdot |\mathcal{T}^X_\varphi|^3)$ where $\mathcal{T}^X_\varphi$ is the set of all so-called $X$-tableau-sets with respect to $\varphi$ (to be defined in Section~\ref{section:DefinitionTableaux}). Note that it is obvious that $|\mathcal{T}^X_\varphi| \leq 2^n$. Thus, we establish $O(n\cdot 2^{3n})$ as an upper bound for the space complexity of the satisfiability problems of all three logics. 
	\item
	Then we consider the cases $X\in\{\sxs,\ssl\}$. By an additional counting argument, we show that $|\mathcal{T}^X_\varphi| \leq 2^{2n/3}$, for all $n\geq 3$, where $\varphi$ is any bimodal formula and $n$ its length. Thus, the algorithms for $X\in\{\sxs,\ssl\}$ actually work in space $O(n\cdot 2^{2n})$. This can certainly be improved even further. A similar counting argument could be applied in the case $X=\kxs$ as well, but in order to do that one should slightly change the definition of tableau sets, and even then the gain is smaller. Therefore, this is not worked out here.
\end{itemize}
	
Due to the similarity of the three logics we can do much work in parallel for all three logics.

\section{Some Observations about Relations}
\label{section:relations}

\subsection{Transitive Relations and Equivalence Relations}

Let us consider some frame as in Subsection~\ref{subsection:models} consisting of a set and two binary relations on this set, one of them being at least transitive (and perhaps reflexive) and the other one being an equivalence relation such that  at least the left commutativity property holds. We wish to introduce some useful notions and to make some useful observations concerning this situation. We start with some preliminaries.
In the following let $W$ be a nonempty set, and let $\equiv$ be an equivalence relation on $W$.
As usual, for any $w\in W$, by 
\[ [w]_\equiv:=\{v\in W \mid w \equiv v\} \]
we denote the
$\equiv$-equivalence class of $w$, and, for any subset $A\subseteq W$, by 
\[ A_\equiv:=\{[a]_\equiv \mid a \in A\} \]
we denote
the set of $\equiv$-equivalence classes of elements of $A$.

\begin{definition}[Induced Relation]
	\label{def: induced relation}
	For any binary relation $R\subseteq W\times W$ we define the relation
	$R^\equiv\subseteq W_\equiv\times W_\equiv$
	\emph{induced on $W_\equiv$ by} $R$ by
	\[ C \,R^\equiv\, D \;:\Leftrightarrow\;
	(\exists w\in C)(\exists v\in D) \; w R v, \]
	for $C, D \in W_\equiv$.
\end{definition}

\begin{lemma}
	\label{lemma: induced relation: reflexive}
	If $R$ is reflexive then $R^\equiv$ is reflexive as well.
\end{lemma}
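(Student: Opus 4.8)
The plan is to prove reflexivity of $R^\equiv$ directly from its definition, using only that $R$ is reflexive and that every $\equiv$-equivalence class is nonempty. Recall that saying $R^\equiv$ is reflexive means precisely that $C\, R^\equiv\, C$ holds for every $C\in W_\equiv$. So I would fix an arbitrary $C\in W_\equiv$ and aim to verify the defining condition $(\exists w\in C)(\exists v\in C)\; wRv$ from Definition \ref{def: induced relation}.

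First I would observe that $C$ is nonempty. Indeed, every element of $W_\equiv$ is of the form $[a]_\equiv$ for some $a\in W$, and since $\equiv$ is reflexive we have $a\equiv a$, hence $a\in[a]_\equiv=C$. Thus I can pick some $w\in C$. Next, since $R$ is reflexive, $wRw$ holds. Taking $v:=w$, both $w\in C$ and $v\in C$ hold and $wRv$ is satisfied, so the witnesses $w$ and $v$ establish $C\,R^\equiv\,C$ by the definition of the induced relation. As $C$ was arbitrary, $R^\equiv$ is reflexive.

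There is no real obstacle here; the only point that requires a moment's care is that the existential witnesses in the definition of $R^\equiv$ must actually exist, which is why I emphasize the nonemptiness of equivalence classes. This in turn is exactly where the reflexivity of $\equiv$ (as part of its being an equivalence relation) enters, even though the hypothesis of the lemma explicitly mentions only the reflexivity of $R$.
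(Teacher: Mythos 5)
Your proof is correct and follows essentially the same route as the paper's: pick any $C\in W_\equiv$, note it is nonempty, and use $wRw$ with $v:=w$ as witnesses for $C\,R^\equiv\,C$. Your extra remark on why equivalence classes are nonempty is a fine, slightly more explicit elaboration of the same argument.
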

\begin{proof}
	Consider some $C\in W_\equiv$.
	Then $C$ is nonempty, that is, there is some $w\in C$.
	Then, as $R$ is reflexive, we have $wRw$.
	This implies $C\,R^\equiv\,C$.
	Hence, $R^\equiv$ is reflexive.
\end{proof}

\begin{lemma}
	\label{lemma: induced relation: trans}
	If $R$ is transitive and the relations $R$ and $\equiv$ 
	have the left commutativity property
	then $R^\equiv$ is transitive as well.
\end{lemma}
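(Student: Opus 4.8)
The plan is to check transitivity of $R^\equiv$ directly against Definition~\ref{def: induced relation}. So I would fix three classes $C,D,E\in W_\equiv$ with $C\,R^\equiv\,D$ and $D\,R^\equiv\,E$, and aim to exhibit witnesses certifying $C\,R^\equiv\,E$. Unfolding the definition, the first assumption supplies points $w\in C$ and $v\in D$ with $wRv$, while the second supplies points $u\in D$ and $z\in E$ with $uRz$.

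The key difficulty is that the two witnesses lying in $D$, namely $v$ and $u$, need not coincide, so there is no common point through which to chain $wRv$ and $uRz$ by transitivity. What saves the argument is that $v$ and $u$ inhabit the same equivalence class $D$, hence $v\equiv u$. This is precisely the configuration that left commutativity resolves: I would apply the left commutativity property of $R$ and $\equiv$ to the pair $wRv$ and $v\equiv u$, obtaining a replacement point $w'$ with $w\equiv w'$ and, crucially, $w'Ru$.

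With this realignment the two $R$-steps share the endpoint $u$, so transitivity of $R$ yields $w'Rz$ from $w'Ru$ and $uRz$. Finally I would confirm that the endpoints sit in the correct classes: from $w\in C$ and $w\equiv w'$, together with the fact that $C=[w]_\equiv$ is an equivalence class, we get $w'\in C$, while $z\in E$ holds by construction. Hence $w'\in C$, $z\in E$, and $w'Rz$, which by Definition~\ref{def: induced relation} means exactly $C\,R^\equiv\,E$. I expect the only genuinely nontrivial step to be the single use of left commutativity to pass from $v$ to the point $u$ that the second hypothesis hands us; everything else is bookkeeping with equivalence classes and one application of transitivity of $R$.
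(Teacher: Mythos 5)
Your proof is correct and follows essentially the same route as the paper: you use left commutativity to replace $w$ by a point $w'\in C$ whose $R$-successor is the second witness in $D$ (your $u$, the paper's $v'$), then chain with transitivity of $R$; only the variable names differ.
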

\begin{proof}
	Consider $C,D,E\in W_\equiv$ with 
	$C\,R^\equiv\,D$ and $D\,R^\equiv\,E$.
	We wish to show $C\,R^\equiv\,E$.
	There exist $w\in C$, $v, v'\in D$ and $u\in E$ with $wRv$ and $v'Ru$.
	Due to the left commutativity property there exists some $w'\in C$ with $w'Rv'$.
	As $R$ is transitive, we obtain $w'Ru$. Hence $C\,R^\equiv\,E$.
\end{proof}

\begin{corollary} \label{cor: induced Diamond-relation}
	Let $M=(W,\stackrel{\Diamond}{\to}, \stackrel{L}{\to})$ be a triple
	consisting of a set $W$, a transitive relation $\stackrel{\Diamond}{\to}$ on $W$ and an equivalence
	relation $\stackrel{L}{\to}$ on $W$ such that $\stackrel{\Diamond}{\to}$ and $\stackrel{L}{\to}$ have the left commutativity property.
	\begin{enumerate}
	\item
	Then $\stackrel{\Diamond}{\to}^{\stackrel{L}{\to}}$ is a transitive relation on $W_{\stackrel{L}{\to}}$.
	\item
	If the relation $\stackrel{\Diamond}{\to}$ is even a preorder then $\stackrel{\Diamond}{\to}^{\stackrel{L}{\to}}$ is a preorder as well.
	\end{enumerate}
\end{corollary}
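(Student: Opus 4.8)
The plan is to read off both assertions directly from the two immediately preceding lemmas, instantiated with $R:=\stackrel{\Diamond}{\to}$ and $\equiv\ :=\ \stackrel{L}{\to}$; no fresh argument is needed beyond checking that the hypotheses of those lemmas are exactly what the corollary supplies. Concretely, for part~1 I would observe that by assumption $\stackrel{\Diamond}{\to}$ is transitive and that $\stackrel{\Diamond}{\to}$ and $\stackrel{L}{\to}$ have the left commutativity property, so Lemma~\ref{lemma: induced relation: trans} applies verbatim and yields that the induced relation $\stackrel{\Diamond}{\to}^{\stackrel{L}{\to}}$ on $W_{\stackrel{L}{\to}}$ is transitive.

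For part~2 I would first recall that, by definition, a preorder is a relation that is both reflexive and transitive. Assuming now that $\stackrel{\Diamond}{\to}$ is a preorder, it is in particular reflexive, so Lemma~\ref{lemma: induced relation: reflexive} gives that $\stackrel{\Diamond}{\to}^{\stackrel{L}{\to}}$ is reflexive; combining this with the transitivity already established in part~1, I would conclude that $\stackrel{\Diamond}{\to}^{\stackrel{L}{\to}}$ is a preorder. Note that the transitivity needed here still rests on left commutativity, which remains available since every preorder is in particular transitive, so the hypotheses of Lemma~\ref{lemma: induced relation: trans} continue to hold.

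There is essentially no obstacle to overcome: all the real content — in particular the use of the left commutativity property to reconstruct a witness $w'$ in the source class when chaining two induced $\stackrel{\Diamond}{\to}$-steps — has already been absorbed into Lemma~\ref{lemma: induced relation: trans}, and reflexivity is even more immediate. The corollary is therefore purely a matter of packaging: it specializes the two general lemmas to the relation pair $(\stackrel{\Diamond}{\to},\stackrel{L}{\to})$ occurring in the frames of Subsection~\ref{subsection:models} and records the two cases (transitive relation, respectively preorder) that will be used later. Accordingly I would keep the write-up to a couple of sentences, citing the two lemmas and, for part~2, spelling out only the trivial decomposition of ``preorder'' into reflexivity plus transitivity.
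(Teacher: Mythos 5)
Your proposal is correct and matches the paper's proof exactly: the paper also derives the corollary directly from Lemmas~\ref{lemma: induced relation: reflexive} and~\ref{lemma: induced relation: trans}, with part~1 following from transitivity plus left commutativity and part~2 adding reflexivity. Your spelled-out verification that the lemmas' hypotheses are supplied by the corollary's assumptions is just a more explicit rendering of the same one-line argument.
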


\begin{proof}
	This follows from Lemmas~\ref{lemma: induced relation: reflexive} and~\ref{lemma: induced relation: trans}.
\end{proof}

Often, in a model as described above, we will call the $\stackrel{L}{\to}$-equivalence class of a point $w$, denoted $[w]_{\stackrel{L}{\to}}$ or shorter $[w]_L$, the \emph{cloud} of $w$. 

Finally, a word about left commutativity and right commutativity as introduced in Subsection~\ref{subsection:models}.
It gives a good intuition to think of commutativity as follows. Let $C,D$ be two clouds in a model $M$ such that $C\stackrel{\Diamond}{\to}^{\stackrel{L}{\to}}D$. Then 
\begin{enumerate}
	\item 
	$M$ has the left commutativity property iff for all $v\in D$ there is some 
	$w\in C$ with $w\stackrel{\Diamond}{\to}v$ (all points in $D$ have a father in $C$).
	\item 
	$M$ has the right commutativity property iff for all $w\in C$ there is some 
	$v\in D$ with $w\stackrel{\Diamond}{\to}v$ (all points in $C$ have a son in $D$).
\end{enumerate}

\subsection{Observations about the Maximum Chain Length}
\label{subsection:maximum-chain-length}

In this subsection we define the `maximum chain length' of a transitive relation on a finite set $S$ and prove several facts about it and in particular about the induced relation on the power set $\mathcal{P}(S)$. These observations will be used in Section~\ref{section:upper-bound-alg} when we give upper bounds for the space needed by the algorithms.

What is the maximum chain length of a transitive relation on a nonempty finite set?
Let us define this.
For any relation $\leq$ on a set $S$ let the relation $<$ on $S$ be defined by
\[ s < t :\iff (s \leq t \text{ and not } t \leq s) , \]
for any $s,t \in S$,

\begin{lemma}
\label{lemma:transitive}
Let $\leq$ be a relation on a set $S$.
\begin{enumerate}
\item
For $s,t\in S$, if $s<t$ then $s \neq t$.
\item
If $\leq$ is a transitive relation on a set $S$ then 
the relation $<$ on $S$ is transitive as well.
\end{enumerate}
\end{lemma}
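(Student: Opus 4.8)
The plan is to prove the two parts separately, in each case simply unwinding the definition of the strict relation $<$. Throughout, recall that $s < t$ means by definition that $s \leq t$ holds while $t \leq s$ fails.

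For part~1 I would argue by contradiction. Suppose $s < t$ but $s = t$. Substituting $t$ by $s$ in the definition of $<$ yields simultaneously $s \leq s$ and the negation of $s \leq s$, which is absurd. Hence $s \neq t$. Note that this part uses nothing about $\leq$ beyond the definition of $<$; in particular no reflexivity or transitivity is needed.

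For part~2, assume $\leq$ is transitive and take $s, t, u \in S$ with $s < t$ and $t < u$; the goal is $s < u$, i.e.\ that $s \leq u$ holds while $u \leq s$ fails. First I would establish the positive half: from $s \leq t$ and $t \leq u$, transitivity of $\leq$ immediately gives $s \leq u$. The only slightly less mechanical step---and the one I expect to be the crux---is ruling out $u \leq s$. I would again argue by contradiction: if $u \leq s$ held, then combining it with $s \leq t$ via transitivity would give $u \leq t$, contradicting the hypothesis $t < u$, which asserts precisely that $u \leq t$ fails. Therefore $u \leq s$ cannot hold, and together with $s \leq u$ this yields $s < u$.

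Since both parts are short, the whole argument amounts to carefully tracking which of the two clauses (``$s \leq t$ holds'' versus ``$t \leq s$ fails'') each hypothesis supplies, and then feeding the ``holds'' clauses into transitivity of $\leq$ to obtain, in turn, the ``holds'' clause and (by contradiction) the ``fails'' clause of the conclusion. No property of $\leq$ other than transitivity enters, and part~1 needs not even that.
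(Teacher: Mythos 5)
Your proof is correct and follows essentially the same route as the paper: the positive half $s\leq u$ by transitivity, and the negative half by assuming $u\leq s$ and deriving a contradiction via one more application of transitivity. The only (immaterial) difference is that you combine $u\leq s$ with $s\leq t$ to contradict $t<u$, whereas the paper combines the assumed backward inequality with the other chain link to contradict the first hypothesis; both are symmetric instances of the same argument.
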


\begin{proof}
Let us consider some elements $s,t\in S$ with $s<t$.
Then $s\leq t$. If $s=t$ then we would have $t\leq s$ as well, contradicting $s<t$.

Let us consider some elements  $r,s,t\in S$ with $r<s$ and $s<t$. 
Then $r \leq s$ and $s \leq t$. The transitivity of $\leq$ implies $r\leq t$.
We claim that $t \leq r$ is not true. For the sake of a contradiction, let us assume
$t \leq r$. Then the transitivity of $\leq$ implies $s \leq r$ in contradiction to $r < s$.
\end{proof}

\begin{definition}
For any transitive relation $\leq$ on a finite, nonempty set $S$ we define its
{\em maximum chain length $\mathrm{mcl}(\leq)$} to be the largest natural number $l$
such that there exists a sequence $s_0,\ldots,s_l \in S$ with
$s_i < s_{i+1}$, for all $i<l$, that is, such that
\[ s_0 < s_1 < \ldots < s_l . \]
We call such a sequence a {\em $<$-chain}.
\end{definition}

\begin{corollary}
\label{corollary:transitive}
Let $S$ be a finite nonempty set.
If $\leq$ is a transitive relation on $S$ then $\mathrm{mcl}(\leq)$
is well-defined and satisfies $\mathrm{mcl}(\leq) \leq |S|-1$.
\end{corollary}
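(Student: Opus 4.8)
The plan is to establish two things: that the set of lengths of $<$-chains is nonempty and bounded above, so that a largest such length exists (this gives well-definedness), and that this largest length does not exceed $|S|-1$. Both parts will follow almost immediately from the two assertions already proved in Lemma~\ref{lemma:transitive}, so the argument is short.

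First I would check that the set of admissible lengths is nonempty. Since $S$ is nonempty, I can pick any $s_0 \in S$; the one-term sequence $s_0$ vacuously satisfies $s_i < s_{i+1}$ for all $i < 0$, so $l = 0$ is an achievable chain length. Hence there is at least one $<$-chain, and the notion of a ``largest'' length is not asking for the maximum of the empty set.

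The key step is to show that the elements of any $<$-chain are pairwise distinct. Consider a chain $s_0 < s_1 < \cdots < s_l$ and any two indices $i < j$. Because $\leq$ is transitive, part~2 of Lemma~\ref{lemma:transitive} tells us that $<$ is transitive as well; applying this along the subchain $s_i < s_{i+1} < \cdots < s_j$ yields $s_i < s_j$. Then part~1 of Lemma~\ref{lemma:transitive} gives $s_i \neq s_j$. Thus $s_0, \ldots, s_l$ are $l+1$ pairwise distinct elements of $S$.

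Finally, since these $l+1$ distinct elements all lie in $S$, we obtain $l + 1 \leq |S|$, that is $l \leq |S|-1$. This bounds the set of admissible chain lengths from above by $|S|-1$. Being a nonempty subset of $\IN$ that is bounded above, it has a maximum, so $\mathrm{mcl}(\leq)$ is well-defined, and that maximum satisfies $\mathrm{mcl}(\leq) \leq |S|-1$. I do not expect any real obstacle: the only point requiring care is the distinctness of the chain elements, and that is handled cleanly by the transitivity of $<$ supplied by Lemma~\ref{lemma:transitive}.
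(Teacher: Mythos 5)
Your proof is correct and follows exactly the route the paper intends: the paper's own proof is just ``This follows from the previous lemma,'' and you have spelled out the standard argument it leaves implicit, namely that transitivity of $<$ (Lemma~\ref{lemma:transitive}.2) together with irreflexivity-style distinctness (Lemma~\ref{lemma:transitive}.1) makes the elements of any $<$-chain pairwise distinct, so $l+1\leq |S|$. Your added remark on nonemptiness of the set of chain lengths (the trivial chain $l=0$, using that $S$ is nonempty) is a welcome bit of care that the paper glosses over.
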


\begin{proof}
This follows from the previous lemma.
\end{proof}

The maximum chain length of an order (a reflexive, transitive and antisymmetric relation) on a finite nonempty set is often called its {\em length} or its {\em height}; see, e.g., \cite[Page~4]{Graetzer2011} or \cite[Section 2.1]{Schroeder2016}.
In other contexts the maximum chain length plus one of a preorder on a finite nonempty set $S$
is called the {\em rank} of the finite preordered set $(S,\leq)$
(if $S$ is empty then the rank is $0$); see, e.g., \cite{Kechris1995}.
We start with two simple observations.

\begin{lemma}
\label{lemma:mcl1}
If $\leq$ is a transitive relation on a finite, nonempty set $S$ then its inverse,
the relation $(\leq)^{-1}$ on $S$ defined by
\[ s (\leq)^{-1} t :\iff t \leq s , \]
for $s,t\in S$, is a transitive relation on $S$ as well, and 
$\mathrm{mcl}((\leq)^{-1}) = \mathrm{mcl}(\leq)$.
\end{lemma}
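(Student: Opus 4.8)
The plan is to verify the two assertions separately, beginning with transitivity of the inverse relation and then establishing the equality of the maximum chain lengths via a length-preserving correspondence between chains.

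For transitivity I would take $s,t,u\in S$ with $s\,(\leq)^{-1}\,t$ and $t\,(\leq)^{-1}\,u$. By the definition of the inverse these mean $t\leq s$ and $u\leq t$, and transitivity of $\leq$ then yields $u\leq s$, that is, $s\,(\leq)^{-1}\,u$. This is immediate, and it also ensures, via Corollary~\ref{corollary:transitive} applied to the finite nonempty set $S$, that $\mathrm{mcl}((\leq)^{-1})$ is well-defined in the first place.

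The key step for the chain length is to identify the strict relation associated with $(\leq)^{-1}$. Writing $<$ for the strict part of $\leq$ as in the definition preceding the statement, and $<'$ for the strict part of $(\leq)^{-1}$, I would show that $s<'t$ holds if and only if $t<s$. Indeed, $s<'t$ means $s\,(\leq)^{-1}\,t$ and not $t\,(\leq)^{-1}\,s$, which unfolds to $t\leq s$ and not $s\leq t$, i.e.\ exactly $t<s$. Thus $<'$ is the inverse of $<$.

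With this identity in hand, the equality $\mathrm{mcl}((\leq)^{-1})=\mathrm{mcl}(\leq)$ follows by reversal: a sequence $s_0<'s_1<'\cdots<'s_l$ is a $<'$-chain if and only if $s_l<s_{l-1}<\cdots<s_0$ is a $<$-chain of the same length $l$. Reversing a sequence is therefore a length-preserving bijection between $<'$-chains and $<$-chains, so the largest achievable length is the same on both sides, and the two maximum chain lengths coincide. I do not expect any real obstacle here; the only point requiring care is to keep the two strict relations straight and to record that the strict part of the inverse is the inverse of the strict part, rather than something else. Once that identity is established, the remainder is simply the observation that reversing a chain preserves its length.
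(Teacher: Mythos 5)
Your proof is correct; note that the paper itself gives no proof of this lemma, stating only ``we omit the straightforward proof,'' and your argument is exactly the routine verification the authors had in mind: unfold the definition for transitivity, observe that the strict part of $(\leq)^{-1}$ is the inverse of the strict part of $\leq$, and conclude by reversing chains. Your explicit care with the strict-part identity is the one point where a sloppy write-up could go wrong, and you handle it correctly.
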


We omit the straightforward proof.
Often, instead of $(\leq)^{-1}$ we write $\geq$.

\begin{lemma}
\label{lemma:mcl2}
If $\leq_1$ and $\leq_2$ are transitive relations on a finite, nonempty set $S$, then their intersection
$\leq_3:= \leq_1 \cap \leq_2$, that is, the relation $\leq_3$ on $S$ given by
\[ s \leq_3 t :\iff (s \leq_1 t \text{ and } s \leq_2 t) , \]
for $s,t \in S$, is a transitive relation on $S$ as well, and
\[ \mathrm{mcl}(\leq_3) \leq \mathrm{mcl}(\leq_1) + \mathrm{mcl}(\leq_2) . \]
\end{lemma}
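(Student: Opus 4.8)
The plan is to dispose of transitivity first and then attack the inequality by a double-counting argument over a maximal chain. Transitivity of $\leq_3$ is routine: if $s\leq_3 t$ and $t\leq_3 u$ then $s\leq_1 t\leq_1 u$ and $s\leq_2 t\leq_2 u$, so transitivity of $\leq_1$ and $\leq_2$ gives $s\leq_1 u$ and $s\leq_2 u$, i.e.\ $s\leq_3 u$. Since $\leq_3$ is then a transitive relation on the finite nonempty set $S$, $\mathrm{mcl}(\leq_3)$ is well-defined by Corollary~\ref{corollary:transitive}. I would fix a $<_3$-chain of maximal length, say $s_0<_3 s_1<_3\cdots<_3 s_l$ with $l=\mathrm{mcl}(\leq_3)$, and argue that $l\le\mathrm{mcl}(\leq_1)+\mathrm{mcl}(\leq_2)$.

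Next I would analyse a single step $s_i<_3 s_{i+1}$. By definition $s_i\leq_3 s_{i+1}$ and not $s_{i+1}\leq_3 s_i$. The first part gives both $s_i\leq_1 s_{i+1}$ and $s_i\leq_2 s_{i+1}$, so the whole chain is weakly increasing for each of $\leq_1$ and $\leq_2$; by transitivity this yields $s_j\leq_1 s_i$ and $s_j\leq_2 s_i$ whenever $j\le i$. The second part, namely $\neg(s_{i+1}\leq_1 s_i\wedge s_{i+1}\leq_2 s_i)$, means that for each $i$ at least one of $s_i<_1 s_{i+1}$ or $s_i<_2 s_{i+1}$ holds; I would \emph{colour} step $i$ with $1$ in the former case and with $2$ in the latter (choosing either colour if both happen to hold).

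The core of the argument is a pair of monotone counters. For $k\in\{1,2\}$ let $a^k_i$ denote the length of the longest $<_k$-chain contained in $\{s_0,\dots,s_i\}$. Each $a^k_i$ is non-decreasing in $i$, since the set of available elements grows. The key claim is that if step $i$ has colour $k$ then $a^k_{i+1}\ge a^k_i+1$: the optimal $<_k$-chain inside $\{s_0,\dots,s_i\}$ ends at some $s_j$ with $j\le i$, and since $s_j\leq_k s_i$ (weak increase) while $s_i<_k s_{i+1}$ (colour $k$), appending $s_{i+1}$ must lengthen the chain. To justify this I would verify $s_j<_k s_{i+1}$, i.e.\ that $s_j\leq_k s_{i+1}$ and $\neg(s_{i+1}\leq_k s_j)$: the former is transitivity of $\leq_k$, and the latter follows because $s_{i+1}\leq_k s_j\leq_k s_i$ would give $s_{i+1}\leq_k s_i$, contradicting $s_i<_k s_{i+1}$. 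I expect this extension step---the observation that a weak inequality followed by a strict one is again strict, which is exactly the content of Lemma~\ref{lemma:transitive}---to be the only delicate point.

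Finally I would sum over the steps. At every step $i$ one colour is assigned, so by the claim (for the assigned colour) and monotonicity (for the other colour) we get $a^1_{i+1}+a^2_{i+1}\ge a^1_i+a^2_i+1$; telescoping from $i=0$ to $l-1$ and using $a^1_0=a^2_0=0$ yields $a^1_l+a^2_l\ge l$. Since $a^1_l$ and $a^2_l$ are the lengths of particular $<_1$- and $<_2$-chains in $S$, we have $a^1_l\le\mathrm{mcl}(\leq_1)$ and $a^2_l\le\mathrm{mcl}(\leq_2)$ by definition of the maximum chain length, and therefore $l\le\mathrm{mcl}(\leq_1)+\mathrm{mcl}(\leq_2)$, which is the desired bound.
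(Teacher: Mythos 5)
Your proof is correct and is essentially the paper's argument: the paper likewise observes that each step of a $<_3$-chain is strict in at least one coordinate (its sets $I_1,I_2$ with $I_1\cup I_2=\{0,\ldots,l-1\}$ are exactly your colouring, allowing both colours) and converts the colour-$k$ steps into a $<_k$-chain via the same weak-then-strict extension observation; your monotone counters $a^k_i$ and the telescoping sum are just alternative bookkeeping for extracting those two subchains. One cosmetic remark: your assertion $s_j\leq_k s_i$ ``whenever $j\le i$'' requires $j<i$, since $\leq_k$ is not assumed reflexive, but the edge case $j=i$ of your key claim holds outright because step $i$ having colour $k$ gives $s_i<_k s_{i+1}$ directly.
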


\begin{proof}
It is clear that $\leq_3$ is a transitive relation on $S$.
For the other assertion, we observe that
\[ s <_3 t \iff (( s <_1 t \text{ and } s \leq_2 t ) \text{ or } ( s \leq_1 t \text{ and } s <_2 t ) , \]
for all $s,t \in S$. Hence, if $s_0,\ldots,s_l$ is a $<_3$-chain then with
\[ I_j := \{k \in \{0,\ldots,l-1\} ~:~ s_k <_j s_{k+1}\}, \]
for $j=1,2$, we have $\{0,\ldots,l-1\} = I_1 \cup I_2$.
The elements $s_k$ for $k\in I_1\cup\{\max(I_1)+1\}$ form a $<_1$-chain
and the elements $s_k$ for $k\in I_2\cup\{\max(I_2)+1\}$ form a $<_2$-chain.
We obtain
$\mathrm{mcl}(\leq_3) \leq \mathrm{mcl}(\leq_1) + \mathrm{mcl}(\leq_2)$.
\end{proof}

If $\leq$ is a transitive relation on a set $S$ then by
\[ s\equiv t :\iff (s=t \text{ or } (s \leq t \text{ and } t \leq s)), \]
for $s,t \in S$, an equivalence relation $\equiv$ on $S$ is defined.
If $\leq$ is reflexive a well, that is, if $\leq$ is a preorder then, for all $s,t\in S$,
\[ s\equiv t \iff (s \leq t \text{ and } t \leq s). \]
Let us assume that $\leq$ is transitive.

\begin{lemma}
\label{lemma:transitive-classes-of-at-least-two-elements}
Let $\leq$ be a transitive relation on a nonempty set $S$.
If an equivalence class $q \in S_\equiv$ contains at least two different elements then
$s \leq t$ is true for all $s,t \in q$.
\end{lemma}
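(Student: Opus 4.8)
The plan is to take arbitrary $s,t\in q$ and to verify $s\le t$ directly, distinguishing the two cases according to whether $s$ and $t$ coincide or not. The only delicate point is that $\le$ is assumed merely transitive, not reflexive, so the case $s=t$ cannot be disposed of by reflexivity; this is exactly where the hypothesis that $q$ contains at least two different elements will enter.

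First I would handle the case $s\neq t$. Since $s$ and $t$ lie in the same $\equiv$-equivalence class $q$, we have $s\equiv t$. By the definition of $\equiv$, namely $s\equiv t \iff (s=t \text{ or } (s\le t \text{ and } t\le s))$, and because $s\neq t$ rules out the first disjunct, we are forced into the second disjunct, obtaining $s\le t$ and $t\le s$. In particular $s\le t$, as required.

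Next I would treat the case $s=t$, where the claim reduces to $s\le s$. Here I would invoke the hypothesis: since $q$ has at least two different elements, there exists some $c\in q$ with $c\neq s$. Then $s\equiv c$ together with $s\neq c$ yields, again by the definition of $\equiv$, both $s\le c$ and $c\le s$. Applying the transitivity of $\le$ to the pair $s\le c$ and $c\le s$ gives $s\le s$, which settles this case and completes the proof.

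I expect the main (indeed essentially the only) obstacle to be recognising that reflexivity is not available, so that the diagonal case $s=t$ must be \emph{derived} rather than assumed. The ``at least two elements'' hypothesis is precisely the device that rescues this case, and one sees that it cannot be dropped: a singleton class $q=\{s\}$ with $s\not\le s$ would satisfy $s\equiv s$ yet violate the conclusion.
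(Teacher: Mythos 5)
Your proof is correct and follows essentially the same argument as the paper: split into the cases $s \neq t$ (where $s \equiv t$ with $s \neq t$ directly yields $s \leq t$) and $s = t$ (where a second element $c \in q$ with $c \neq s$ gives $s \leq c$ and $c \leq s$, and transitivity yields $s \leq s$). Your closing remark that the two-element hypothesis is indispensable is a nice addition but does not change the substance.
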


\begin{proof}
Let $q\in S_\equiv$ be an equivalence class containing at least two different elements.
Let us consider some $s,t\in q$. If $s \neq t$ then $s,t\in q$ implies $s \leq t$.
If $s=t$ then, due to the fact that there is at least one element $r \in q$ with $r \neq s$,
we obtain $s \leq  r$ and $r \leq s$ and, by transitivity of $\leq$, $s\leq s$ as well.
\end{proof}

Note that in particular $s\leq s$ if $s$ is an element of an equivalence class containing at least two elements. So, the restriction of a transitive relation to the union of all equivalence classes containing at least two elements is reflexive. This lemma will turn out to be important when we estimate the space used by the algorithm that checks whether a bimodal formula is $\kxs$-satisfiable.

The following proposition is the key for our upper estimates for the maximum chain length of a certain relation on the set
$\mathcal{P}(\mathcal{T}^X_\varphi)$ where $\mathcal{T}^X_\varphi$ is the set of tableau-sets with respect to a bimodal formula $\varphi$ (this will be introduced in Section~\ref{section:DefinitionTableaux}), for any $X\in \{\kxs, \sxs, \ssl\}$.

\begin{proposition}
\label{prop:mcl3}
Let $\leq$ be a transitive relation on a finite, nonempty set $S$.
Then the relation $\leq'$ on $\mathcal{P}(S)$ defined by
\[ A \leq' B :\iff (\forall b \in B)\, (\exists a \in A) \ a \leq b , \]
for $A,B\subseteq S$, 
is transitive as well, and $\mathrm{mcl}(\leq') \leq 2  \cdot |S_\equiv| \leq 2  \cdot |S|$.
\end{proposition}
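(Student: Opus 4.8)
The plan is to recast the relation $\leq'$ in terms of \emph{up-sets}. For $A\subseteq S$ put $U(A):=\{s\in S : (\exists a\in A)\ a\le s\}$; then by definition $A\leq' B$ holds exactly when $B\subseteq U(A)$. With this reformulation transitivity of $\leq'$ is immediate: transitivity of $\le$ gives $U(U(A))\subseteq U(A)$, so if $B\subseteq U(A)$ and $C\subseteq U(B)$ then $C\subseteq U(B)\subseteq U(U(A))\subseteq U(A)$, i.e.\ $A\leq' C$.

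Before bounding the chain length I would record two routine facts. First, $U(A)$ is a union of $\equiv$-classes: if $s\in U(A)$ and $s\equiv s'$ with $s\ne s'$, then $s,s'$ lie in a multi-element class, so $s\le s'$ by Lemma~\ref{lemma:transitive-classes-of-at-least-two-elements}, and transitivity of $\le$ gives $s'\in U(A)$. Hence $U(A)$ is determined by the set $U(A)_\equiv\subseteq S_\equiv$ of classes it contains, and $A\subseteq U(A)\iff A_\equiv\subseteq U(A)_\equiv$. Second, along any $<'$-chain $A_0<'A_1<'\cdots<'A_l$ the up-sets are nested downward: $A_i\leq' A_{i+1}$ gives $A_{i+1}\subseteq U(A_i)$, hence $U(A_{i+1})\subseteq U(A_i)$.

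The core of the argument is a potential function. Define $f(A):=|U(A)_\equiv|\in\{0,\dots,|S_\equiv|\}$ and let $h(A)\in\{0,1\}$ with $h(A)=1$ iff $A\not\subseteq U(A)$, and set $\Phi(A):=2f(A)+h(A)$. I claim $\Phi$ strictly decreases at every step. Call a step $A_i<'A_{i+1}$ \emph{flat} if $U(A_i)=U(A_{i+1})$ and \emph{strict} otherwise. On a strict step $f$ drops by at least $1$ while $h$ changes by at most $1$, so $\Phi$ drops. On a flat step $f$ is unchanged, but the condition ``not $A_{i+1}\leq' A_i$'' means $A_i\not\subseteq U(A_{i+1})=U(A_i)$, giving $h(A_i)=1$, whereas $A_{i+1}\subseteq U(A_i)=U(A_{i+1})$ gives $h(A_{i+1})=0$; so $\Phi$ drops by exactly $1$. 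Finally $\Phi\ge 0$ always, and $\Phi\le 2|S_\equiv|$: if $f(A)<|S_\equiv|$ this is clear, while $f(A)=|S_\equiv|$ forces $U(A)=S$, whence $h(A)=0$ and $\Phi(A)=2|S_\equiv|$. As $\Phi$ is integer-valued and strictly decreasing, $l\le \Phi(A_0)-\Phi(A_l)\le 2|S_\equiv|$, and $|S_\equiv|\le|S|$ is trivial.

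The main obstacle — and the reason for the factor $2$ rather than the naive $|S_\equiv|$ — is that $\leq'$ can increase strictly without the up-set shrinking, because $\le$ need not be reflexive. A ``source'' element $a$ with no $a'\in A$ satisfying $a'\le a$ (in particular $a\not\le a$, a non-reflexive singleton class) lies outside $U(A)$, making $A\not\subseteq U(A)$ possible and permitting a flat step. The auxiliary bit $h$ is designed precisely to charge such flat steps, and the quantitative content is that a flat step is always followed by an element with $h=0$, so two flat steps cannot occur consecutively — which the potential encodes cleanly. In writing this up I would verify carefully the two facts about $U$ and the use of Lemma~\ref{lemma:transitive-classes-of-at-least-two-elements}, since the whole delicacy lives in the non-reflexive singleton classes.
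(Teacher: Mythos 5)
Your proof is correct, and it takes a genuinely different route from the paper's. The paper tracks the set $A_{\min,\equiv}$ of $\equiv$-classes of \emph{minimal} elements of each set in the chain: it first shows that $A_{\min,\equiv}=B_{\min,\equiv}$ forces $A\leq' B$ and $B\leq' A$ (so this set must change at every step of a $<'$-chain), and then that once a class has left $A^{(\cdot)}_{\min,\equiv}$ it can never re-enter (an argument requiring two consecutive steps $A\leq' B\leq' C$); the bound $2\cdot|S_\equiv|$ arises by charging each class at most one entry and one exit. You instead work with up-sets, and the reformulation $A\leq' B\iff B\subseteq U(A)$ buys you two things the paper's invariant does not have: transitivity becomes a one-line consequence of $U(U(A))\subseteq U(A)$ (the paper merely asserts transitivity is straightforward), and you obtain a genuinely \emph{monotone} quantity, since $U(A_i)$ shrinks weakly along the chain, whereas $A_{\min,\equiv}$ can both gain and lose classes. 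Your factor $2$ then enters differently: not via ``enter once, leave once'' but via the flag $h$ charging the flat steps where $U$ stalls, which --- as you correctly isolate --- can only happen at a set not contained in its own up-set, i.e.\ exactly the non-reflexivity phenomenon that Lemma~\ref{lemma:transitive-classes-of-at-least-two-elements} confines to singleton classes. I checked the delicate points and they all hold: $U(A)$ is indeed a union of $\equiv$-classes (your appeal to that lemma is valid, though for $s\neq s'$ the definition of $\equiv$ already gives $s\leq s'$ directly); the flat/strict dichotomy is exhaustive because $U(A_{i+1})\subseteq U(A_i)$ always holds along the chain; on a strict step $f$ drops by at least $1$ because both up-sets are unions of classes; the case $f(A)=|S_\equiv|$ forces $U(A)=S$ and hence $h(A)=0$, so $\Phi\leq 2|S_\equiv|$ throughout; and the empty set, which can occur only as the last element of a $<'$-chain, is handled automatically by the potential ($\Phi(\emptyset)=0$). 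Your argument in fact gives the marginally sharper bound $l\leq\Phi(A_0)=2|U(A_0)_\equiv|+h(A_0)$, while the paper's proof is arguably more informative structurally, as it exhibits $A_{\min,\equiv}$ as an invariant whose equality already forces mutual $\leq'$-domination.
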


\begin{proof}
It is straightforward to see that $\leq'$ is transitive.
And it is clear that $|S_\equiv| \leq |S|$.
Let us prove $\mathrm{mcl}(\leq') \leq 2  \cdot |S_\equiv|$.
Let $A$ be a subset of $S$.
Let us call an element $a \in A$ a {\em minimal element of $A$}
if there does not exist any $b \in A$ with $b < a$.
Let $A_{\min}$ be the set of minimal elements of $A$.
Let $A_{\min,\equiv}:=(A_{\min})_\equiv$
be the set of $\equiv$-equivalence classes of elements of $A_{\min}$.
Note that
\begin{equation}
\label{eq:estimate0}
(\forall a \in A) \, (\exists a' \in A_{\min}) \ a' \leq a . 
\end{equation}
Indeed, let us consider some element $a \in A$.
If $a_0 := a$ is not an element of $A_{\min}$ then there exists some $a_1 \in A$ with $a_1 < a_0$.
If $a_1 \not\in A_{\min}$ then there exists some $a_2 \in A$ with $a_2 < a_1$. And so on.
As $S$ is finite, by Corollary~\ref{corollary:transitive}
this can be repeated only finitely often, and finally we arrive 
at some $a'\in A_{\min}$ with $a'\leq a$.

Now let also $B$ be a subset of $S$.
We claim:
\begin{equation}
\label{eq:estimate1}
 \text{if } A_{\min,\equiv} = B_{\min,\equiv} \text{ then }
  (A \leq' B \text{ and } B \leq' A). 
\end{equation}
Indeed, let us assume $A_{\min,\equiv} = B_{\min,\equiv}$.
Due to~\eqref{eq:estimate0} applied to $B$ instead of $A$,
for any $b \in B$ there exists
some $b' \in B_{\min}$ with $b' \leq b$.
Due to $A_{\min,\equiv} = B_{\min,\equiv}$ there exists some
$a \in A_{\min}$ with $a \equiv b'$. We obtain
$a \equiv b' \leq b$, hence, $a\leq b$.
This shows $A \leq' B$.
By symmetry one obtains $B \leq' A$ as well.

Next, let also $C$ be a subset of $S$
and let us assume
$A \leq' B$ and $B \leq' C$.
We claim that in this case:
\begin{equation}
\label{eq:estimate2}
 \text{if } q \in A_{\min,\equiv} \setminus B_{\min,\equiv} \text{ then } q \not\in C_{\min,\equiv}  .
\end{equation}
Let us consider some $q\in A_{\min,\equiv} \setminus B_{\min,\equiv}$.
For the sake of a contradiction, let us assume $q \in C_{\min,\equiv}$.
Fix some $a \in q \cap A_{\min}$ and some $c\in q \cap C$.
Due to $B \leq' C$, there exists some $b \in B$ with $b \leq c$.
Due to~\eqref{eq:estimate0} applied to $B$ instead of $A$,
there exists some $b' \in B_{\min}$ with $b' \leq b$.
Due to $A \leq' B$, there exists some $a' \in A$ with $a' \leq b'$.
We obtain 
\[ a' \leq b' \leq b \leq c \equiv a , \]
hence, $a' \leq a$.
Due to $a \in A_{\min}$ we conclude $a \leq a'$, and this implies
$a' \equiv b' \equiv b \equiv c \equiv a$. Hence
$q = [b']_\equiv \in B_{\min,\equiv}$ in contradiction to the assumption.
We have proved~\eqref{eq:estimate2}.

Finally, let us consider a sequence
$(A^{(0)},\ldots,A^{(l)})$ of subsets of $S$
with $A^{(i)} <' A^{(i+1)}$ for all $i<l$.
The claim \eqref{eq:estimate1} shows that for every $i<l$ 
the set $A^{(i+1)}_{\min,\equiv}$ 
is different from the set $A^{(i)}_{\min,\equiv}$.
Hence, in each step from $i$ to $i+1$ some class $q \in S_\equiv$ has to enter or to leave the set 
$A^{(\ldots)}_{\min,\equiv}$.
The claim \eqref{eq:estimate2} shows that 
once a class $q \in S_\equiv$ has left the set $A^{(\ldots)}_{\min,\equiv}$
it can never re-enter it.
Hence, any element $q \in S_\equiv$
can enter this set at most once and can leave it at most once.
This shows that this set can change at most $2 \cdot |S_\equiv|$ times.
This proves $l\leq 2 \cdot |S_\equiv|$.
\end{proof}

\section[Definition of Tableaux for $\kxs$, $\sxs$, and $\ssl$]{The Definition of Tableaux for $\kxs$, $\sxs$, and $\ssl$}
\label{section:DefinitionTableaux}

At the beginning let us have a few thoughts about the construction of a tableau for a bimodal formula $\varphi$. 
We took only $\Box$ and $K$ as primitive modal operators because it often  makes proofs shorter. But it is perhaps more understandable to talk about how to handle $\Diamond$- and $L$-formulas. These formulas are introduced as abbreviations of negated $\Box$- and $K$-formulas, respectively, and they are the ones that require appropriate successor points. So in informal descriptions we will talk about $\Diamond$- and $L$-formulas while in formal parts we only use the original operators.

We will construct tableaux not as usual brick by brick, we will instead use prefabricated parts.
\begin{itemize}
	\item 
	Instead of expanding a set of formulas step-by-step to a propositional tableau we work
	with complete {\em tableau-sets} as defined in Definition~\ref{def: tableau-sets}.1 and \ref{def: tableau-sets}.2.
	\item 
	The next step is to combine tableau-sets to sets of tableau-sets, called \emph{tableau-clouds}, under the conditions given in Definition~\ref{def: tableau-sets}.3.
\end{itemize}

Tableau-clouds are somewhat similar to mosaics~\cite{nemeti1995decidable}. 
They mirror the $\stackrel{L}{\to}$-equivalence classes in corresponding models.
The benefit of working with tableau-clouds is twofold: 
On the one hand, we only have to take care of $\Diamond$-formulas because in tableau-clouds all $L$-formulas are satisfied within the tableau-cloud.
On the other hand, demanded commutativity properties are automatically satisfied if we meet the conditions for sequences of tableau-clouds defined in Definition~\ref{def: sequence of tableau-clouds}. Commutativity is hard to guarantee if one builds tableaux from single formula sets.

The tableaux we construct are sets of tableau-clouds. We construct them recursively and pathwise. A $\Diamond$-formula may demand that there exists a suitable successor to an element in a tableau-cloud. In order to arrive at a finite tableau we will not immediately try to construct a suitable new successor tableau-cloud containing a suitable successor element but first check whether in the already constructed sequence of tableau-clouds there is a suitable one that would lead to the satisfaction of the currently considered $\Diamond$-formula. Thus, one might say that the algorithm tries to construct backwards loops whenever possible.

The backwards loops and the recursive design of the intended algorithms result in the need for \emph{partial tableaux for a sequence of tableau-clouds}. 
Assume that we have to satisfy a formula $\Diamond\chi$ occurring in some tableau-cloud $\mathcal{C}$ at some component $p$, that cannot be satisfied by a backwards loop to one of the predecessors $\mathcal{C}_0,\ldots,\mathcal{C}_{m-1}$ of $\mathcal{C}$. Then we try all tableau-clouds $\mathcal{C}'$ that contain $\chi$ in some component $q$ such that $\mathcal{C}'$ can be a successor of $\mathcal{C}$ and $p$ can be linked to $q$, until one recursive tableau search for $\mathcal{C}'$ gives a positive feedback. Because of backwards loops that might be possible, we hand over to the new instance of the algorithm not only $\mathcal{C}'$ but also the sequence $\mathcal{C}_0,\ldots \mathcal{C}_{m-1},\mathcal{C}$. Additionally we hand over the formula $\varphi$ that determines the set $\subf(\varphi)$ and the set of tableau-clouds defined below.

We speak of a \emph{partial} tableau for the sequence $(\varphi,\mathcal{C}_0,\ldots \mathcal{C}_{m-1},\mathcal{C})$ because in the present instance of the algorithm we do not care whether the elements of the sequence 
$(\mathcal{C}_0,\ldots \mathcal{C}_{m-1})$ 
can be provided with all successors needed to satisfy their $\Diamond$-formulas.
This is checked by other instances of the algorithm.

We start with the definition of tableau-sets and tableau-clouds as the building blocks of the aimed tableaux.

\begin{definition}[Tableau-sets and Tableau-clouds]
	\label{def: tableau-sets}
	Let $X\in \{\kxs,\,\sxs,\,\ssl\}$, and let $\varphi$ be a bimodal formula.
	\begin{enumerate}
		\item 
		A \emph{$\kxs$-tableau-set with respect to $\varphi$} is a subset $F\subseteq \subf(\varphi)$ such that the following conditions are satisfied for all $\psi\in \subf(\varphi)$:
		\begin{enumerate}
			\item
			If $\psi=\neg \chi$ then $(\psi \in F \iff \chi \not\in F)$.
			\item
			If $\psi = (\chi_1 \wedge \chi_2)$ then
			$(\psi \in F \iff (\chi_1 \in F \text{ and } \chi_2 \in F))$.
			\item
			If $\psi = K \chi$ then $(\psi \in F \Rightarrow \chi \in F)$.
		\end{enumerate}
		\item 
		For $X\in \{\sxs,\ssl\}$ an \emph{$X$-tableau-set with respect to $\varphi$} is a subset $F\subseteq \subf(\varphi)$ such that for all $\psi\in \subf(\varphi)$ the conditions (a), (b), and (c) of a $\kxs$-tableau-set with respect to $\varphi$ and additionally the following condition are satisfied:
		\begin{enumerate}
			\item[(d)]
			If $\psi = \Box \chi$ then $(\psi \in F \Rightarrow \chi \in F)$.
		\end{enumerate}
		\item
		The set $\mathcal{T}^X_\varphi$ of all $X$-tableau-sets with respect to $\varphi$ is defined by
		\[\mathcal{T}^X_\varphi := 
		\{F \subseteq \subf(\varphi) \mid 
	    F \text{ is an $X$-tableau-set with respect to } \varphi\}. \]
		\item 
		An \emph{$X$-tableau-cloud with respect to $\varphi$} is a subset
		$\mathcal{F} \subseteq \mathcal{T}^X_\varphi$ 
		such that the following conditions are satisfied:
		\begin{enumerate}
			\item
			For all $F,G\in \mathcal{F}$,
			$F \cap \mathcal{L}_K = G \cap \mathcal{L}_K$
			\item
			For all $\chi$ with $K\chi  \in \subf(\varphi)$, 
			if $\chi \in\bigcap_{F \in \mathcal{F}} F$
			then $K \chi \in\bigcap_{F \in \mathcal{F}} F$.
		\end{enumerate}
		\item 
		The set $\mathfrak{C}^X_\varphi$ of all $X$-tableau-clouds with respect to $\varphi$ is defined by
		\[\mathfrak{C}^X_\varphi := 
		\{\mathcal{F} \subseteq \mathcal{T}^X_\varphi \mid 
		\mathcal{F} \text{ is an $X$-tableau-cloud with respect to } \varphi\}. \]
	\end{enumerate}
\end{definition}

Before we come to the definition of tableaux we specify the conditions under which tableau-sets resp.~tableau-clouds can be composed into a \emph{sequence}.

\begin{definition}[Sequences of Tableau-sets and of Tableau-clouds] 
	\label{def: sequence of tableau-clouds} 
	\mbox{\ } \\
	Let $X\in \{\kxs,\sxs, \ssl\}$.	
	Let $\varphi$ be a bimodal formula, let $F,G\in \mathcal{T}^X_\varphi$, and let 
	$\mathcal{F},\mathcal{G}\in \mathcal{P}(\mathcal{T}^X_\varphi)$. 	
	\begin{enumerate}
		\item
		We say that $G$ \emph{can be an $X$-successor of $F$} and write shortly $F \preccurlyeq_{X} G$ if the following conditions are satisfied:
		\begin{enumerate}
		\item
		in the case $X=\kxs$ the conditions
		$$F \cap \mathcal{L}_\Box \subseteq G \text{ and } \{\psi \in \mathcal{L} \mid \Box\psi \in F\} \subseteq G,$$
		\item
		in the case $X=\sxs$ the condition
		$$F \cap \mathcal{L}_\Box \subseteq G,$$ 
		\item 
		in the case $X=\ssl$ the conditions
		$$F \cap \mathcal{L}_\Box \subseteq G \text{ and } F \cap AT = G \cap AT.$$
       \end{enumerate}
      	\item 
		We say that {\em $\mathcal{G}$ can be an $X$-successor of $\mathcal{F}$}
		and write shortly $\mathcal{F} \leq_{X} \mathcal{G}$ if the following conditions are satisfied:
		\begin{enumerate}
		\item
		in the case of $X\in\{\kxs,\sxs\}$ the two conditions
		\begin{enumerate}
			\item 
			For all $G \in \mathcal{G}$ there exists some $F\in\mathcal{F}$ such that
			$F\preccurlyeq_X G$.
			\item
			For all $F \in \mathcal{F}$ there exists some $G\in\mathcal{G}$ such that
			$F\preccurlyeq_X G.$
		\end{enumerate}
		\item
		in the case of $X=\ssl$ the condition
		\begin{enumerate}
			\item 
			For all $G \in \mathcal{G}$ there exists some $F\in\mathcal{F}$ such that
			$F \preccurlyeq_{\SSL} G.$ 
		\end{enumerate}
		\end{enumerate}
		\item 
		We define a binary relation $\equiv_X$ on $\mathcal{P}(\mathcal{T}^X_\varphi)$ by
		$$\mathcal{F} \equiv_X \mathcal{G} :\iff 
		(\mathcal{F} \leq_X \mathcal{G}  \text{ and } \mathcal{G} \leq_X \mathcal{F}).$$
		\item 
		Finally, we define a binary relation $<_X$ on $\mathcal{P}(\mathcal{T}^X_\varphi)$ by
		$$\mathcal{F} <_X \mathcal{G} :\iff 
		(\mathcal{F} \leq_X \mathcal{G}  \text{ and not } \mathcal{G} \leq_X \mathcal{F}).$$
	\end{enumerate}
\end{definition}

\begin{lemma}
	Let $\varphi$ be a bimodal formula.
	\begin{enumerate}
		\item 
		The relation $\preccurlyeq_{\kxs}$ on $\mathcal{T}^{\kxs}_\varphi$ is transitive.
		\item
		The relation $\leq_{\kxs}$ on $\mathcal{P}(\mathcal{T}^{\kxs}_\varphi)$ is transitive.
		\item
		The relation $\equiv_{\kxs}$ on $\mathcal{P}(\mathcal{T}^{\kxs}_\varphi)$ is transitive and symmetric.
	\end{enumerate}
\end{lemma}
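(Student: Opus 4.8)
The plan is to establish the three assertions in the order given, since each rests on the previous one; the only genuine content is in part~1, and parts~2 and~3 are then purely formal consequences.

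First I would prove that $\preccurlyeq_{\kxs}$ is transitive. Recall that, for $F,G \in \mathcal{T}^{\kxs}_\varphi$, the relation $F \preccurlyeq_{\kxs} G$ abbreviates the conjunction of the two conditions $F \cap \mathcal{L}_\Box \subseteq G$ (``every $\Box$-formula of $F$ survives, still boxed, into $G$'') and $\{\psi \mid \Box\psi \in F\} \subseteq G$ (``every boxed formula of $F$ appears unboxed in $G$''). Assuming $F \preccurlyeq_{\kxs} G$ and $G \preccurlyeq_{\kxs} H$, I would verify these two conditions for $F \preccurlyeq_{\kxs} H$ separately. For the first, if $\Box\chi \in F \cap \mathcal{L}_\Box$ then box-persistence across the first link gives $\Box\chi \in G$, and since $\Box\chi$ is again a $\Box$-formula, box-persistence across the second link gives $\Box\chi \in H$. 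For the second, given $\psi$ with $\Box\psi \in F$, I would again invoke box-persistence on the first link to obtain $\Box\psi \in G$, and only then unbox across the second link to land $\psi \in H$. All formulas involved remain in $\subf(\varphi)$ because $\Box\psi \in \subf(\varphi)$ forces $\psi \in \subf(\varphi)$, so the inclusions are between subsets of $\subf(\varphi)$ and are meaningful.

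With part~1 in hand, part~2 is a two-sided diagram chase. For $\mathcal{F} \leq_{\kxs} \mathcal{G}$ and $\mathcal{G} \leq_{\kxs} \mathcal{H}$ I would check the two quantified clauses for $\mathcal{F} \leq_{\kxs} \mathcal{H}$: starting from an $H \in \mathcal{H}$, pull back to some $G \in \mathcal{G}$ and then to some $F \in \mathcal{F}$, composing via part~1 to get $F \preccurlyeq_{\kxs} H$; and starting from an $F \in \mathcal{F}$, push forward to some $G \in \mathcal{G}$ and then to some $H \in \mathcal{H}$, again composing via part~1. Part~3 is then immediate: symmetry of $\equiv_{\kxs}$ holds because its definition, namely $\mathcal{F} \leq_{\kxs} \mathcal{G}$ and $\mathcal{G} \leq_{\kxs} \mathcal{F}$, is already symmetric in $\mathcal{F}$ and $\mathcal{G}$, and transitivity follows by applying part~2 to each of the two directions.

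I expect no serious obstacle: the whole argument is a routine unfolding of definitions. The one point that deserves care --- and the only place where the two-part nature of the $\kxs$-successor relation is actually used --- is the unboxing step in part~1: to transport the promise $\Box\psi \in F$ down to $\psi \in H$ one must keep the box through the first link and unbox only at the second, since unboxing prematurely would leave $\psi \in G$ with no way to recover $\Box\psi \in G$, and hence no way to reach $\psi \in H$.
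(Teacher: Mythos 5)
Your proof is correct and coincides with the paper's approach: the paper dismisses the lemma with ``All assertions can be checked straightforwardly,'' and your argument is precisely that routine verification, done carefully. In particular, your observation that transitivity of $\preccurlyeq_{\kxs}$ requires transporting $\Box\psi$ boxed through the first link (via $F\cap\mathcal{L}_\Box\subseteq G$) and unboxing only at the second is exactly the one point where the two-part definition matters, and you handle it correctly.
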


\begin{proof}
	All assertions can be checked straightforwardly.
\end{proof}

\begin{lemma}
	Let $X\in \{\sxs,\ssl\}$, and let $\varphi$ be a bimodal formula.
	\begin{enumerate}
		\item
		The relation $\preccurlyeq_X$ on $\mathcal{T}^X_\varphi$ is a preorder.
		\item
		The relation $\leq_X$ on $\mathcal{P}(\mathcal{T}^X_\varphi)$ is a preorder.
		\item
		The relation $\equiv_X$ on $\mathcal{P}(\mathcal{T}^X_\varphi)$ is an equivalence relation.
	\end{enumerate}
\end{lemma}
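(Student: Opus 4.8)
The plan is to prove the three parts in the given order, since part~2 rests on part~1 and part~3 rests on part~2. Throughout I treat the cases $X=\sxs$ and $X=\ssl$ separately wherever their defining conditions differ. For part~1 everything reduces to checking reflexivity and transitivity of $\preccurlyeq_X$ directly from Definition~\ref{def: sequence of tableau-clouds}.1. For $X=\sxs$ the relation is simply $F\cap\mathcal{L}_\Box\subseteq G$; reflexivity is immediate, and for transitivity I would take $\psi\in F\cap\mathcal{L}_\Box$, note that then $\psi\in G$ and that $\psi$ is still a $\Box$-formula, hence $\psi\in G\cap\mathcal{L}_\Box\subseteq H$. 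For $X=\ssl$ there is the extra clause $F\cap AT=G\cap AT$; since this is an equality it is trivially reflexive and transitive, and it combines with the $\mathcal{L}_\Box$-clause without interaction. It is worth noting why the second $\kxs$-clause $\{\psi:\Box\psi\in F\}\subseteq G$ is absent for $\sxs$ and $\ssl$: condition~(d) of a tableau-set already forces $\psi\in G$ once $\Box\psi\in G$, which is exactly what makes $\preccurlyeq_X$ reflexive in these two cases and is why the plain inclusion $F\cap\mathcal{L}_\Box\subseteq G$ suffices.

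For part~2 I would lift reflexivity and transitivity from $\preccurlyeq_X$ to $\leq_X$ by routine quantifier chasing, using part~1. For reflexivity of $\leq_X$ each required witness in $\mathcal{F}\leq_X\mathcal{F}$ is supplied by reflexivity of $\preccurlyeq_X$ (taking $F=G$). For transitivity, given $\mathcal{F}\leq_X\mathcal{G}$ and $\mathcal{G}\leq_X\mathcal{H}$, I chain witnesses through $\mathcal{G}$ and apply transitivity of $\preccurlyeq_X$. Concretely, for the forward clause ``for all $H\in\mathcal{H}$ there is $F\in\mathcal{F}$ with $F\preccurlyeq_X H$'' I first pick $G\in\mathcal{G}$ with $G\preccurlyeq_X H$ and then $F\in\mathcal{F}$ with $F\preccurlyeq_X G$; for $X=\sxs$ the backward clause is handled symmetrically. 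For $X=\ssl$ only the single clause is present, so the argument is a direct instance of the fact, already recorded in Proposition~\ref{prop:mcl3}, that the induced relation $\leq'$ on a power set is transitive; indeed $\leq_\ssl$ is precisely $(\preccurlyeq_\ssl)'$ in that notation.

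Part~3 is then immediate: $\equiv_X$ is by definition the symmetric part of $\leq_X$, and by the general observation recalled just before Lemma~\ref{lemma:transitive-classes-of-at-least-two-elements}, the symmetric part of any preorder is an equivalence relation. Reflexivity comes from reflexivity of $\leq_X$ (part~2), symmetry is built into the definition $\mathcal{F}\equiv_X\mathcal{G}:\iff(\mathcal{F}\leq_X\mathcal{G}\text{ and }\mathcal{G}\leq_X\mathcal{F})$, and transitivity follows from transitivity of $\leq_X$.

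I do not expect a genuine obstacle here, since each assertion is elementary. The only place demanding care is part~2, where one must keep the alternating quantifiers straight and remember that $\sxs$ carries the two symmetric back-and-forth clauses (i) and (ii) while $\ssl$ carries only clause (i); the witnesses must be chosen in the correct order so that transitivity of $\preccurlyeq_X$ applies. Everything else is a direct unfolding of the definitions together with part~1.
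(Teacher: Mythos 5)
Your proof is correct and is exactly the direct verification the paper intends: the paper's own proof consists of the single remark that all assertions can be checked straightforwardly, and your unfolding of Definition~\ref{def: sequence of tableau-clouds} (including the apt observation that $\leq_{\ssl}$ is $(\preccurlyeq_{\ssl})'$ in the notation of Proposition~\ref{prop:mcl3}) is precisely that check. One aside is slightly misstated: reflexivity of $\preccurlyeq_X$ needs nothing beyond the trivial inclusion $F\cap\mathcal{L}_\Box\subseteq F$, so condition~(d) is not what makes $\preccurlyeq_X$ reflexive; rather, as your final clause correctly says, (d) is what makes the omitted $\kxs$-clause $\{\psi\in\mathcal{L}\mid \Box\psi\in F\}\subseteq G$ redundant for $\sxs$- and $\ssl$-tableau-sets.
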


\begin{proof}
	All assertions can be checked straightforwardly.
\end{proof}

\begin{definition}[Partial Tableaux for a Sequence of Tableau-clouds]
	\label{def: partial tableau}
	\mbox{\ } \\
	Let $X\in \{\kxs,\sxs,\ssl\}$, and let $\varphi$ be a bimodal formula.
	Let $(\mathcal{F}_0,\ldots,\mathcal{F}_m)$ for some $m\geq 0$ be a finite sequence
	of pairwise different $X$-tableau-clouds (that is, $\mathcal{F}_i \in \mathfrak{C}^X_\varphi$,
	for $i=0,\ldots,m$)
	with respect to $\varphi$ such that 
	$$\mathcal{F}_i \leq_X \mathcal{F}_{i+1}, \quad \text{ for all }i<m.$$
	A {\em partial $X$-tableau for $(\varphi,\mathcal{F}_0,\ldots,\mathcal{F}_m)$} is a
	subset $\mathfrak{T}\subseteq \mathfrak{C}^X_\varphi$ satisfying the following two conditions:
	\begin{enumerate}
		\item
		$\mathcal{F}_i \in \mathfrak{T}$, for $i=0,\ldots,m$.
		\item
		For all $\mathcal{F} \in \mathfrak{T} \setminus\{\mathcal{F}_0,\ldots,\mathcal{F}_{m-1}\}$,
		for all $F\in \mathcal{F}$,
		and for all $\chi$ with $\Box \chi \in \subf(\varphi)$,
		if $\Box \chi \not\in F$, then
		there exists some $\mathcal{G} \in\mathfrak{T}$ such that 
		$\mathcal{F} \leq_X \mathcal{G}$ and such that
		there exists some $G \in \mathcal{G}$ with
		$F\preccurlyeq_X G$ and 
		$\chi \not\in G$.			
	\end{enumerate}
\end{definition}

\section{Tableaux and Models}

In this section we show that the satisfiability of a bimodal formula $\varphi$ is equivalent to the existence of a partial tableau for $\varphi$. This is true for all three considered bimodal logics,
$\kxs$, $\sxs$, and $\ssl$. We proceed as follows.
\begin{itemize}
	\item
	Given a model $M$ we define for any point $w$ in $M$ the tableau-cloud ``of the point $w$''. Then we show that the set of tableau-clouds of $M$ is a partial tableau for the one-point sequence of tableau-clouds that consists of the tableau-cloud of some point $w$.
	\item
	Given a partial tableau for a one-point sequence of tableau-clouds, we construct a model that satisfies the same bimodal formulas, in a certain sense.
\end{itemize}

\begin{definition}[Tableaux based on Models]
	\label{def: tableaux based on models}
	Let $X\in \{\kxs,\,\sxs,\,\ssl\}$, and let $\varphi$ be a bimodal formula.
	Let 
	$M=(W,\stackrel{\Diamond}{\to}, \stackrel{L}{\to},\sigma)$
	be an $X$-model.
	\begin{enumerate}
		\item
		For all $w\in W$ we define 
		\[ sat_\varphi(w):=\{\psi\in \subf(\varphi) \mid M,w\models\psi\}. \]
		\item
		For $q \in W_{\stackrel{L}{\to}}$ we define
		$$\mathcal{F}_q:= \{sat_\varphi(w) \mid w\in q\}.$$
		\item
		Let 
		$$\mathfrak{T}_{M,\varphi} := \{ \mathcal{F}_q \mid q \in W_{\stackrel{L}{\to}} \}.$$		
	\end{enumerate}
\end{definition}
\begin{figure}[h]
   \centering		
   \includegraphics[width=0.8\linewidth]{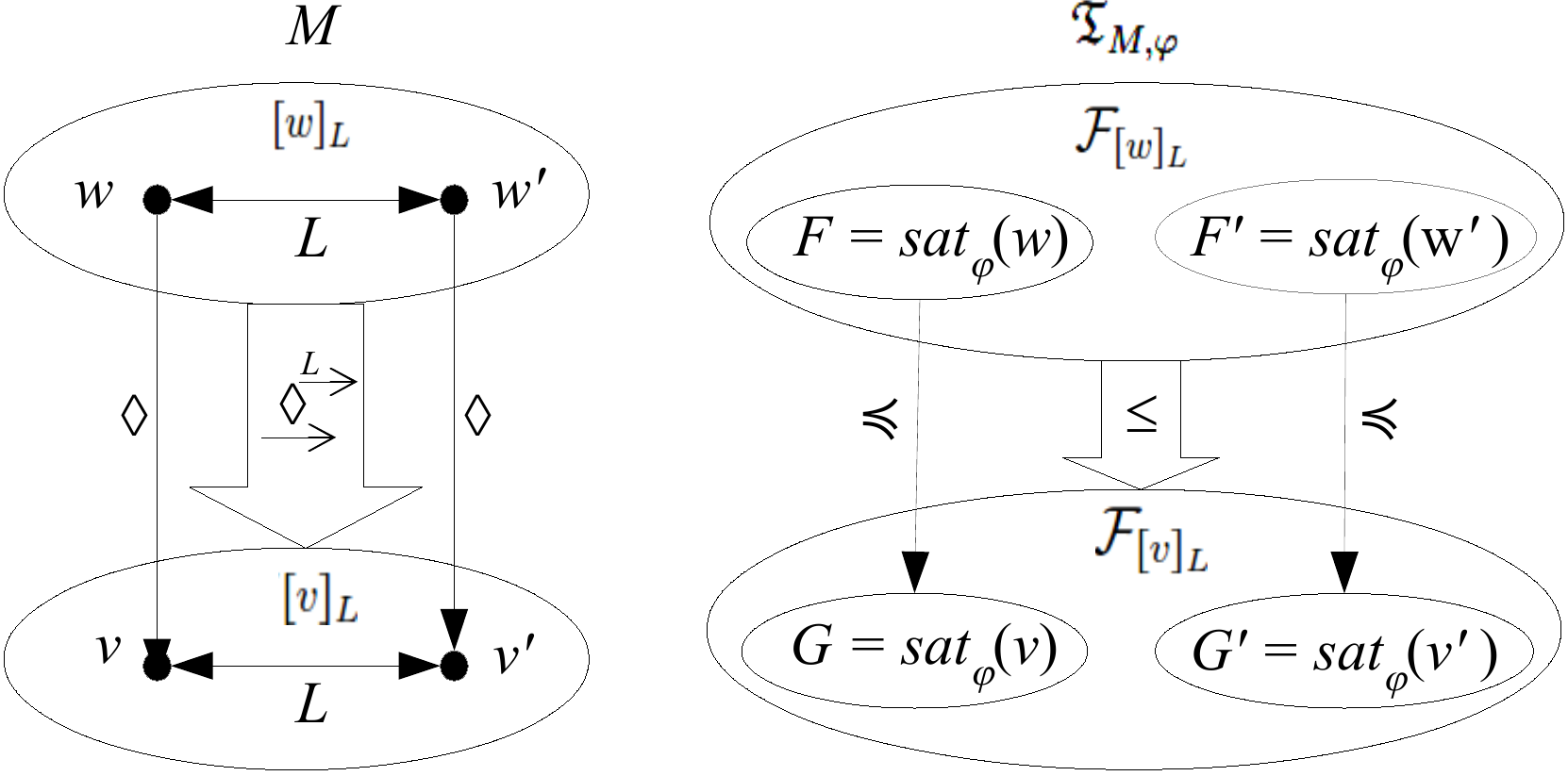}	
	\caption{An illustration of a model (on the left) and the tableau (on the right) based on it.}
	\label{figure:models-and-tableaux}
\end{figure}

\begin{lemma}
	\label{lemma:tableau-model}
	Let $X\in \{\kxs,\,\sxs,\,\ssl\}$. Let $\varphi$ be a bimodal formula.
	Let 
	$M=(W,\stackrel{\Diamond}{\to}, \stackrel{L}{\to},\sigma)$
	be an $X$-model.
	\begin{enumerate}
		\item
		For all $w \in W$, the set $sat_\varphi(w)$ is an $X$-tableau-set with respect to $\varphi$.
		\item
		For all $u,v\in W$, if $u \stackrel{\Diamond}{\to} v$ then
		$sat_\varphi(u) \preccurlyeq_X sat_\varphi(v)$.
		\item
		For all $q \in W_{\stackrel{L}{\to}}$ the set $\mathcal{F}_q$ is an $X$-tableau-cloud with respect to $\varphi$.
		\item
		For all $p,q \in W_{\stackrel{L}{\to}}$, if
		$p {\stackrel{\Diamond}{\to}}^{{\stackrel{L}{\to}}} q$ then
		$\mathcal{F}_p \leq_X \mathcal{F}_q$.
		\item
		For all $w\in W$, the set $\mathfrak{T}_{M,\varphi}$ is a partial $X$-tableau for 
		$(\varphi,\mathcal{F}_{[w]_L})$.
	\end{enumerate}
\end{lemma}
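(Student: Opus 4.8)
The plan is to prove the five assertions in the order they are stated, since each later part feeds on the earlier ones; Parts~1--3 are direct unfoldings of the semantics together with the frame conditions, while Parts~4 and~5 are where the commutativity properties do the real work.

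For Part~1, I would verify the defining conditions of an $X$-tableau-set directly against the semantics clauses. Conditions (a) and (b) are immediate from the clauses for $\neg$ and $\wedge$. Condition (c), namely $K\chi\in F\Rightarrow\chi\in F$, uses reflexivity of $\stackrel{L}{\to}$: since $\stackrel{L}{\to}$ is an equivalence relation, instantiating the universal ``for all $v$ with $w\stackrel{L}{\to}v$'' at $v=w$ gives $M,w\models\chi$. For $X\in\{\sxs,\ssl\}$ the extra condition (d), $\Box\chi\in F\Rightarrow\chi\in F$, uses reflexivity of $\stackrel{\Diamond}{\to}$, which holds because these frames require $\stackrel{\Diamond}{\to}$ to be a preorder. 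Whenever I assert that a subformula lies in $\subf(\varphi)$, this is justified by the fact that subformulas of a formula in $\subf(\varphi)$ are again in $\subf(\varphi)$. For Part~2 I would check the clauses of $\preccurlyeq_X$ case by case. The inclusion $F\cap\mathcal{L}_\Box\subseteq G$ is common to all three logics and uses transitivity of $\stackrel{\Diamond}{\to}$: if $M,u\models\Box\chi$ and $u\stackrel{\Diamond}{\to}v$, then by transitivity every $\stackrel{\Diamond}{\to}$-successor of $v$ is one of $u$, so $M,v\models\Box\chi$. The additional $\kxs$-condition $\{\psi:\Box\psi\in F\}\subseteq G$ uses instead the direct edge $u\stackrel{\Diamond}{\to}v$ (which is the whole point of having it, since $\stackrel{\Diamond}{\to}$ need not be reflexive in $\kxs$), and the additional $\ssl$-condition $F\cap AT=G\cap AT$ is exactly the persistence axiom of cross axiom models.

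For Part~3 I would exploit that, because $\stackrel{L}{\to}$ is an equivalence relation, every point of a class $q$ has the same $\stackrel{L}{\to}$-successor set, namely $q$ itself. Condition~(a) of a tableau-cloud then follows because whether $K\chi$ holds at some $w\in q$ depends only on the truth of $\chi$ throughout $q$, and condition~(b) is precisely the statement that if $\chi$ holds at every point of $q$ then $K\chi$ holds at every point of $q$. Part~4 is the heart of the argument and the place I expect the main obstacle. Given $p\,\stackrel{\Diamond}{\to}^{\stackrel{L}{\to}}\,q$, the definition of the induced relation supplies witnesses $w\in p$, $v\in q$ with $w\stackrel{\Diamond}{\to}v$. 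To obtain clause~(i) of $\leq_X$ (needed for all three logics) I would take an arbitrary $v'\in q$, note $v\stackrel{L}{\to}v'$, and apply \emph{left commutativity} to the pair $w\stackrel{\Diamond}{\to}v$, $v\stackrel{L}{\to}v'$ to produce $w'\in p$ with $w'\stackrel{\Diamond}{\to}v'$; Part~2 then yields $sat_\varphi(w')\preccurlyeq_X sat_\varphi(v')$. To obtain clause~(ii) (needed only for $\kxs,\sxs$) I would take an arbitrary $w'\in p$ and apply \emph{right commutativity} to $w'\stackrel{L}{\to}w$, $w\stackrel{\Diamond}{\to}v$ to produce $v'\in q$ with $w'\stackrel{\Diamond}{\to}v'$, again invoking Part~2. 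The delicate point, and the one to handle carefully, is aligning the quantifier patterns of left and right commutativity with the direction of the existential one actually needs, and observing that $\ssl$ legitimately omits clause~(ii) exactly because cross axiom frames are not required to satisfy right commutativity.

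Finally, Part~5 is essentially a corollary of Parts~2 and~4. Here the sequence is the one-point sequence $(\mathcal{F}_{[w]_L})$, so $m=0$ and the excluded set $\{\mathcal{F}_0,\ldots,\mathcal{F}_{m-1}\}$ is empty; hence condition~2 of Definition~\ref{def: partial tableau} must be verified for \emph{every} $\mathcal{F}_p\in\mathfrak{T}_{M,\varphi}$. Condition~1 of that definition is immediate since $[w]_L\in W_{\stackrel{L}{\to}}$. For condition~2 I would take any $\mathcal{F}_p$, any $F=sat_\varphi(u')\in\mathcal{F}_p$ with $u'\in p$, and any $\chi$ with $\Box\chi\in\subf(\varphi)$ and $\Box\chi\notin F$. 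Then $M,u'\not\models\Box\chi$ delivers a point $v'$ with $u'\stackrel{\Diamond}{\to}v'$ and $\chi\notin sat_\varphi(v')$; setting $q:=[v']_L$, we have $p\,\stackrel{\Diamond}{\to}^{\stackrel{L}{\to}}\,q$, so Part~4 gives $\mathcal{F}_p\leq_X\mathcal{F}_q$ and Part~2 gives $F\preccurlyeq_X sat_\varphi(v')$, with $\mathcal{G}:=\mathcal{F}_q$ and $G:=sat_\varphi(v')$ serving as the required witnesses. I expect no genuine difficulty beyond tracking the definition of the induced relation and the semantics of $\Box$ precisely.
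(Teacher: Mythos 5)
Your proposal is correct and follows essentially the same route as the paper: the paper dismisses Parts~1--4 as straightforward verifications (which you carry out in exactly the intended way, via reflexivity of the relations for conditions (c)/(d), transitivity for $F\cap\mathcal{L}_\Box\subseteq G$, persistence for the $\ssl$-clause, and left/right commutativity with the correct quantifier alignment for Part~4), and your Part~5 argument --- fixing $u$ with $F=sat_\varphi(u)$, extracting a witness $v$ from $M,u\models\Diamond\neg\chi$, and taking $G:=sat_\varphi(v)$, $\mathcal{G}:=\mathcal{F}_{[v]_L}$ via Parts~2 and~4 --- is the paper's proof verbatim. No gaps.
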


\begin{proof}
	\begin{enumerate}
		\item
		This is straightforward to see. Note that in the cases $X\in\{\sxs,\ssl\}$ the sets $sat_\varphi(w)$ for $w \in W$ satisfy Condition (d) in Definition~\ref{def: tableau-sets}.2 because the relation $\stackrel{\Diamond}{\to}$ in an $X$-model is reflexive.
		\item[2.-4.]
		All of these assertions are straightforward to check as well in each case for $X$.
		\item[5.]
		Let us fix some $w\in W$.
		It is clear that $\mathcal{F}_{[w]_L} \in \mathfrak{T}_{M,\varphi}$. 
		Let us fix some $\mathcal{F} \in \mathfrak{T}_{M,\varphi}$ and some $F\in\mathcal{F}$.
		Let us assume that $\chi$ is a bimodal formula with $\Box \chi \in \subf(\varphi)\setminus F$.
		We have to show that there exists some $\mathcal{G} \in \mathfrak{T}_{M,\varphi}$ 
		such that $\mathcal{F} \leq_X \mathcal{G}$ and such that there exists some
		$G\in\mathcal{G}$ with 
		$F\preccurlyeq_X G$ and 
		$\chi \not\in G$.			
		Indeed, let us fix some point $u\in W$ with $F=sat_\varphi(u)$ and 
		$\mathcal{F} = \mathcal{F}_{[u]_L}$. From $\Box\chi \not\in F=sat_\varphi(u)$
		we conclude $M,u \models \neg\Box\chi$, hence, $M,u \models \Diamond \neg\chi$.
		As $M$ is an $X$-model there exists some point $v\in W$ with
		$u \stackrel{\Diamond}{\to} v$ and $M,v \models \neg \chi$.
		Let $G:=sat_\varphi(v)$ and $\mathcal{G}:=\mathcal{F}_{[v]_L}$.
		Then $\neg \chi \in G$, hence, $\chi\not\in G$.
		Furthermore $G \in \mathcal{G}$ and $\mathcal{G} \in \mathfrak{T}_{M,\varphi}$.
		Finally, by the second assertion of this lemma, $u \stackrel{\Diamond}{\to} v$ implies
		$F=sat_\varphi(u) \preccurlyeq sat_\varphi(v) = G$.
		And it implies $[u]_L \stackrel{\Diamond}{\to}^{\stackrel{L}{\to}} [v]_L$, which,
		by the fourth assertion of this lemma, implies
		$\mathcal{F} = \mathcal{F}_{[u]_L} \leq_X \mathcal{G}_{[v]_L} = \mathcal{G}$.
		\qedhere
	\end{enumerate}
\end{proof}	

\begin{definition}[Models based on Tableaux]
	\label{def:models based on tableaux}
	Let $X\in \{\kxs,\sxs,\ssl\}$, and let $\varphi$ be a bimodal formula. 
	Let $\mathcal{F}_0$ be an $X$-tableau-cloud with respect to $\varphi$.
	Let $\mathfrak{T} \subseteq \mathfrak{C}^X_\varphi$ be a 
	a partial X-tableau for $(\varphi,\mathcal{F}_0)$.
	We define a quadruple
	\[ M_{\mathfrak{T}}=(W,\stackrel{\Diamond}{\to}, \stackrel{L}{\to}, \sigma) \]
	consisting of a nonempty set $W$, of two binary relations 
	$\stackrel{\Diamond}{\to}$ and $\stackrel{L}{\to}$ on $W$, and of a function
	$\sigma:AT\to\mathcal{P}(W)$ as follows:
	\[\begin{array}{rlll}
	W &:= &\{(\mathcal{F},F)\in \mathfrak{T}\times\mathcal{P}(\subf(\varphi)) \mid
	     F\in\mathcal{F}\}, & \\
	(\mathcal{F},F)\stackrel{\Diamond}{\to}(\mathcal{G},G) &:\iff
		& (\mathcal{F} \leq_X \mathcal{G} \text{ and } F\preccurlyeq_{X} G), & \\
	&& \quad\text{ for } (\mathcal{F},F), (\mathcal{G},G) \in W,\\
	(\mathcal{F},F)\stackrel{L}{\to}(\mathcal{G},G) &:\iff
		& \mathcal{F}=\mathcal{G}, &\\
	&& \quad\text{ for }(\mathcal{F},F),(\mathcal{G},G)\in W,\\ 
	\sigma(A) &:=&\{(\mathcal{F},F)\in W \mid A \in F\}, & \\
	&& \quad\text{ for } A\in AT.
	\end{array}\]
\end{definition}

\begin{figure}[h]
   \centering		
   \includegraphics[width=0.5\linewidth]{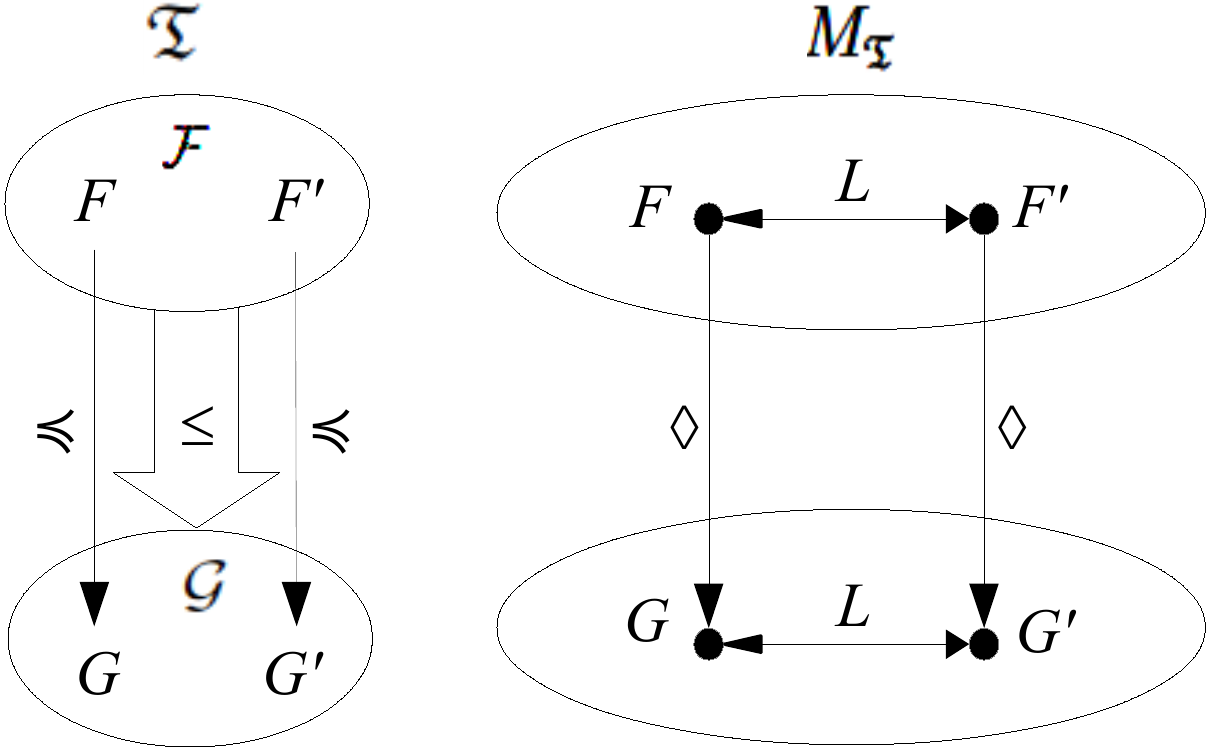}
   \caption{An illustration of a tableau (on the left) and the model (on the right) based on it.}
	\label{figure:tableaux-and-models}
\end{figure}

\begin{lemma}
	\label{lemma:model-tableau}
	Let $X$, $\varphi$, $\mathcal{F}_0$ and $\mathfrak{T}$ be as in the previous definition.
	\begin{enumerate}
		\item
		The quadruple $M_{\mathfrak{T}}$
		is an $X$-model.
		\item 
		(Truth Lemma)
		\[(\forall \psi \in \subf(\varphi))\; (\forall (\mathcal{F},F) \in W)
		\ \left( M_{\mathfrak{T}},(\mathcal{F},F) \models \psi \iff \psi \in F \right).\]
	\end{enumerate}
\end{lemma}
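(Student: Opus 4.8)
The plan is to prove the two parts separately: the first by direct verification of the frame conditions, the second by structural induction on $\psi$.

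For the first part I would check that $M_{\mathfrak{T}}$ is an $X$-model by verifying the defining properties of an $X\times\Sfive$-commutator frame (for $X\in\{\kxs,\sxs\}$) resp.\ of a cross axiom frame (for $X=\ssl$), together with persistence in the case $X=\ssl$. Since $(\mathcal{F},F)\stackrel{L}{\to}(\mathcal{G},G)$ holds iff $\mathcal{F}=\mathcal{G}$, the relation $\stackrel{L}{\to}$ is at once an equivalence relation. The relation $\stackrel{\Diamond}{\to}$ is defined componentwise from $\leq_X$ on the first coordinate and $\preccurlyeq_X$ on the second, so its transitivity (and, for $X\in\{\sxs,\ssl\}$, its reflexivity) follows immediately from the corresponding properties of $\leq_X$ and $\preccurlyeq_X$ established in the two lemmas stated just before Definition~\ref{def: partial tableau}. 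For left commutativity, given $(\mathcal{F},F)\stackrel{\Diamond}{\to}(\mathcal{G},G)$ and a second element $(\mathcal{G},G')$ of the same cloud, I would use clause (i) of the definition of $\mathcal{F}\leq_X\mathcal{G}$ (which holds for all three logics) to produce some $F'\in\mathcal{F}$ with $F'\preccurlyeq_X G'$, giving the required witness $(\mathcal{F},F')$. Right commutativity, needed only for $X\in\{\kxs,\sxs\}$, is obtained symmetrically from clause (ii). Finally, persistence for $X=\ssl$ is exactly the clause $F\cap AT=G\cap AT$ built into the definition of $\preccurlyeq_{\ssl}$.

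For the Truth Lemma I would argue by induction on the structure of $\psi\in\subf(\varphi)$, using that $\subf(\varphi)$ is closed under subformulas so that the induction hypothesis always applies. The atomic case is immediate from the definition of $\sigma$, and the cases $\psi=\neg\chi$ and $\psi=(\chi_1\wedge\chi_2)$ follow from the induction hypothesis together with the tableau-set conditions (a) and (b). For $\psi=K\chi$, the $\stackrel{L}{\to}$-successors of $(\mathcal{F},F)$ are precisely the pairs $(\mathcal{F},G)$ with $G\in\mathcal{F}$, so by the induction hypothesis $M_{\mathfrak{T}},(\mathcal{F},F)\models K\chi$ holds iff $\chi\in\bigcap_{G\in\mathcal{F}}G$. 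That this is equivalent to $K\chi\in F$ uses the two tableau-cloud conditions: if $K\chi\in F$ then $K\chi\in G$ for every $G\in\mathcal{F}$ by condition (a) (as $K\chi\in\mathcal{L}_K$), hence $\chi\in G$ by tableau-set condition (c); conversely, if $\chi\in\bigcap_{G\in\mathcal{F}}G$ then condition (b) forces $K\chi\in\bigcap_{G\in\mathcal{F}}G\subseteq F$.

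The case $\psi=\Box\chi$ is the heart of the argument. For the direction ``$\Box\chi\in F$ implies $M_{\mathfrak{T}},(\mathcal{F},F)\models\Box\chi$'' I would show that every $\stackrel{\Diamond}{\to}$-successor $(\mathcal{G},G)$ contains $\chi$: this is immediate from the clause $\{\psi\in\mathcal{L}\mid\Box\psi\in F\}\subseteq G$ of $\preccurlyeq_{\kxs}$ when $X=\kxs$, and for $X\in\{\sxs,\ssl\}$ it follows from $\Box\chi\in F\cap\mathcal{L}_\Box\subseteq G$ together with tableau-set condition (d); the induction hypothesis then gives the claim. The converse direction, which I expect to be the main obstacle, is handled contrapositively and is exactly where the partial-tableau machinery enters: assuming $\Box\chi\notin F$, I want a $\stackrel{\Diamond}{\to}$-successor $(\mathcal{G},G)$ with $\chi\notin G$. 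This is supplied by condition (2) of Definition~\ref{def: partial tableau} — but only because here the sequence is the one-point sequence $(\varphi,\mathcal{F}_0)$, so $m=0$ and the excluded set $\{\mathcal{F}_0,\ldots,\mathcal{F}_{m-1}\}$ is empty; hence condition (2) applies to \emph{every} $\mathcal{F}\in\mathfrak{T}$, not merely to those outside the initial segment. Applying it yields $\mathcal{G}\in\mathfrak{T}$ with $\mathcal{F}\leq_X\mathcal{G}$ and $G\in\mathcal{G}$ with $F\preccurlyeq_X G$ and $\chi\notin G$, i.e.\ $(\mathcal{F},F)\stackrel{\Diamond}{\to}(\mathcal{G},G)$, and the induction hypothesis gives $M_{\mathfrak{T}},(\mathcal{G},G)\not\models\chi$, completing the case. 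I would flag the emptiness of the exclusion set for the one-point sequence as the one genuinely delicate point, since it is what prevents the exclusion clause in condition (2) from blocking the argument.
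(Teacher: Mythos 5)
Your proposal is correct and follows essentially the same route as the paper's proof: componentwise verification of the frame conditions (transitivity/reflexivity of $\stackrel{\Diamond}{\to}$ from $\leq_X$ and $\preccurlyeq_X$, left/right commutativity from the two clauses of $\leq_X$, persistence from $F\cap AT=G\cap AT$ in $\preccurlyeq_{\ssl}$), followed by the same structural induction with the same case split in $\preccurlyeq_X$ for the easy $\Box$-direction and the contrapositive use of condition~2 of Definition~\ref{def: partial tableau} for the hard one. The point you flag --- that for the one-point sequence $(\varphi,\mathcal{F}_0)$ we have $m=0$, so the exclusion set $\{\mathcal{F}_0,\ldots,\mathcal{F}_{m-1}\}$ is empty and condition~2 applies to every $\mathcal{F}\in\mathfrak{T}$ --- is used silently in the paper's proof, and making it explicit is a minor but genuine gain in clarity.
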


\begin{proof}
	\begin{enumerate}
		\item
The relations $\preccurlyeq$ on $\mathcal{T}^X_\varphi$ and 
$\leq_X$ on $\mathfrak{C}^X_\varphi$ are transitive.
Hence, the relation $\stackrel{\Diamond}{\to}$ is transitive as well.
Furthermore, in the cases $X\in\{\sxs,\ssl\}$ the relations 
$\preccurlyeq$ on $\mathcal{T}^X_\varphi$ and 
$\leq_X$ on $\mathfrak{C}^X_\varphi$ are reflexive.
Hence, in these cases the relation $\stackrel{\Diamond}{\to}$ is reflexive as well.
It is clear that the relation $\stackrel{L}{\to}$ is an equivalence relation.

Next, we show that left commutativity holds.
Let us consider pairs $(\mathcal{F},F), (\mathcal{G},G)$, $(\mathcal{G}',G') \in W$ with 
$$(\mathcal{F},F) \stackrel{\Diamond}{\to} (\mathcal{G},G) \ \text{ and }
\ (\mathcal{G},G) \stackrel{L}{\to} (\mathcal{G}',G').$$
Then $\mathcal{G}'=\mathcal{G}$.
Furthermore, $\mathcal{F} \leq_X \mathcal{G}$
and $F\preccurlyeq_X G$.
Due to $G' \in \mathcal{G}' = \mathcal{G}$ and $\mathcal{F} \leq_X \mathcal{G}$
there exists some $F' \in \mathcal{F}$ with
$F' \preccurlyeq_X G'$. We conclude
$(\mathcal{F},F') \stackrel{\Diamond}{\to} (\mathcal{G},G')$.
As $(\mathcal{F},F) \stackrel{L}{\to} (\mathcal{F},F')$ is clear,
we have shown left commutativity.

In the cases $X\in\{\kxs,\sxs\}$ right commutativity is shown in the same way.

Finally, let us consider the case $X=\ssl$.
We still need to show that in this case the persistence property holds true.
For $(\mathcal{F},F), (\mathcal{G},G) \in W$, the condition
$(\mathcal{F},F) \stackrel{\Diamond}{\to} (\mathcal{G},G)$ implies 
$F \preccurlyeq_{\ssl} G$
which, in turn, implies $F \cap AT = G \cap AT$.
Hence, for any propositional variable $A$ and any $(\mathcal{F},F), (\mathcal{G},G) \in W$
with $(\mathcal{F},F) \stackrel{\Diamond}{\to} (\mathcal{G},G)$
we have $A \in F \iff A \in G$, hence,
$(\mathcal{F},F) \in\sigma(A) \iff (\mathcal{G},G) \in\sigma(A)$.
Thus, the persistence property is satisfied.
We have shown that $M_{\mathfrak{T}}$ is a cross axiom model.
\item
		Let us consider some $\psi \in \subf(\varphi)$.
		We wish to show
		\[ M_{\mathfrak{T}},(\mathcal{F},F) \models  \psi \iff \psi \in F ,\]
		for all $(\mathcal{F},F) \in W$.		
		This is shown by structural induction.
		We distinguish the following cases:
		\begin{itemize}
			\item
			$\psi = A \in AT$.
			For $(\mathcal{F},F) \in W$,		
			the condition $M_{\mathfrak{T}},(\mathcal{F},F) \models A$ is equivalent to
			$(\mathcal{F},F)\in\sigma(A)$, and by definition of $\sigma$, this is equivalent to 
			$A \in F$.
			\item 
			$\psi = \neg \chi$.
			In this case, the following four conditions are equivalent (the second and the third condition by induction hypothesis) for $(\mathcal{F},F)\in W$:
			(a)	$M_{\mathfrak{T}},(\mathcal{F},F) \models \psi$,
			(b) $M_{\mathfrak{T}}, (\mathcal{F},F) \not\models \chi$,
			(c) $\chi \not\in F$,
			(d) $\psi \in F$.
			\item
			$\psi = \psi_1 \wedge \psi_2$. 
			This case is treated similarly.
			\item
			$\psi = K \chi$.
			Let us first assume $M_{\mathfrak{T}}, (\mathcal{F},F) \models K \chi$.
			We wish to show $K\chi \in F$.
			By the semantics definition $M_{\mathfrak{T}}, (\mathcal{F},G) \models \chi$, for all 
			$G \in \mathcal{F}$. By induction hypothesis, $\chi \in G$ for all such $G$.
			Thus, we have $\chi\in \bigcap_{G \in \mathcal{F}} G$.
			As $\mathcal{F}$ is an $X$-tableau-cloud, we obtain 
			$K\chi\in \bigcap_{G \in \mathcal{F}} G$.
			As $F \in \mathcal{F}$ as well we finally obtain $K\chi \in F$.
			
			For the other direction let us consider some $(\mathcal{F},F) \in W$,
			and let us assume $K\chi \in F$.
			We wish to show $M_{\mathfrak{T}}, (\mathcal{F},F) \models K\chi$.
			As $F \in \mathcal{F}$ and $\mathcal{F}$ is a tableau-cloud,
			we have $F \cap \mathcal{L}_K = G \cap \mathcal{L}_K$, for all $G\in \mathcal{F}$.
			This implies $K\chi \in G$, for all $G \in \mathcal{F}$. 
			As all such $G$ are $X$-tableau-sets, we obtain
			$\chi \in G$, for all $G \in \mathcal{F}$.
			By induction hypothesis $M_{\mathfrak{T}}, (\mathcal{F},G) \models \chi$,
			for all $G\in \mathcal{F}$.
			But this implies $M_{\mathfrak{T}}, (\mathcal{F},F) \models K\chi$.
			\item
			$\psi = \Box \chi$.
			Let us first assume $M_{\mathfrak{T}}, (\mathcal{F},F) \models \Box \chi$.
			We wish to show $\Box\chi \in F$.
			The assumption implies that
			$M_{\mathfrak{T}}, (\mathcal{G},G) \models \chi$, for all $(\mathcal{G},G)\in W$ with 
			$(\mathcal{F},F) \stackrel{\Diamond}{\to} (\mathcal{G},G)$.
			By induction hypothesis we obtain $\chi \in G$, for all such $(\mathcal{G},G)\in W$.
			Hence, $\chi \in G$ for all $(\mathcal{G},G) \in W$ satisfying
			$\mathcal{F} \leq_X \mathcal{G}$ and $F \preccurlyeq_X G$.
			The second condition in Definition~\ref{def: partial tableau} implies $\Box\chi\in F$.
			
			For the other direction, let us consider some $(\mathcal{F},F) \in W$ and let us assume 
			$\Box\chi \in F$.
			We wish to show $M_{\mathfrak{T}}, (\mathcal{F},F)\models \Box \chi$.
			It is sufficient to show that $M_{\mathfrak{T}}, (\mathcal{G},G) \models \chi$
			for all $(\mathcal{G},G) \in W$
			with $(\mathcal{F},F)\stackrel{\Diamond}{\to} (\mathcal{G},G)$.
			By induction hypothesis it is sufficient to show that $\chi \in G$ for all 
			$(\mathcal{G},G) \in W$ with $(\mathcal{F},F) \stackrel{\Diamond}{\to} (\mathcal{G},G)$.
			But $(\mathcal{F},F) \stackrel{\Diamond}{\to} (\mathcal{G},G)$ implies
			$F \preccurlyeq_X G$.
			In the case $X=\kxs$ this condition and the assumption $\Box\chi \in F$ immediately imply
			$\chi \in G$.
			In the cases $X\in\{\sxs,\ssl\}$ the condition $F \preccurlyeq_X G$
			and the assumption $\Box\chi \in F$ imply $\Box\chi \in G$.
			Using additionally the fact that $G$ is an $X$-tableau-set, we obtain $\chi \in G$.
			\qedhere
		\end{itemize}
\end{enumerate}
\end{proof}

We are now ready to state and prove the main result of this section.

\begin{proposition}
	\label{prop: sat equiv tablau}
	Let $X\in \{\kxs,\sxs,\ssl\}$, and let $\varphi$ be a bimodal formula.
	The following two conditions are equivalent.
		\begin{enumerate}
			\item
			$\varphi$ is $X$-satisfiable.
			\item
			There exists an $X$-tableau-cloud $\mathcal{F}_0$ such that there exist a set $F\in\mathcal{F}_0$ with $\varphi\in F$ and
			a partial $X$-tableau for $(\varphi,\mathcal{F}_0)$.
		\end{enumerate}
\end{proposition}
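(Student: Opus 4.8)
The plan is to prove the equivalence by directly combining the two main lemmas of this section, which already carry out all the substantive work: Lemma~\ref{lemma:tableau-model} builds a partial tableau out of a model, and Lemma~\ref{lemma:model-tableau} builds a model (together with a Truth Lemma) out of a partial tableau. The proposition then follows by choosing the right distinguished tableau-cloud and tableau-set in each direction and invoking these lemmas; no new combinatorial argument should be needed.

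For the direction ``$1 \Rightarrow 2$'', I would start from an $X$-model $M = (W, \stackrel{\Diamond}{\to}, \stackrel{L}{\to}, \sigma)$ and a point $w \in W$ with $M, w \models \varphi$, whose existence is guaranteed by $X$-satisfiability. I set $F := sat_\varphi(w)$ and $\mathcal{F}_0 := \mathcal{F}_{[w]_L}$, using the notation of Definition~\ref{def: tableaux based on models}. Since $\varphi \in \subf(\varphi)$ and $M, w \models \varphi$, we have $\varphi \in F$; by Lemma~\ref{lemma:tableau-model}.1 the set $F$ is an $X$-tableau-set, by Lemma~\ref{lemma:tableau-model}.3 the set $\mathcal{F}_0$ is an $X$-tableau-cloud, and $F \in \mathcal{F}_0$ holds because $w \in [w]_L$. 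Finally, Lemma~\ref{lemma:tableau-model}.5 tells us that $\mathfrak{T}_{M,\varphi}$ is a partial $X$-tableau for $(\varphi, \mathcal{F}_0)$, which is exactly what Condition~2 demands.

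For the direction ``$2 \Rightarrow 1$'', I would take an $X$-tableau-cloud $\mathcal{F}_0$, a set $F \in \mathcal{F}_0$ with $\varphi \in F$, and a partial $X$-tableau $\mathfrak{T}$ for $(\varphi, \mathcal{F}_0)$, and form the model $M_{\mathfrak{T}} = (W, \stackrel{\Diamond}{\to}, \stackrel{L}{\to}, \sigma)$ of Definition~\ref{def:models based on tableaux}. By Lemma~\ref{lemma:model-tableau}.1 this is a genuine $X$-model. To apply the Truth Lemma I first need to observe that $(\mathcal{F}_0, F) \in W$: indeed $\mathcal{F}_0 \in \mathfrak{T}$ by Condition~1 of Definition~\ref{def: partial tableau} (applied to the one-point sequence $(\mathcal{F}_0)$), and $F \in \mathcal{F}_0$ by assumption. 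Then Lemma~\ref{lemma:model-tableau}.2 with $\psi := \varphi$ gives $M_{\mathfrak{T}}, (\mathcal{F}_0, F) \models \varphi$ if and only if $\varphi \in F$; since $\varphi \in F$ we conclude that $M_{\mathfrak{T}}, (\mathcal{F}_0, F) \models \varphi$, so $\varphi$ is $X$-satisfiable.

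Since the two lemmas do the heavy lifting, I do not expect a serious obstacle here. The only points requiring care are the small bookkeeping checks that the chosen pair $(\mathcal{F}_0, F)$ actually lies in the carrier set $W$ of $M_{\mathfrak{T}}$ and that the distinguished cloud really belongs to $\mathfrak{T}$; both reduce to reading off Condition~1 of Definition~\ref{def: partial tableau}. It is also worth emphasizing that the whole argument is uniform in $X \in \{\kxs, \sxs, \ssl\}$, because the case distinctions have already been absorbed into Lemmas~\ref{lemma:tableau-model} and~\ref{lemma:model-tableau}.
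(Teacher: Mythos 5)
Your proposal is correct and follows essentially the same route as the paper's own proof: both directions are obtained by instantiating Lemma~\ref{lemma:tableau-model}.5 with $\mathcal{F}_0 := \mathcal{F}_{[w]_L}$ and $F := sat_\varphi(w)$, respectively by applying Lemma~\ref{lemma:model-tableau} (model construction plus Truth Lemma) to the pair $(\mathcal{F}_0,F)\in W$. Your explicit bookkeeping checks (that $F\in\mathcal{F}_0$, that $\mathcal{F}_0\in\mathfrak{T}$ via Condition~1 of Definition~\ref{def: partial tableau}, and hence $(\mathcal{F}_0,F)\in W$) are exactly the steps the paper carries out, only stated slightly more explicitly.
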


\begin{proof}
		Let us first assume that $\varphi$ is $X$-satisfiable.
		Then there are some $X$-model 
		$M={(W,\stackrel{\Diamond}{\to}, \stackrel{L}{\to}, \sigma)}$
		and some point $w\in W$ such that $M,w\models\varphi$.
		According to Lemma~\ref{lemma:tableau-model}.5 the set
		$\mathfrak{T}_{M,\varphi}$ defined in Definition~\ref{def: tableaux based on models} is a 
		partial $X$-tableau for $(\varphi,\mathcal{F}_{[w]_L})$.
		Due to $M,w \models \varphi$ the formula $\varphi$ is an element
		of the set $F:=sat_\varphi(w)$ and this in turn is an element of $\mathcal{F}_{[w]_L}$.
		
		For the other direction let us assume that there exist an $X$-tableau-cloud $\mathcal{F}_0$,
		an $X$-tableau-set $F\in \mathcal{F}_0$ with $\varphi \in F$ and a 	
		partial $X$-tableau $\mathfrak{T}$ for $(\varphi,\mathcal{F}_0)$.
		According to Lemma~\ref{lemma:model-tableau}.1 the quadruple
		$M_{\mathfrak{T}}=(W,\stackrel{\Diamond}{\to}, \stackrel{L}{\to}, \sigma)$
		defined in Definition~\ref{def:models based on tableaux} is an
		$X$-model. 
		Furthermore, we have $F \in \mathcal{F}_0$, hence, the
		pair $(\mathcal{F}_0,F)$ is an element of $W$.
		Finally, due to $\varphi \in F$ and due to Lemma~\ref{lemma:model-tableau}.2 we obtain
		$M_{\mathfrak{T}}, (\mathcal{F}_0,F) \models \varphi$. Hence,
		$\varphi$ is $X$-satisfiable.
\end{proof}

This shows that we can replace the search for a model of $\varphi$ by the search for a partial tableau for $\varphi$.
We will organize this search by recursive algorithms that will be described in the following section.

\section{The Tableau Algorithms}

The algorithms use the following recursive procedures
$\mathit{alg}_{\kxs}$, $\mathit{alg}_{\sxs}$, and $\mathit{alg}_{\ssl}$.

\begin{definition}[Procedures $\mathit{alg}_{\kxs}$, $\mathit{alg}_{\sxs}$, and $\mathit{alg}_{\ssl}$]
	\label{def: alg_X}
	\mbox{\ } \\
	Assume that $X\in\{\kxs, \sxs, \ssl\}$.
	Given a bimodal formula $\varphi$ and for some $m\geq 0$ a sequence
	$(\mathcal{F}_0,\ldots,\mathcal{F}_m)$  of pairwise different
	tableau-clouds $\mathcal{F}_i \in \mathfrak{C}_\varphi$
	with $\mathcal{F}_i \leq_{X} \mathcal{F}_{i+1}$, for all $i<m$,
	the algorithm
	\[ \mathit{alg}_{X}(\varphi, \mathcal{F}_0,\ldots,\mathcal{F}_m) \]
	checks for every pair $(\Box\chi,F)\in \subf(\varphi) \times \mathcal{F}_m$
	with $\Box \chi \not\in F$ first
	\begin{itemize}
		\item[(I)]
		whether there exists some $i\in\{0,\ldots,m\}$ with
		$\mathcal{F}_m \leq_{X}\mathcal{F}_i$
		and such that there exists some $G \in \mathcal{F}_i$ with 
		$F \preccurlyeq_{X} G$ and
		$\chi\not\in G$,
	\end{itemize}
	and, if this is not the case,
	\begin{itemize}
		\item[(II)]
		whether there exists some tableau-cloud
		$\mathcal{F}_{m+1} \in\mathfrak{C}^X_\varphi \setminus 
		\{\mathcal{F}_0, \ldots,\mathcal{F}_m\}$
		with $\mathcal{F}_m \leq_{X} \mathcal{F}_{m+1}$
		such that\\
		--\quad there exists some $G \in \mathcal{F}_{m+1}$ with $F \preccurlyeq_{X} G$ and $\chi\not\in G$ and \\
		--\quad 
		$\mathit{alg}_{X} (\varphi, \mathcal{F}_0,\ldots,\mathcal{F}_m,\mathcal{F}_{m+1})$ returns ``yes''.		
	\end{itemize}
	If for every pair $(\Box\chi,F)\in \subf(\varphi) \times \mathcal{F}_m$
	with $\Box \chi \not\in F$
	Condition (I) or Condition (II) is satisfied 
	then $\mathit{alg}_{X}(\varphi, \mathcal{F}_0,\ldots,\mathcal{F}_m)$
	returns ``yes'', otherwise it returns ``no''.	
	This ends the description of the algorithm $\mathit{alg}_{X}(\varphi, \mathcal{F}_0,\ldots,\mathcal{F}_m)$.
\end{definition}

We show that its works correctly, for each $X\in\{\kxs, \sxs, \ssl\}$.
 
 \begin{proposition}
	\label{prop: procedures correct}
	Let $X\in\{\kxs, \sxs, \ssl\}$. 
	Let $\varphi$ be a bimodal formula.
	Let $(\mathcal{F}_0,\ldots,\mathcal{F}_m)$ for some $m\geq 0$ be a sequence
	of pairwise different tableau-clouds with respect to $\varphi$ satisfying
	$\mathcal{F}_i \leq_X \mathcal{F}_{i+1}$, for $i<m$.
	Then $\mathit{alg}_{X}(\varphi, \mathcal{F}_0,\ldots,\mathcal{F}_m)$ returns
	``yes'' if, and only if, there exists a partial $X$-tableau for $(\varphi,\mathcal{F}_0,\ldots,\mathcal{F}_m)$.
\end{proposition}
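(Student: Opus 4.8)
The plan is to prove both directions of the equivalence by exploiting the recursive structure of the algorithm and the definition of partial tableaux. I would proceed by induction, but the natural induction parameter is not immediately obvious, so establishing the right measure is the first conceptual task.

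\textbf{The forward direction (``yes'' implies existence of a partial tableau).}
First I would assume that $\mathit{alg}_{X}(\varphi, \mathcal{F}_0,\ldots,\mathcal{F}_m)$ returns ``yes''. The idea is to unfold the recursion and collect all tableau-clouds that appear in the successful computation tree into one set $\mathfrak{T}$. Concretely, for every pair $(\Box\chi,F)$ with $\Box\chi\notin F$ that is handled by Condition~(I), no new cloud is needed. For every pair handled by Condition~(II), the algorithm found some successor $\mathcal{F}_{m+1}$ such that the recursive call $\mathit{alg}_{X}(\varphi,\mathcal{F}_0,\ldots,\mathcal{F}_m,\mathcal{F}_{m+1})$ also returned ``yes''; by an inductive hypothesis this recursive call yields a partial tableau for the longer sequence. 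I would take $\mathfrak{T}$ to be the union of $\{\mathcal{F}_0,\ldots,\mathcal{F}_m\}$ together with all the partial tableaux produced by the finitely many successful recursive calls. The verification that this $\mathfrak{T}$ satisfies Condition~2 of Definition~\ref{def: partial tableau} would then split according to whether a given cloud $\mathcal{F}\in\mathfrak{T}$ equals $\mathcal{F}_m$ (witnessed directly by Condition~(I) or~(II) at this level) or lies in one of the sub-tableaux (witnessed by the inductive hypothesis). The subtle point is that clouds $\mathcal{F}_0,\ldots,\mathcal{F}_{m-1}$ are explicitly excluded from the obligation in Definition~\ref{def: partial tableau}.2, which matches the fact that the algorithm only checks the obligations for the \emph{last} cloud $\mathcal{F}_m$; this alignment is exactly what makes the ``partial'' in partial tableau work with the backwards-loop design.

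\textbf{The reverse direction (existence of a partial tableau implies ``yes'').}
Here I would assume a partial $X$-tableau $\mathfrak{T}$ for $(\varphi,\mathcal{F}_0,\ldots,\mathcal{F}_m)$ exists and argue that the algorithm returns ``yes''. Given a pair $(\Box\chi,F)$ with $\Box\chi\notin F$ and $F\in\mathcal{F}_m$, Definition~\ref{def: partial tableau}.2 supplies some $\mathcal{G}\in\mathfrak{T}$ with $\mathcal{F}_m\leq_X\mathcal{G}$ and some $G\in\mathcal{G}$ with $F\preccurlyeq_X G$ and $\chi\notin G$. If this $\mathcal{G}$ happens to be one of $\mathcal{F}_0,\ldots,\mathcal{F}_m$, then Condition~(I) of the algorithm is satisfied. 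Otherwise $\mathcal{G}$ is a genuinely new cloud, and I would argue that $\mathfrak{T}$ itself (or a suitable restriction of it, reorganized as a partial tableau for the extended sequence $(\varphi,\mathcal{F}_0,\ldots,\mathcal{F}_m,\mathcal{G})$) witnesses that the recursive call in Condition~(II) returns ``yes'' by the inductive hypothesis.

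\textbf{The main obstacle: termination and the induction measure.}
The hard part will be justifying the induction rigorously, since the recursion deepens the sequence and a priori there is no obvious decreasing quantity. The key is that all clouds in a sequence are required to be \emph{pairwise different} elements of the finite set $\mathfrak{C}^X_\varphi$, and each recursive call strictly increases the length of the sequence, so the recursion depth is bounded by $|\mathfrak{C}^X_\varphi|$; this both guarantees termination and provides the decreasing measure (namely $|\mathfrak{C}^X_\varphi| - m$, the number of clouds not yet used) for a downward induction. A second delicate point is the reverse direction: I must check that restricting a partial tableau for the shorter sequence to serve as a partial tableau for the extended sequence $(\varphi,\mathcal{F}_0,\ldots,\mathcal{F}_m,\mathcal{G})$ actually preserves Condition~2 of Definition~\ref{def: partial tableau}. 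The newly appended cloud $\mathcal{G}$ now falls under the obligation (since only clouds up to index $m$ of the new sequence are exempt), so I would need to verify that $\mathfrak{T}$ already contains all successors required to discharge $\mathcal{G}$'s $\Box$-obligations. This is precisely why the partial tableau is defined over the whole set $\mathfrak{C}^X_\varphi$ rather than over a single sequence, and matching these two bookkeeping conventions is where the real care is required.
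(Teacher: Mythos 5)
Your proposal is correct and takes essentially the same route as the paper: the paper also argues both directions by induction on a decreasing measure (it uses $|S(\mathcal{F}_0,\ldots,\mathcal{F}_m)|$, the number of unused $\leq_X$-successors of $\mathcal{F}_m$, which is interchangeable with your $|\mathfrak{C}^X_\varphi|-m$), and in the ``yes''-direction builds $\mathfrak{T}$ as the union of the partial tableaux from the successful recursive calls, splitting the verification by $\mathcal{F}=\mathcal{F}_m$ versus $\mathcal{F}$ in a sub-tableau, exactly as you sketch. The ``delicate point'' you flag in the reverse direction resolves just as you anticipate: since Condition (I) fails, $\mathcal{G}\notin\{\mathcal{F}_0,\ldots,\mathcal{F}_m\}$, so Condition 2 of Definition~\ref{def: partial tableau} already applied to $\mathcal{G}$ within $\mathfrak{T}$, and extending the sequence only enlarges the exempt prefix, so $\mathfrak{T}$ itself (no restriction or reorganization needed) is a partial tableau for $(\varphi,\mathcal{F}_0,\ldots,\mathcal{F}_m,\mathcal{G})$.
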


\begin{proof}
		We show each direction of this equivalence by induction over the cardinality of the following set
		\[ S(\mathcal{F}_0,\ldots,\mathcal{F}_m):= \{\mathcal{G} \in \mathfrak{C}^X_\varphi
		\setminus\{\mathcal{F}_0, \ldots,\mathcal{F}_m\} 
		\mid \mathcal{F}_m \leq_X \mathcal{G}\}.\] 
		Note that this set is finite because $\mathfrak{C}^X_\varphi$ is a finite set.
		
		Let us first assume that there exists a partial $X$-tableau for $(\varphi,\mathcal{F}_0,\ldots,\mathcal{F}_m)$. 
		We claim that $\mathit{alg}_{X}(\varphi,\mathcal{F}_0,\ldots,\mathcal{F}_m)$ will return ``yes''.
		This is clear if there are no pairs $(\Box\chi,F)\in \subf(\varphi) \times \mathcal{F}_m$ with $\Box\chi \not\in F$,
		or if for all such pairs Condition (I) is true.
		So, let us consider the case when there are such pairs for which Condition (I) is not true.
		Let us fix a pair 
		$(\Box\chi,F)\in \subf(\varphi) \times \mathcal{F}_m$ with $\Box \chi \not\in F$
		such that (I) is not true for this pair. We claim that (II) is true for this pair.
		
		Consider a partial $X$-tableau $\mathfrak{T}$ for $(\varphi,\mathcal{F}_0,\ldots,\mathcal{F}_m)$.
		Due to $\Box \chi \in \subf(\varphi) \setminus F$ and $F\in\mathcal{F}_m$
		and due to the second condition in Definition~\ref{def: partial tableau}
		there exists an element $\mathcal{G} \in \mathfrak{T}$ with $\mathcal{F}_m \leq_X \mathcal{G}$
		such that there exists some $G \in \mathcal{G}$ with $F \preccurlyeq_X G$ and $\chi\not\in G$.
		The set $\mathcal{F}_{m+1} := \mathcal{G}$ is an $X$-tableau-cloud with 
		$\mathcal{F}_m \leq_{X} \mathcal{F}_{m+1}$, with 
		$G \in \mathcal{F}_{m+1}$, with $F\preccurlyeq_{X} G$, and with $\chi\not\in G$. 
		Furthermore, as (I) is not true for the pair $(\Box \chi, F)$, we have
		$\mathcal{F}_{m+1} \not\in \{\mathcal{F}_0,\ldots,\mathcal{F}_m)$.
		This shows that $\mathcal{F}_0,\ldots,\mathcal{F}_m,\mathcal{F}_{m+1}$ are pairwise different.
		Thus, $\mathfrak{T}$ is a partial $X$-tableau for $(\varphi,\mathcal{F}_0,\ldots,\mathcal{F}_{m+1})$.
		Due to 
		$\mathcal{F}_{m+1} \not\in \{\mathcal{F}_0, \ldots,\mathcal{F}_m\}$, the set
		$S(\mathcal{F}_0,\ldots,\mathcal{F}_m,\mathcal{F}_{m+1})$ contains
		strictly less elements than the set $S(\mathcal{F}_0,\ldots,\mathcal{F}_m)$.
		Hence, the algorithm	$\mathit{alg}_{X}(\varphi,\mathcal{F}_0,\ldots,\mathcal{F}_m,\mathcal{F}_{m+1})$
		returns ``yes'' by induction hypothesis and hence, (II) is true.
		This ends our proof by induction of the claim that
		if a partial $X$-tableau for $(\varphi,\mathcal{F}_0,\ldots,\mathcal{F}_m)$ exists
		then $\mathit{alg}_{X}(\varphi,\mathcal{F}_0,\ldots,\mathcal{F}_m)$ will return ``yes''.
		
		For the other direction, let us assume that
		$\mathit{alg}_{X}(\varphi,\mathcal{F}_0,\ldots,\mathcal{F}_m)$ returns ``yes''.
		In the following we will construct a partial $X$-tableau $\mathfrak{T}$
		for $(\varphi,\mathcal{F}_0,\ldots,\mathcal{F}_m)$.
		Let $\mathrm{Pairs}$ be the set of all pairs $(\Box\chi,F)\in \subf(\varphi) \times \mathcal{F}_m$ with 
		$\Box \chi \not\in F$. 
		As by assumption the algorithm
		$\mathit{alg}_{X}(\varphi,\mathcal{F}_0,\ldots,\mathcal{F}_m)$ returns ``yes''
		the set $\mathrm{Pairs}$ is the disjoint union of the sets
		$\mathrm{Pairs}_{I,0}, \ldots, \mathrm{Pairs}_{I,m}$, $\mathrm{Pairs}_{II}$, where
		\begin{itemize}
			\item
			$\mathrm{Pairs}_{I,i}$, for $i\in\{0,\ldots,m\}$, is the set of all pairs $(\Box\chi,F) \in \mathrm{Pairs}$
			such that (I) is satisfied and $i$ is the smallest number in $\{0,\ldots,m\}$ such that
			$\mathcal{F}_m \leq_{X}\mathcal{F}_i$
			and such that there exists some $G \in \mathcal{F}_i$ with 
			$F \preccurlyeq_{X} G$ and $\chi\not\in G$,
			\item
			$\mathrm{Pairs}_{II}$ is the set of all pairs in $\mathrm{Pairs}$ such that (I) is not satisfied but (II) is.
		\end{itemize}
		Let $k$ be the number of pairs in $\mathrm{Pairs}_{II}$, and let 
		$(\Box\chi_j,F_j)$ for $j=0,\ldots,k-1$ be the elements of $\mathrm{Pairs}_{II}$.
		For each $j\in\{0,\ldots,k-1\}$ there exists a tableau-cloud
		$\mathcal{F}_{m+1}^{(j)}\in\mathfrak{C}_\varphi \setminus \{\mathcal{F}_0, \ldots,\mathcal{F}_m\}$
		with $\mathcal{F}_m \leq_{X} \mathcal{F}_{m+1}^{(j)}$
		such that there exists some $G \in \mathcal{F}_{m+1}^{(j)}$ with 
		$F_j\preccurlyeq_{X} G$ and $\chi_j\not\in G$ and such that 
		$\mathit{alg}_{X}	(\varphi, \mathcal{F}_0,\ldots,\mathcal{F}_m,\mathcal{F}_{m+1}^{(j)})$
		returns ``yes''.
		Furthermore, the set $S(\mathcal{F}_0,\ldots,\mathcal{F}_m,\mathcal{F}_{m+1}^{(j)})$
		contains less elements than the set $S(\mathcal{F}_0,\ldots,\mathcal{F}_m)$, 
		due to $\mathcal{F}_{m+1}^{(j)} \not\in \{\mathcal{F}_0, \ldots,\mathcal{F}_m\}$.			 
		Hence, by induction hypothesis, there exists a partial $X$-tableau $\mathfrak{T}^{(j)}$
		for the sequence  $(\varphi,\mathcal{F}_0,\ldots,\mathcal{F}_m,\mathcal{F}_{m+1}^{(j)})$.
		We define
		\[ \mathfrak{T}:= \bigcup_{j=0}^{k-1} \mathfrak{T}^{(j)} . \]
		We claim that $\mathfrak{T}$ is a partial $X$-tableau for 
		$(\varphi,\mathcal{F}_0,\ldots,\mathcal{F}_m)$.
		
		Indeed, it is clear that $\{\mathcal{F}_0,\ldots,\mathcal{F}_m\} \subseteq \mathfrak{T}$
		because $\{\mathcal{F}_0,\ldots,\mathcal{F}_m\} \subseteq \mathfrak{T}^{(j)}$ even for every $j<k$.
		Let us consider some
		$\mathcal{F} \in \mathfrak{T} \setminus \{\mathcal{F}_0,\ldots,\mathcal{F}_{m-1}\}$,
		some $F \in \mathcal{F}$,
		and some formula $\Box \chi \in \subf(\varphi) \setminus F$.
		We wish to show that there exists some $\mathcal{G} \in \mathfrak{T}$ such that 
		$\mathcal{F} \leq_X \mathcal{G}$ and such that there exists some $G \in \mathcal{G}$ with
		$F\preccurlyeq_{X} G$ and $\chi \not\in G$.
		We distinguish the following two cases.
		\begin{enumerate}
			\item
			$\mathcal{F} \neq \mathcal{F}_m$. 
			Then there exists a $j\in\{0,\ldots,k-1\}$ with
			$\mathcal{F} \in \mathfrak{T}^{(j)} \setminus \{\mathcal{F}_0,\ldots,\mathcal{F}_m\}$.
			As $\mathfrak{T}^{(j)}$ is a partial $X$-tableau for
			$(\varphi,\mathcal{F}_0,\ldots,\mathcal{F}_m,\mathcal{F}_{m+1}^{(j)})$
			there exists an $X$-tableau-cloud $\mathcal{G} \in \mathfrak{T}^{(j)}$ such that 
			$\mathcal{F} \leq_X \mathcal{G}$ and such that there exists some $G \in \mathcal{G}$ with
		   $F\preccurlyeq_{X} G$ and $\chi \not\in G$.
		   As $\mathfrak{T}^{(j)}$ is a subset of $\mathfrak{T}$ we are done.
		   \item
			$\mathcal{F}=\mathcal{F}_m$.
			Then $(\Box\chi,F) \in \mathrm{Pairs}$. Either there exists a unique $i\in\{0,\ldots,m\}$ with
			$(\Box\chi,F)\in \mathrm{Pairs}_{I,i}$ or 
			$(\Box\chi,F)\in \mathrm{Pairs}_{II}$.
			
			In the first case $\mathcal{F}_m \leq_{X} \mathcal{F}_i$
			and there exists some $G \in \mathcal{F}_i$ with 
			$F\preccurlyeq_{X} G$ and
			$\chi\not\in G$. In this case we set $\mathcal{G}:=\mathcal{F}_i$.
			
			In the second case there exists a number $j\in\{0,\ldots,k-1\}$ with
			$(\Box\chi,F) = (\Box\chi_j,F_j)$.
			Then $\mathcal{F}=\mathcal{F}_m \leq_X \mathcal{F}^{(j)}_{m+1}$,
			and there exists an $X$-tableau-set $G\in \mathcal{F}^{(j)}_{m+1}$ with
			$F \preccurlyeq_X G$ and with $\chi \not\in G$.
			In this case we set $\mathcal{G}:=\mathcal{F}^{(j)}_{m+1}$.
	\end{enumerate}
	This shows that the procedure $\mathit{alg}_{X}$ is correct.
\end{proof}

Now, with the procedures $\mathit{alg}_{\kxs}$, $\mathit{alg}_{\sxs}$, and $\mathit{alg}_{\ssl}$ at hand
we can present tableau algorithms $\mathit{ALG}_{\kxs}$, $\mathit{ALG}_{\sxs}$,
and $\mathit{ALG}_{\ssl}$ for the logics under consideration.

\begin{definition}[Tableau Algorithms $\mathit{ALG}_{\kxs}$, $\mathit{ALG}_{\sxs}$, and $\mathit{ALG}_{\ssl}$]
	\label{def: ALG}
	\mbox{\ } \\
	Let $X\in\{\kxs, \sxs, \ssl\}$.
	Given a bimodal formula $\varphi$ the algorithm $\mathit{ALG}_{X}(\varphi)$
	lets $\mathcal{F}_0$ run through all $X$-tableau-clouds $\mathcal{F}_0\in\mathfrak{C}^X_\varphi$
	such that there exists some $F\in\mathcal{F}_0$
	with $\varphi\in F$ and applies $\mathit{alg}_{X}$ to 
	$(\varphi, \mathcal{F}_0)$.
	It accepts $\varphi$ 
	iff $\mathit{alg}_{X}(\varphi,\mathcal{F}_0)$ returns ``yes''
	for at least one such pair $(\varphi,\mathcal{F}_0)$.
\end{definition}

\begin{proposition}\label{prop: ALG correct}
Let $X\in\{\kxs, \sxs, \ssl\}$. The algorithm $\mathit{ALG}_X$
accepts a bimodal formula $\varphi$ if and only if $\varphi$ is $X$-satisfiable.
\end{proposition}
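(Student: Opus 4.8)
The plan is simply to chain together the two propositions already established in this section. The statement is an equivalence, and both of its directions will follow at once from a single equivalence-preserving substitution, so I would not argue the two directions separately but rather exhibit a chain of ``iff''s.

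First I would unfold the definition of $\mathit{ALG}_X$ (Definition~\ref{def: ALG}). By that definition, $\mathit{ALG}_X(\varphi)$ accepts $\varphi$ if and only if there exists an $X$-tableau-cloud $\mathcal{F}_0\in\mathfrak{C}^X_\varphi$ having some $F\in\mathcal{F}_0$ with $\varphi\in F$ such that $\mathit{alg}_X(\varphi,\mathcal{F}_0)$ returns ``yes''. The one-point sequence $(\mathcal{F}_0)$ is a legal input to $\mathit{alg}_X$ (it is the case $m=0$, where the side condition $\mathcal{F}_i\leq_X\mathcal{F}_{i+1}$ for $i<m$ is vacuous), so Proposition~\ref{prop: procedures correct} applies to it: $\mathit{alg}_X(\varphi,\mathcal{F}_0)$ returns ``yes'' if and only if there exists a partial $X$-tableau for $(\varphi,\mathcal{F}_0)$. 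Substituting this equivalence under the existential quantifier over $\mathcal{F}_0$, I obtain that $\mathit{ALG}_X(\varphi)$ accepts $\varphi$ if and only if there is an $X$-tableau-cloud $\mathcal{F}_0$ with some $F\in\mathcal{F}_0$ satisfying $\varphi\in F$ and a partial $X$-tableau for $(\varphi,\mathcal{F}_0)$.

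This last condition is verbatim condition~2 of Proposition~\ref{prop: sat equiv tablau}, which is equivalent to condition~1 there, namely that $\varphi$ is $X$-satisfiable. Concatenating the three equivalences yields that $\mathit{ALG}_X$ accepts $\varphi$ exactly when $\varphi$ is $X$-satisfiable, which is the claim.

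I expect no genuine obstacle here, since all the substantive work has been carried out in Propositions~\ref{prop: procedures correct} and~\ref{prop: sat equiv tablau}. The only point requiring a little care is to check that the quantification over $\mathcal{F}_0$ in Definition~\ref{def: ALG} (range over clouds containing a set $F$ with $\varphi\in F$) matches syntactically the existential in condition~2 of Proposition~\ref{prop: sat equiv tablau}, so that the substitution of the ``$\mathit{alg}_X$''-equivalence under that quantifier is justified; this is immediate from the definitions.
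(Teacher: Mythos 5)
Your proof is correct and follows exactly the paper's own argument: unfold Definition~\ref{def: ALG}, apply Proposition~\ref{prop: procedures correct} to the one-point sequence $(\mathcal{F}_0)$, and then invoke Proposition~\ref{prop: sat equiv tablau} to conclude. The extra care you take in noting that the $m=0$ side condition is vacuous and that the existential quantifiers match is a slight sharpening of what the paper leaves implicit, but the route is identical.
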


\begin{proof}
	Let $X\in\{\kxs, \sxs, \ssl\}$.
	Let $\varphi$ be a bimodal formula.
	The algorithm $\mathit{ALG}_{X}$ accepts $\varphi$ by definition if, and only if, there exists an $X$-tableau-cloud $\mathcal{F}_0\in\mathfrak{C}^X_\varphi$ such that $\varphi\in F$ for some $F\in\mathcal{F}_0$ and such that $\mathit{alg}_{X}(\varphi,\mathcal{F}_0)$ returns ``yes''. According to Proposition~\ref{prop: procedures correct}  $\mathit{alg}_{X}(\varphi,\mathcal{F}_0)$ returns ``yes'' if, and only if, there exists a partial tableau for $(\varphi,\mathcal{F}_0)$.
	According to Proposition~\ref{prop: sat equiv tablau} there exists a tableau-cloud
	$\mathcal{F}_0\in\mathfrak{C}^X_\varphi$
	such that there exist a set $F\in\mathcal{F}_0$ with $\varphi\in F$ and
	a partial $X$-tableau for $(\varphi,\mathcal{F}_0)$
	if, and only if, $\varphi$ is $X$-satisfiable.
\end{proof}	

Let us point out that, whenever the algorithm $\mathit{ALG}_X(\varphi)$ makes a call
$\mathit{alg}_{X}(\varphi, \mathcal{F}_0,\ldots,\mathcal{F}_m)$
for some bimodal formula $\varphi$ and some finite sequence
$\mathcal{F}_0,\ldots,\mathcal{F}_m$ of $X$-tableau-sets, then all of these
$X$-tableau-sets are pairwise different.

\section[The Space Used by the Algorithms]{Upper Bounds for the Space Used by the Algorithms}
\label{section:upper-bound-alg}

It is the purpose of this section to prove the following proposition.

\begin{proposition}
\label{prop:upperestimate}
	Let $X \in \{\kxs, \sxs, \ssl\}$.
	The algorithm $\mathit{ALG}_X$ can be implemented on a multi-tape Turing machine
	so that it, given a bimodal formula $\varphi$ of length $n$, 
	does not use more than $O(n \cdot (n+|\mathcal{T}^X_\varphi|)^3)$ space.
\end{proposition}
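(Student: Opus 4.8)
The plan is to analyze the recursion tree of $\mathit{alg}_X$ and bound both the depth of the recursion and the space consumed by a single stack frame. The key structural fact is that each call $\mathit{alg}_X(\varphi,\mathcal{F}_0,\ldots,\mathcal{F}_m)$ spawns recursive calls of the form $\mathit{alg}_X(\varphi,\mathcal{F}_0,\ldots,\mathcal{F}_m,\mathcal{F}_{m+1})$ where the sequence grows by exactly one pairwise-different tableau-cloud $\mathcal{F}_{m+1}$, and crucially $\mathcal{F}_m \leq_X \mathcal{F}_{m+1}$ while $\mathcal{F}_{m+1}\neq \mathcal{F}_i$ for all $i\le m$. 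Since a deterministic recursive procedure can be implemented on a Turing machine so that at any moment only the frames along a single root-to-node path of the recursion tree are stored on the stack, the total space is at most (recursion depth) $\times$ (space per frame) plus some working space. So I would first bound the recursion depth and then the per-frame cost.

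The main obstacle, and the heart of the argument, is bounding the recursion depth. A naive bound is $|\mathfrak{C}^X_\varphi|$, since the clouds in the sequence are pairwise different; but $|\mathfrak{C}^X_\varphi|$ can be doubly exponential in $n$, which is far too large. The point of Subsection~\ref{subsection:maximum-chain-length} is precisely to do better. I would argue that each successive cloud strictly increases along $\leq_X$ in the following sense: when $\mathit{alg}_X$ recurses on $\mathcal{F}_{m+1}$, Condition (I) failed for the triggering pair, which means no earlier $\mathcal{F}_i$ (with $i\le m$) can serve as the required witness; together with $\mathcal{F}_m\le_X\mathcal{F}_{m+1}$ this should yield $\mathcal{F}_m <_X \mathcal{F}_{m+1}$, so the sequence of clouds forms a strict $<_X$-chain in $\mathcal{P}(\mathcal{T}^X_\varphi)$. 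Then the recursion depth is bounded by the maximum chain length $\mathrm{mcl}(\leq_X)$ of the relation $\leq_X$ on $\mathcal{P}(\mathcal{T}^X_\varphi)$. Now $\leq_X$ is the relation ``$A\leq' B$ iff every $b\in B$ has some $a\in A$ with $a\preccurlyeq_X b$'' induced on the power set of $\mathcal{T}^X_\varphi$ by the transitive relation $\preccurlyeq_X$, which is exactly the setting of Proposition~\ref{prop:mcl3}. That proposition gives $\mathrm{mcl}(\leq_X)\le 2\cdot|\mathcal{T}^X_\varphi|$, so the recursion depth is $O(|\mathcal{T}^X_\varphi|)$.

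Having fixed the depth at $O(|\mathcal{T}^X_\varphi|)$, I would bound the space of a single stack frame. A frame for $\mathit{alg}_X(\varphi,\mathcal{F}_0,\ldots,\mathcal{F}_m)$ must store $\varphi$ (length $n$) and the sequence of clouds; each cloud $\mathcal{F}_i\subseteq\mathcal{T}^X_\varphi$ is a subset of $\mathcal{T}^X_\varphi$ and can be encoded in $O(|\mathcal{T}^X_\varphi|)$ bits (one bit per tableau-set), where each tableau-set in turn is a subset of $\subf(\varphi)$ encodable in $O(n)$ bits. However, since all frames on the stack share the common prefix of clouds, one does not store the whole sequence in every frame; rather the total stack content across all $O(|\mathcal{T}^X_\varphi|)$ frames consists of the clouds $\mathcal{F}_0,\ldots,\mathcal{F}_d$ along the current path, where $d=O(|\mathcal{T}^X_\varphi|)$, each cloud costing $O(n\cdot|\mathcal{T}^X_\varphi|)$ to write down. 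In addition, while executing the body of a single frame the machine searches over candidate clouds $\mathcal{F}_{m+1}\in\mathfrak{C}^X_\varphi$ and over pairs $(\Box\chi,F)$; checking $\mathcal{F}_m\leq_X\mathcal{F}_{m+1}$, membership $G\in\mathcal{F}_{m+1}$, $F\preccurlyeq_X G$, and $\chi\notin G$ each require enumerating tableau-sets and testing the defining conditions of tableau-sets and tableau-clouds, all of which fit in $O(n\cdot|\mathcal{T}^X_\varphi|)$ working space (one candidate cloud at a time, never enumerating all of $\mathfrak{C}^X_\varphi$ explicitly).

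Putting the pieces together, the total space is bounded by the number of frames times the per-frame and shared-path cost: the stack holds $O(|\mathcal{T}^X_\varphi|)$ clouds each of size $O(n\cdot|\mathcal{T}^X_\varphi|)$, giving $O(n\cdot|\mathcal{T}^X_\varphi|^2)$, plus the per-frame working space for the candidate search, which contributes another factor bounded by $O(n\cdot|\mathcal{T}^X_\varphi|)$ per active frame. Accounting for the need to keep the necessary index data and the $O(n)$ cost of handling $\subf(\varphi)$, a careful tally yields the stated bound $O(n\cdot(n+|\mathcal{T}^X_\varphi|)^3)$; the extra additive $n$ and the cube absorb the input size, the encoding overhead of subformulas, and the coarseness of the chain-length estimate. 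The genuinely delicate step is the depth bound: I would make sure to verify rigorously that the failure of Condition (I) forces $\mathcal{F}_m<_X\mathcal{F}_{m+1}$ (rather than merely $\mathcal{F}_m\le_X\mathcal{F}_{m+1}$ with possible $\leq_X$-equivalence), since it is exactly this strictness that lets Proposition~\ref{prop:mcl3} apply and collapses the otherwise doubly-exponential depth to something polynomial in $|\mathcal{T}^X_\varphi|$.
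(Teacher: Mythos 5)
Your per-frame space accounting (shared stack prefix, $O(n+|\mathcal{T}^X_\varphi|)$ additional bits per call, enumerating one candidate cloud at a time rather than all of $\mathfrak{C}^X_\varphi$) matches the paper's proof, but the step you yourself flagged as delicate is a genuine gap: failure of Condition~(I) does \emph{not} force $\mathcal{F}_m <_X \mathcal{F}_{m+1}$. Condition~(I) fails when no earlier cloud $\mathcal{F}_i$ with $\mathcal{F}_m \leq_X \mathcal{F}_i$ \emph{contains} a tableau-set $G$ with $F \preccurlyeq_X G$ and $\chi \notin G$; this is a statement about membership of a specific witness set, not about the $\leq_X$-order. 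Since $\equiv_X$ is much coarser than equality of clouds --- e.g.\ $\{F_1,F_2\}$ and $\{F_1,F_2,F_3\}$ with $F_1 \preccurlyeq_X F_3 \preccurlyeq_X F_2$ are distinct but $\equiv_X$-equivalent --- the new cloud $\mathcal{F}_{m+1}$ can perfectly well satisfy $\mathcal{F}_{m+1} \equiv_X \mathcal{F}_m$ while containing a witness $G$ that no earlier cloud on the path contains. So the recursion path is only a $\leq_X$-chain with possibly long $\equiv_X$-plateaus, and your depth bound $\mathrm{mcl}(\leq_X) = O(|\mathcal{T}^X_\varphi|)$ does not follow. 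A telltale sign that something is off: your depth would give total space roughly $O\bigl(n \cdot |\mathcal{T}^X_\varphi| \cdot (n+|\mathcal{T}^X_\varphi|)\bigr)$, strictly \emph{better} than the cubic bound in the statement; the cube is there precisely because the true depth bound is quadratic in $|\mathcal{T}^X_\varphi|$. A secondary inaccuracy: your identification of $\leq_X$ with the induced relation $\preccurlyeq_X'$ of Proposition~\ref{prop:mcl3} is correct only for $X=\ssl$; for $X\in\{\kxs,\sxs\}$ the definition has a second clause, and the paper handles this by writing $\leq_X$ as an intersection and invoking Lemmas~\ref{lemma:mcl1} and~\ref{lemma:mcl2}, which yields $\mathrm{mcl}(\leq_X) \leq 4\cdot|\mathcal{T}^X_\varphi|$ (Corollary~\ref{corollary:mcl-s4s5-ssl}).

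The paper closes the gap in Proposition~\ref{prop:recursiondepth} by a two-level argument: decompose the call sequence into maximal blocks of pairwise $\equiv_X$-equivalent clouds (explicitly allowing the set of strict positions to be empty); the number of blocks is at most $\mathrm{mcl}(\leq_X)+1$ by Corollary~\ref{corollary:mcl-s4s5-ssl}, and \emph{within} a block the same triggering pair $(\Box\chi,F) \in \subf(\varphi)\times\mathcal{T}^X_\varphi$ cannot occur twice --- if it triggered calls at positions $m < \widetilde{m}$ in one block, the witness $G \in \mathcal{F}_{m+1}$ together with $\mathcal{F}_{\widetilde{m}} \leq_X \mathcal{F}_{m+1}$ would make Condition~(I) true at $\widetilde{m}$, a contradiction (here Lemma~\ref{lemma:transitive-classes-of-at-least-two-elements} is needed for the non-reflexive case $X=\kxs$ when $m+1=\widetilde{m}$). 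This bounds each block's length by $(n-1)\cdot|\mathcal{T}^X_\varphi|$ and the depth by $5\cdot n\cdot|\mathcal{T}^X_\varphi|^2$, which multiplied by the per-frame cost $O(n+|\mathcal{T}^X_\varphi|)$ yields exactly the claimed $O\bigl(n\cdot(n+|\mathcal{T}^X_\varphi|)^3\bigr)$. In short: your frame analysis is sound, but the depth analysis needs the plateau-counting argument with the pairs $(\Box\chi,F)$ as the progress measure inside $\equiv_X$-classes, not strictness of $<_X$ between consecutive clouds.
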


Before we prove this, let us deduce one of the assertions of Theorem~\ref{theorem:upperbound}

\begin{proof}[{Proof of Theorem~\ref{theorem:upperbound} in the case $X=\kxs$}]
We have presented an algorithm $\mathit{ALG}_{\kxs}$ that,
according to Proposition~\ref{prop: ALG correct}, accepts a bimodal formula $\varphi$ if, and only if, $\varphi$ is
$\kxs$-satisfiable.
Let $n$ be the length of $\varphi$.
There are at most $n$ subformulas of $\varphi$. Hence, $|\mathcal{T}^X_\varphi|\leq 2^n$.
By Proposition~\ref{prop:upperestimate} the algorithm $\mathit{ALG}_{\kxs}$
can be implemented in such a way that it works in space $O(n \cdot 2^{3\cdot n})$.
\end{proof}

In Section~\ref{section:NumberTableau-sets}, for $X \in \{\sxs,\ssl\}$
we shall give a better upper bound for $|\mathcal{T}^X_\varphi|$ than $2^n$.

Let $X\in\{\kxs,\sxs,\ssl\}$.
The algorithm $\mathit{ALG}_X$
calls the recursive procedure $\mathit{alg}_X$.
It is clear that the space used by these algorithms is heavily influenced by the
recursion depth of calls $\mathit{alg}_X(\varphi,\mathcal{F}_0,\ldots,\mathcal{F}_m)$
that occur during the execution of $\mathit{ALG}_X(\varphi)$.
Therefore, first we plan to give upper bounds for the recursion depth of these algorithms.
As a first step for this we will give
upper bounds for the maximum chain length of the transitive relation $\leq_X$ 
on $\mathcal{P}(\mathcal{T}^X_\varphi)$,
for any bimodal formula $\varphi$.

\begin{corollary}
\label{corollary:mcl-s4s5-ssl}
Let $X \in\{\kxs, \sxs, \ssl\}$, and let $\varphi$ be a bimodal formula.
Then for the relation $\leq_X$ on $\mathcal{P}(\mathcal{T}^X_\varphi)$
the following estimate is true.
\begin{enumerate}
	\item
	$\mathrm{mcl}(\leq_X) \leq 4 \cdot |\mathcal{T}^X_\varphi|$, if $X\in\{\kxs, \sxs\}$.
	\item
	$\mathrm{mcl}(\leq_{\SSL}) \leq 2 \cdot |\mathcal{T}^{\ssl}_\varphi|$.
\end{enumerate}
\end{corollary}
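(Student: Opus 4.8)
The plan is to derive both estimates from Proposition~\ref{prop:mcl3}, applied to the relation $\preccurlyeq_X$ on the finite set $S:=\mathcal{T}^X_\varphi$, supplemented by Lemma~\ref{lemma:mcl1} (inversion preserves the maximum chain length) and Lemma~\ref{lemma:mcl2} (the maximum chain length of an intersection is at most the sum of the two). Recall that $\preccurlyeq_X$ is transitive for every $X$ (and a preorder when $X\in\{\sxs,\ssl\}$), so Proposition~\ref{prop:mcl3} is applicable with $\leq\,:=\,\preccurlyeq_X$. The case $X=\ssl$ is then immediate: clause (b) of Definition~\ref{def: sequence of tableau-clouds}.2 defines $\leq_\ssl$ by $\mathcal{F}\leq_\ssl\mathcal{G}\iff(\forall G\in\mathcal{G})(\exists F\in\mathcal{F})\ F\preccurlyeq_\ssl G$, which is verbatim the relation $(\preccurlyeq_\ssl)'$ from Proposition~\ref{prop:mcl3}. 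Hence that proposition gives $\mathrm{mcl}(\leq_\ssl)\leq 2\cdot|\mathcal{T}^\ssl_\varphi|$, which is assertion~2.

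For $X\in\{\kxs,\sxs\}$ the relation $\leq_X$ is, by clause (a) of Definition~\ref{def: sequence of tableau-clouds}.2, the conjunction of its two subconditions (i) and (ii); writing $\leq_X^{(1)}$ and $\leq_X^{(2)}$ for the relations defined by (i) and (ii) respectively, we have $\leq_X=\leq_X^{(1)}\cap\leq_X^{(2)}$. Subcondition (i) is again exactly the relation $(\preccurlyeq_X)'$, so Proposition~\ref{prop:mcl3} yields $\mathrm{mcl}(\leq_X^{(1)})\leq 2\cdot|\mathcal{T}^X_\varphi|$. The one step that needs care is subcondition (ii), $\mathcal{F}\leq_X^{(2)}\mathcal{G}\iff(\forall F\in\mathcal{F})(\exists G\in\mathcal{G})\ F\preccurlyeq_X G$: here the universally quantified set is $\mathcal{F}$ rather than $\mathcal{G}$, so I would pass to the inverse relation $(\preccurlyeq_X)^{-1}$ (transitive by Lemma~\ref{lemma:mcl1}). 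Unwinding the quantifiers shows that $\mathcal{F}\leq_X^{(2)}\mathcal{G}$ holds if and only if $\mathcal{G}\,\bigl((\preccurlyeq_X)^{-1}\bigr)'\,\mathcal{F}$; that is, $\leq_X^{(2)}$ is the inverse of the relation $\bigl((\preccurlyeq_X)^{-1}\bigr)'$ produced by Proposition~\ref{prop:mcl3}. Applying that proposition to $(\preccurlyeq_X)^{-1}$ bounds $\mathrm{mcl}\bigl(((\preccurlyeq_X)^{-1})'\bigr)\leq 2\cdot|\mathcal{T}^X_\varphi|$, and Lemma~\ref{lemma:mcl1} then transfers this to $\mathrm{mcl}(\leq_X^{(2)})\leq 2\cdot|\mathcal{T}^X_\varphi|$.

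Finally, both $\leq_X^{(1)}$ and $\leq_X^{(2)}$ are transitive (the first directly from Proposition~\ref{prop:mcl3}, the second as the inverse of a transitive relation, via Lemma~\ref{lemma:mcl1}), so Lemma~\ref{lemma:mcl2} applied to $\leq_X=\leq_X^{(1)}\cap\leq_X^{(2)}$ gives $\mathrm{mcl}(\leq_X)\leq\mathrm{mcl}(\leq_X^{(1)})+\mathrm{mcl}(\leq_X^{(2)})\leq 4\cdot|\mathcal{T}^X_\varphi|$, proving assertion~1. I expect the only genuine obstacle to be the bookkeeping in subcondition (ii): getting the quantifier directions and the single inversion exactly right. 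Once that reformulation is in place, the corollary falls out of the three cited results with no further computation.
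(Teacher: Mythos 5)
Your proposal is correct and follows essentially the same route as the paper's own proof: both decompose $\leq_X$ (for $X\in\{\kxs,\sxs\}$) as the intersection of $(\preccurlyeq_X)'$ and the inverse of $\bigl((\preccurlyeq_X)^{-1}\bigr)'$, then combine Proposition~\ref{prop:mcl3} with Lemmas~\ref{lemma:mcl1} and~\ref{lemma:mcl2}, and identify $\leq_\ssl$ directly with $(\preccurlyeq_\ssl)'$. Your careful unwinding of the quantifiers in subcondition (ii) is exactly the observation the paper compresses into its remark that $\leq_X = \preccurlyeq_X' \cap\, (\succcurlyeq_X')^{-1}$.
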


\begin{proof}
For $X \in \{\kxs, \sxs\}$ the relation $\leq_X$ is equal to the intersection
of the relations $\preccurlyeq_X'$ and $(\succcurlyeq_X')^{-1}$
(where with $\succcurlyeq_X$ we mean the relation
$(\preccurlyeq_X)^{-1}$, and for a relation $\leq$ the relation $\leq'$ is defined as in Proposition~\ref{prop:mcl3}).
We obtain
\begin{align*}
   \mathrm{mcl}(\leq_X)
    & \leq  \mathrm{mcl}(\preccurlyeq_X') + \mathrm{mcl}((\succcurlyeq_X')^{-1})
           && (\text{by Lemma~\ref{lemma:mcl2}}) \\
    & = \mathrm{mcl}(\preccurlyeq_X') + \mathrm{mcl}(\succcurlyeq_X')
           && (\text{by Lemma~\ref{lemma:mcl1}}) \\
    & \leq 2 \cdot |\mathcal{T}^X_\varphi| + 2 \cdot |\mathcal{T}^X_\varphi| 
           && (\text{by Prop.~\ref{prop:mcl3}}) \\
    & = 4 \cdot |\mathcal{T}^X_\varphi| .    
\end{align*}
The relation $\leq_{\SSL}$ is equal to the relation $\preccurlyeq_{\SSL}'$.
Similarly as above we obtain
$\mathrm{mcl}(\leq_{\SSL}) \leq 2 \cdot |\mathcal{T}^{\ssl}_\varphi|$.
\end{proof}

Let $X\in\{\kxs, \sxs, \ssl\}$. The following proposition contains our estimate for the recursion depth
that can occur when $\mathit{ALG}_{X}(\varphi)$ calls
the recursive procedure $\mathit{alg}_X$.

\begin{proposition}
\label{prop:recursiondepth}
Let $X\in\{\kxs, \sxs, \ssl\}$.
Let $\varphi$ be a bimodal formula. Let $n$ be its length.
	If $(\mathcal{F}_0,\ldots,\mathcal{F}_l)$ for some $l\geq 0$ is a sequence of
	$X$-tableau-clouds with respect to $\varphi$ such that
	during the execution of $\mathit{ALG}_{X}(\varphi)$ 
	a call	$\mathit{alg}_{X}(\varphi,\mathcal{F}_0,\ldots,\mathcal{F}_l)$ occurs
	then $l< 5 \cdot n \cdot |\mathcal{T}^X_\varphi|^2$.
\end{proposition}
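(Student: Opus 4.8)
The plan is to analyze a single branch of the recursion tree produced by $\mathit{ALG}_X(\varphi)$. By the remark following Proposition~\ref{prop: ALG correct}, whenever a call $\mathit{alg}_X(\varphi,\mathcal{F}_0,\ldots,\mathcal{F}_l)$ occurs the clouds $\mathcal{F}_0,\ldots,\mathcal{F}_l$ are pairwise different, and by the guard $\mathcal{F}_m\leq_X\mathcal{F}_{m+1}$ in Condition (II) of Definition~\ref{def: alg_X} they satisfy $\mathcal{F}_i\leq_X\mathcal{F}_{i+1}$ for all $i<l$. Moreover, each transition from $\mathcal{F}_i$ to $\mathcal{F}_{i+1}$ was triggered by a specific pair $(\Box\chi_i,F_i)\in\subf(\varphi)\times\mathcal{F}_i$ with $\Box\chi_i\notin F_i$ for which Condition (I) failed while Condition (II) succeeded; in particular $\mathcal{F}_{i+1}$ then contains a set $G_{i+1}$ with $F_i\preccurlyeq_X G_{i+1}$ and $\chi_i\notin G_{i+1}$. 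I would bound $l$ by separately controlling how often the branch can move \emph{within} a single $\equiv_X$-class and how many $\equiv_X$-classes it can meet.

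First I would group the clouds by their $\equiv_X$-class. Since $\leq_X$ is transitive, $\mathcal{F}_i\leq_X\mathcal{F}_j$ for all $i\leq j$; hence if $\mathcal{F}_i\equiv_X\mathcal{F}_j$ with $i<j$ then $\mathcal{F}_i\leq_X\mathcal{F}_k\leq_X\mathcal{F}_j\leq_X\mathcal{F}_i$ for every intermediate $k$, so $\mathcal{F}_k\equiv_X\mathcal{F}_i$ as well (a ``no return'' property). Thus each $\equiv_X$-class occupies a contiguous block of the sequence, and choosing one representative per block yields a sequence that is strictly $<_X$-increasing, the strictness coming from the fact that consecutive blocks lie in different classes. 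By Lemma~\ref{lemma:transitive} the relation $<_X$ is transitive, so this is a $<_X$-chain; hence the number of distinct classes met is at most $\mathrm{mcl}(\leq_X)+1$, which by Corollary~\ref{corollary:mcl-s4s5-ssl} is at most $4\cdot|\mathcal{T}^X_\varphi|+1$.

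The crux is to bound the number of transitions inside one block, and here I claim that the pairs $(\Box\chi_i,F_i)$ labeling the within-block transitions are pairwise distinct. Suppose, for contradiction, that a pair $(\Box\chi,F)$ labels two transitions $i<i'$ inside one class $\mathcal{Q}$. At step $i$ the cloud $\mathcal{F}_{i+1}\in\mathcal{Q}$ already contains a witness $G$ with $F\preccurlyeq_X G$ and $\chi\notin G$. At the later step $i'$ we have $F\in\mathcal{F}_{i'}$, and since $\mathcal{F}_{i'}$ and $\mathcal{F}_{i+1}$ lie in the same class, $\mathcal{F}_{i'}\leq_X\mathcal{F}_{i+1}$ with $i+1\leq i'$. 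Therefore Condition (I) of Definition~\ref{def: alg_X} would already be satisfied for $(\Box\chi,F)$ at step $i'$ (namely the backwards loop to $\mathcal{F}_{i+1}$), so step $i'$ could not have used Condition (II), a contradiction. Since each such pair has $\Box\chi\in\subf(\varphi)$ (at most $n$ choices) and $F\in\mathcal{T}^X_\varphi$, there are at most $n\cdot|\mathcal{T}^X_\varphi|$ pairs, so each block contains at most $n\cdot|\mathcal{T}^X_\varphi|+1$ clouds.

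Combining the two counts gives $l+1\leq(4\cdot|\mathcal{T}^X_\varphi|+1)\,(n\cdot|\mathcal{T}^X_\varphi|+1)$, and a short estimate — using that when $n$ is small the number of $\Box$-subformulas, hence the number of admissible pairs, is correspondingly small — yields the asserted bound $l<5\cdot n\cdot|\mathcal{T}^X_\varphi|^2$. I expect the main obstacle to be the distinctness-of-pairs argument of the previous paragraph: it is precisely where the backwards-loop mechanism of Condition (I) must be combined with the fact that all clouds of a single $\equiv_X$-class are mutually $\leq_X$-related, and lining up the indices and the reused witness $\mathcal{F}_{i+1}$ correctly is the delicate step. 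Once this is in place, the class-counting reduces directly to the maximum-chain-length results of Section~\ref{section:relations}.
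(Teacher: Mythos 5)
Your proposal is correct and takes essentially the same route as the paper's own proof: the paper likewise splits the call sequence into maximal blocks of mutually $\leq_X$-related tableau-clouds, bounds the number of blocks by $\mathrm{mcl}(\leq_X)+1\leq 4\cdot|\mathcal{T}^X_\varphi|+1$ via Corollary~\ref{corollary:mcl-s4s5-ssl}, and proves precisely your distinctness-of-pairs claim by observing that a repeated pair $(\Box\chi,F)$ would let the earlier witness cloud $\mathcal{F}_{i+1}$ serve as a backwards loop, contradicting the failure of Condition~(I). The delicate point you flag --- justifying $\mathcal{F}_{i'}\leq_X\mathcal{F}_{i+1}$ in the case $i+1=i'$ with $X=\kxs$, where $\leq_X$ need not be reflexive --- is handled in the paper by Lemma~\ref{lemma:transitive-classes-of-at-least-two-elements} (a class containing two distinct mutually related clouds is internally fully $\leq_X$-related), exactly as your appeal to mutual relatedness within the class suggests, and your slightly coarser bookkeeping ($n$ instead of $n-1$ for the number of $\Box$-subformulas, clouds per block instead of transitions per block) still yields $l<5\cdot n\cdot|\mathcal{T}^X_\varphi|^2$ after the small-$n$ adjustment you mention.
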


\begin{proof}
Let $X\in\{\kxs, \sxs, \ssl\}$.
Let us assume that during the execution of $\mathit{ALG}_{X}(\varphi)$
a call $\mathit{alg}_{X}(\varphi,\mathcal{F}_0,\ldots,\mathcal{F}_l)$ occurs.
Then, during the execution of $\mathit{ALG}_{X}(\varphi)$,
for all $m \leq l$ a call $\mathit{alg}_{X}(\varphi,\mathcal{F}_0,\ldots,\mathcal{F}_m)$ must occur.
For all $m < l$ there must exist a pair
$(\Box\chi_m,F_m)\in \subf(\varphi) \times \mathcal{F}_m$ with $\Box \chi_m \not\in F_m$ 
which during the execution of $\mathit{alg}_{X}(\varphi,\mathcal{F}_0,\ldots,\mathcal{F}_m)$
leads to a call of  $\mathit{alg}_{X}(\varphi,\mathcal{F}_0,\ldots,\mathcal{F}_{m+1})$,
hence, such that,
on the one hand,
\begin{itemize}
	\item
	(I) is not satisfied, that is, there does not exist an $i\in\{0,\ldots,m\}$ with
	$\mathcal{F}_m \leq_{X}\mathcal{F}_i$
	and such that there exists some $G \in \mathcal{F}_i$ with 
	$F_m \preccurlyeq_{X} G$ and
	$\chi_m\not\in G$,
\end{itemize}
and on the other hand,
\begin{itemize}
	\item
	at least the first part of (II) is satisfied, that is,
	$\mathcal{F}_{m+1} \in\mathfrak{C}^X_\varphi \setminus 
	\{\mathcal{F}_0, \ldots,\mathcal{F}_m\}$
	and $\mathcal{F}_m \leq_{X} \mathcal{F}_{m+1}$
	and there exists some $G \in \mathcal{F}_{m+1}$ with $F_m \preccurlyeq_{X} G$ and $\chi_m\not\in G$.		
\end{itemize}
It is clear that for all $m<l$ we have $\mathcal{F}_m \leq_X \mathcal{F}_{m+1}$.
Let $m_1,\ldots,m_{k-1}$ be in increasing order the elements of the set
\[ \{j \in \{0,\ldots,l-1\} \mid \mathcal{F}_{j} <_X \mathcal{F}_{j+1} \} , \]
(this set can be empty), and set $m_0:= -1$ and $m_k:= l$. 
Then, for each $i  \in \{0,\ldots,k-1\}$, all tableau-clouds $\mathcal{F}_m$ for $m\in\{m_i+1,\ldots,m_{i+1}\}$
are pairwise $\equiv_X$-equivalent:
\[ \ldots \equiv_X \mathcal{F}_{m_i}
 <_X \mathcal{F}_{m_i+1} \equiv_X \mathcal{F}_{m_i+2} \equiv_X \ldots \equiv_X \mathcal{F}_{m_{i+1}}
 <_X \mathcal{F}_{m_{i+1}+1} \equiv_X \ldots \]
Furthermore,
\[ \mathcal{F}_{m_1} <_X < \mathcal{F}_{m_2} <_X \ldots <_X \mathcal{F}_{m_{k-1}} < \mathcal{F}_{m_k} . \]
Hence, $k-1 \leq \mathrm{mcl}(\leq_X)$.
For a moment, let us fix some $i  \in \{0,\ldots,k-1\}$.
Can there be two different numbers $m,\widetilde{m}\in\{m_i+1,\ldots,m_{i+1}\}$,
say with $m < \widetilde{m}$,
such that $(\Box\chi_m,F_m) = (\Box\chi_{\widetilde{m}},F_{\widetilde{m}})$?
We claim that this cannot be the case.
Otherwise, as at least the first part of (II) is satisfied for $m$, there is some 
$G\in \mathcal{F}_{m+1}$ with $F_m \preccurlyeq_X G$ and $\chi_m \in G$,
hence, with $F_{\widetilde{m}} \preccurlyeq_X G$ and $\chi_{\widetilde{m}} \in G$.
Furthermore, as all of the $X$-tableau-sets $\mathcal{F}_0,\ldots,\mathcal{F}_l$ are pairwise different
(this is due to the assumption that during the execution of $\mathit{ALG}_{X}(\varphi)$
a call $\mathit{alg}_{X}(\varphi,\mathcal{F}_0,\ldots,\mathcal{F}_l)$ occurs)
the set $\{\mathcal{F}_{m_i+1},\ldots, \mathcal{F}_{m_{i+1}}\}$ contains at least two different elements
(because the assumption $m,\widetilde{m}\in\{m_i+1,\ldots,m_{i+1}\}$ with $m < \widetilde{m}$,
implies that the set $\{m_i+1,\ldots,m_{i+1}\}$ contains at least two numbers), and 
by Lemma~\ref{lemma:transitive-classes-of-at-least-two-elements}
this implies $\mathcal{F}_{m+1} \leq_X \mathcal{F}_{\widetilde{m}}$
(note that $\mathcal{F}_{m+1} \leq_X \mathcal{F}_{\widetilde{m}}$ is clear if $m+1 < \widetilde{m}$ and also
if $m+1=\widetilde{m}$ and $X\in\{\sxs,\ssl\}$; Lemma~\ref{lemma:transitive-classes-of-at-least-two-elements}
is needed only for the case $m+1=\widetilde{m}$ and $X=\kxs$).
But these facts together would contradict the fact that (I) is not satisfied for $\widetilde{m}$.
We conclude that for pairwise different numbers $m,\widetilde{m}\in\{m_i+1,\ldots,m_{i+1}\}$
we have $(\Box\chi_m,F_m) \neq (\Box\chi_{\widetilde{m}},F_{\widetilde{m}})$.
This implies
\[ m_{i+1} - m_i \leq |\subf_\Box(\varphi) \times \mathcal{T}^X_\varphi| 
   \leq (n-1) \cdot |\mathcal{T}^X_\varphi|  . \]
As this is true for all $i\in\{0,\ldots,k-1\}$, we obtain, using
Corollary~\ref{corollary:mcl-s4s5-ssl}, in all three cases for $X \in \{\kxs, \sxs, \ssl\}$,
\begin{eqnarray*}
 l &=& m_k \\
   &=& -1 + \sum_{i=0}^{k-1} (m_{i+1} - m_i) \\
   &\leq & -1 + k \cdot (n-1) \cdot |\mathcal{T}^X_\varphi|  \\
   &\leq & -1 + (\mathrm{mcl}(\leq_X)+1) \cdot (n-1) \cdot |\mathcal{T}^X_\varphi|  \\
   &\leq & -1 + (4 \cdot |\mathcal{T}^X_\varphi| + 1) \cdot (n-1) \cdot |\mathcal{T}^X_\varphi|  \\
   &< & 5 \cdot n \cdot |\mathcal{T}^X_\varphi|^2 .
%\qedhere
\end{eqnarray*}
\end{proof}

We are now prepared for the proof of the statement formulated at the beginning.

\begin{proof}[{Proof of Proposition~\ref{prop:upperestimate}}]
Let $X\in\{\kxs, \sxs, \ssl\}$.
Before we can analyze the space used by the algorithms $\mathit{ALG}_X(\varphi)$
and $\mathit{alg}_X(\varphi,\mathcal{F}_0,\ldots,\mathcal{F}_l)$,
we have to explain how the formulas, the tableau-sets and the tableau-clouds with which these
algorithms deal are stored in a Turing machine.

Let $\varphi$ be a bimodal formula.
Let $n$ be its length
(as a string over the alphabet $\{(,),\neg,\wedge,\Box,K,$ $X,0,1\}$; compare Definition~\ref{def:syntax},
but see also Remark~\ref{remark:simplified-length}).
Let $a := |\subf(\varphi)|$ be the number of subformulas of $\varphi$. Then $a \leq n$.
Let $\psi_1,\ldots,\psi_a$ be the subformulas of $\varphi$ in some order.
We can identify any subset $T \subseteq \subf(\varphi) = \{\psi_1,\ldots,\psi_a\}$,
in particular any $X$-tableau-set, with a binary string
$s_1\ldots s_a \in\{0,1\}^a$ by defining 
\[ s_i =1 :\iff \psi_i \in T . \]
Let $A := |\mathcal{T}^X_\varphi|$ be the number of all $X$-tableau-sets with respect to $\varphi$.
Then $A \leq 2^a \leq 2^n$. In Section~\ref{section:NumberTableau-sets}
we shall give a better upper estimate of $A$ in the cases
$X\in\{\sxs,\ssl\}$.
As a preliminary step at the beginning of $\mathit{ALG}_X(\varphi)$
we can check for all binary strings $s_1\ldots s_a \in \{0,1\}^a$ in alphabetical order
whether they describe subsets
of $\subf(\varphi)$ that are $X$-tableau-sets and write down only those.
Then we obtain a list of $A$ binary strings of length $a$.
This can be considered as an alphabetical list of all $X$-tableau-sets with respect to $\varphi$.
We will keep this list stored on a working tape of the Turing machine during the
whole computation.
Note that all this can be done in space $O(a \cdot A)$.

Now any set $\mathcal{F}$
whose elements are $X$-tableau-sets with respect to $\varphi$ (so, in particular
any $X$-tableau-cloud with respect to $\varphi$) can be described in a similar
manner by a binary string $b_1\ldots b_A$ of length $A$
where
\[ b_i = 1 :\iff \text{the $i$-th $X$-tableau-set with respect to $\varphi$ is an element of } \mathcal{F} . \]
In the algorithm we will assume that any $X$-tableau-cloud is described by such
a binary string of length $A$.

Note that, given a binary string of length $A$,
it is straightforward to check whether the set of $X$-tableau-sets with respect to $\varphi$
described by this string is an $X$-tableau-cloud with respect to $\varphi$ or not, and this can also be done
within space $O(a \cdot A)$.

Let us consider the for-loop in the algorithm $\mathit{ALG}_X(\varphi)$
as defined in Definition~\ref{def: ALG}:
\begin{quote}
   the algorithm $\mathit{ALG}_{X}(\varphi)$
	lets $\mathcal{F}_0$ run through all $X$-tableau-clouds $\mathcal{F}_0\in\mathfrak{C}^X_\varphi$
	such that there exists some $F\in\mathcal{F}_0$
	with $\varphi\in F$ and applies $\mathit{alg}_{X}$ to 
	$(\varphi, \mathcal{F}_0)$.
\end{quote}
In a detailed implementation of this for-loop (``through all $X$-tableau-clouds $\mathcal{F}_0\in\mathfrak{C}^X_\varphi$
	such that there exists some $F\in\mathcal{F}_0$
	with $\varphi\in F$'')
one can run through all binary strings of length $A$
and discard all those that do not describe an $X$-tableau-cloud with respect to $\varphi$
and all those that do not contain an $X$-tableau-set $F$ with $\varphi \in F$.
It is clear that the conditions that need to be checked here can be checked in space 
$O(a \cdot A)$.

We come to the recursive calls $\mathit{alg}_X(\varphi,\mathcal{F}_0,\ldots,\mathcal{F}_m)$
of the algorithm $\mathit{alg}_X$ that may occur during the
execution of $\mathit{ALG}_X(\varphi)$.
First, remember that according to Proposition~\ref{prop:recursiondepth}
we have $m < 5 \cdot n \cdot A^2$.
We claim that with each new recursive call of 		
$\mathit{alg}_{X}(\varphi,\mathcal{F}_0,\ldots,\mathcal{F}_m)$
at most an additional number of $O(n+A)$ bits need to be stored.

Indeed, one has to go through all pairs 
$(\Box\chi,F)\in \subf(\varphi) \times \mathcal{F}_m$
with $\Box \chi \not\in F$.
These pairs can be stored using $O(\log a + a) \subseteq O(n)$ bits.
Then one checks condition (I). 
 The number $i\in\{0,\ldots,m\}$ considered in (I) can be stored in $O(\log(m)) = O(n)$ bits.
And the set $G$ considered in (I) can be stored in $a \leq n$ bits as well.
When checking whether (II) is true or not one has to look for a certain
tableau-cloud $\mathcal{F}_{m+1}$. Again, this can be stored using not more
than $A$ bits. And the set $G$ considered there can be stored in $O(n)$ space again.
Thus, one does indeed not need to use more than $O(n+A)$ space
with each new recursive call of $\mathit{alg}_{X}$.

We have seen that some preliminary steps and the initial for-loop in the
algorithm $\mathit{ALG}_X(\varphi)$ can be done in space $O(a\cdot A)$.
According to Proposition~\ref{prop:recursiondepth}
the recursion depth $m$ in the recursive
calls of $\mathit{alg}_{X}(\varphi,\mathcal{F}_0,\ldots,\mathcal{F}_m)$
occuring during the computation of $\mathit{ALG}_X(\varphi)$
is at most $5 \cdot n\cdot A^2$. Finally, each recursive call 
requires at most an additional space of $O(n+A)$.
We conclude
that $\mathit{ALG}_X(\varphi)$ can be implemented in such a way that
the space used is of the order $O(n\cdot (n+A)^3)$.
\end{proof}

\begin{remark}
\label{remark:simplified-length}
All arguments in Section~\ref{section:upper-bound-alg} and Section~\ref{section:NumberTableau-sets}
go through as well if with $n$ one does not denote
the length of the bimodal formula $\varphi$ as a string over the alphabet
$\{(,),\neg,\wedge,\Box,K,$ $X,0,1\}$ but instead the ``simplified'' length of $\varphi$
as a string over the infinite alphabet
$\{(,),\neg,\wedge,$ $\Box,K\} \cup AT$.
This can also be defined as the number of symbols different from $0,1$ in 
$\varphi$ (again as a string over the alphabet
$\{(,),\neg,\wedge,\Box,K,X,0,1\}$).
\end{remark}

\section[On the Number of Tableau-sets]{On the Number of Tableau-sets}
\label{section:NumberTableau-sets}

In the previous section we have shown that
our algorithms for the satisfiability problems of the bimodal logics $\kxs$, $\sxs$, and $\ssl$
can be implemented using not more than
$O(n \cdot (n+|\mathcal{T}^X_\varphi|)^3)$ space
where $\varphi$ is the given bimodal formula, where $n$ is its length,
and where $\mathcal{T}^X_\varphi$ for $X\in \{\kxs,\sxs,\ssl\}$
is the set of $X$-tableau-sets with respect to $\varphi$.
As there are at most $n$ subformulas of $\varphi$ we obtain $|\mathcal{T}^X_\varphi|\leq 2^n$.
Thus, we have shown that the algorithms can be implemented in space
$O(n \cdot 2^{3\cdot n})$. Hence,
the satisfiability problems of the bimodal logics $\kxs$, $\sxs$, and $\ssl$ are in $\ESPACE$.

In this section we wish to slightly improve this result in the cases $X\in\{\sxs, \ssl\}$ by giving a slightly better upper bound for
$|\mathcal{T}^X_\varphi|$. By making use of the conditions that an $X$-tableau-set has to satisfy according to
Definition~\ref{def: tableau-sets}.2 we are going to show that, for all bimodal formulas of length
$n \geq 3$, 
\[ |\mathcal{T}^X_\varphi| \leq 2^{\frac{2}{3} n} . \]
In fact, we are going to show the following result.
Let $X\in\{\sxs, \ssl\}$.
For a bimodal formula $\varphi$ let
$\ell(\varphi)$ be its ``simplified length'' as considered in Remark~\ref{remark:simplified-length},
that is, $\ell(\varphi)$ is the number of symbols different from $0,1$ in 
$\varphi$ (as a string over the alphabet
$\{(,),\neg,\wedge,\Box,K,X,0,1\}$).
For $n\geq 1$ let
\begin{eqnarray*}
T(n) &:=& \max \{ |\mathcal{T}^X_\varphi| ~:~ \varphi \text{ is a bimodal formula with } \ell(\varphi) \leq n \} .
\end{eqnarray*}

\begin{proposition}
\label{prop:number-of-tableau-sets}
$\begin{array}[t]{lll}
  & T(1) &= 2 , \\
  & T(2) &= 3 , \\
 \text{for } n\geq 3, & T(n) &< 2^{(2\cdot n/3)} . 
\end{array}$
\end{proposition}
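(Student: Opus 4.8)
The plan is to turn the count $|\mathcal{T}^X_\varphi|$ into a three-term recursion over the structure of $\varphi$ and then push a strengthened bound through a structural induction. For $X\in\{\sxs,\ssl\}$ and a formula $\psi$, write $t(\psi):=|\mathcal{T}^X_\psi|$ for the number of $X$-tableau-sets with respect to $\psi$ itself, and split $t(\psi)=t_1(\psi)+t_0(\psi)$ according to whether $\psi$ lies in the tableau-set or not (this is well defined since $\psi\in\subf(\psi)$). Reading off conditions (a)--(d) of Definition~\ref{def: tableau-sets}.2, I would first record the recurrences: an atom $A$ gives $(t,t_1,t_0)=(2,1,1)$; negation gives $t(\neg\psi)=t(\psi)$ with $t_1,t_0$ swapped; for $\circ\in\{\Box,K\}$ one gets $t_1(\circ\psi)=t_1(\psi)$ and $t_0(\circ\psi)=t(\psi)$, hence $t(\circ\psi)=t(\psi)+t_1(\psi)$; and conjunction gives $t(\psi_1\wedge\psi_2)\le t(\psi_1)\,t(\psi_2)$ together with $t_1(\psi_1\wedge\psi_2)\le t_1(\psi_1)\,t_1(\psi_2)$ (an inequality that is exact on disjoint subformula-trees and only decreases under sharing). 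The two modal lines are precisely where (c) and (d) are used, which is the structural reason the improvement is available for $\sxs$ and $\ssl$ but not for $\kxs$, where $\Box$-formulas are unconstrained.

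After settling $T(1)=2$ and $T(2)=3$ by inspection (the only formulas of simplified length $\le 2$ are atoms and $\neg A,\Box A,K A$, with $(t,t_1)\in\{(2,1),(3,1)\}$), the heart of the argument is to prove, for every $\psi$ with $\ell:=\ell(\psi)\ge 3$, the joint invariant
\[ t(\psi)\le 2^{2\ell/3},\qquad t_1(\psi)\le 2^{2\ell/3-1},\qquad t_0(\psi)\le \tfrac{3}{4}\cdot 2^{2\ell/3}, \]
by induction on the structure of $\psi$. The base case $\ell=3$ is checked by listing the nine formulas $\circ\circ'A$ with $\circ,\circ'\in\{\neg,\Box,K\}$; the extremal instance is $\Box\Box A$, with $(t,t_1,t_0)=(4,1,3)$, sitting exactly on the boundary $t=2^{2}$. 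The inductive step cases on the outermost connective, using that $\neg,\Box,K$ each also raise $\ell$ by $1$ and that conjunction carries a $+3$ from its parentheses: for the conjunction case each factor obeys $t(\psi_i)\le 2^{2\ell_i/3+1/3}$ (by the recorded small values when $\ell_i\le 2$, by the induction hypothesis when $\ell_i\ge 3$), so the product is at most $2^{2\ell/3-4/3}$ and lands far inside the region.

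I expect the main obstacle to be the modal step, since $t(\Box\psi)=t(\psi)+t_1(\psi)$ is almost a doubling while the per-symbol budget is only $2^{2/3}<2$. The doubling is tamed exactly by the auxiliary bound $t_1\le 2^{2\ell/3-1}$, which yields $t(\Box\psi)\le \tfrac{3}{2}\cdot 2^{2\ell/3}\le 2^{2/3}\cdot 2^{2\ell/3}$ via the tight inequality $\tfrac{3}{2}\le 2^{2/3}$. The delicate point is that this $t_1$-bound must itself survive a subsequent negation, where $t_1(\neg\psi)=t_0(\psi)$ can be as large as $\tfrac{3}{4}\cdot 2^{2\ell/3}$; here one needs the second tight inequality $\tfrac{3}{4}\le 2^{-1/3}$ so that, after the extra length unit, $t_0(\psi)\le \tfrac{3}{4}\cdot 2^{2\ell/3}\le 2^{2(\ell+1)/3-1}$. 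Thus all three bounds are mutually reinforcing and have to be propagated simultaneously; dropping any one of them breaks the $\Box$--$\neg$ interaction.

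Finally, the invariant gives $t(\varphi)\le 2^{2n/3}$, which is the asserted bound. I would also note that, because each of the unary and conjunction steps is in fact \emph{strict} once $\ell\ge 4$ (the modal step loses a factor $\tfrac{3}{2}/2^{2/3}<1$, the negation step loses the length increment, and the conjunction step has factor-$2^{-4/3}$ slack), one obtains $T(n)<2^{2n/3}$ for all $n\ge 4$. At $n=3$ the bound is attained with equality by $\Box\Box x0$, so there the correct statement is $T(3)=2^{2}$ rather than a strict inequality; the strict bound should be read as holding for $n\ge 4$, with equality at $n=3$.
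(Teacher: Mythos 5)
Your proof is correct, and it takes a genuinely different route from the paper's. The paper argues by a case distinction on the outermost shape of $\varphi$ (the cases $\neg\chi$, $\circ\neg\chi$, $\circ_1\circ_2\neg\chi$, $\circ_1\circ_2\circ_3\chi$, $\circ(\chi\wedge\psi)$, $(\chi\wedge\psi)$), extracting multiplicative factors $2$, $3$, $4$ for prefixes of modal operators and carrying along the auxiliary quantity $T_\wedge(n)<2^{(2n/3)-1}$ to absorb the loss at conjunctions. Your simultaneous induction on the triple $(t,t_1,t_0)$ replaces those ad hoc factors by the single recurrence $t(\circ\psi)=t(\psi)+t_1(\psi)$ and the three mutually reinforcing invariants $t\le 2^{2\ell/3}$, $t_1\le 2^{2\ell/3-1}$, $t_0\le\frac{3}{4}\cdot 2^{2\ell/3}$; I checked all the recurrences and all inductive cases, and they are sound, with the two tight numerical facts $\frac{3}{2}\le 2^{2/3}$ and $\frac{3}{4}\le 2^{-1/3}$ correctly isolated as the crux. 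Your approach buys a cleaner structural induction with only four cases and makes transparent why conditions (c)/(d) of Definition~\ref{def: tableau-sets} are what enables the improvement (for $\kxs$ one has $t(\Box\psi)=2t(\psi)$, so the saving disappears, matching the paper's remark). One small simplification: in the conjunction case the bound $t_1\le t_1(\psi_1)t_1(\psi_2)$ is not needed, since $t_1\le t\le 2^{2\ell/3-4/3}\le 2^{2\ell/3-1}$ already suffices.

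More importantly, your concluding caveat about $n=3$ is correct and in fact exposes an error in the proposition as printed. For $\varphi=\Box\Box x0$ one has $\ell(\varphi)=3$, and the $X$-tableau-sets (for $X\in\{\sxs,\ssl\}$) are exactly $\emptyset$, $\{x0\}$, $\{x0,\Box x0\}$, $\{x0,\Box x0,\Box\Box x0\}$, so $T(3)=4=2^{2\cdot 3/3}$ and the claimed strict inequality fails at $n=3$. The source of the error is that the paper's six cases do not cover formulas beginning with exactly two modal operators that are not followed by $\neg$: the shape $\circ_1\circ_2 A$ with $A\in AT$ is the counterexample just described, and the shape $\circ_1\circ_2(\chi\wedge\psi)$ is also omitted (though that one is harmless, as it can be treated like the case $\circ_1\circ_2\neg\chi$ via $|\mathcal{T}^X_\varphi|\le 3\cdot T_\wedge(\ell(\varphi)-2)$). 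So the correct statement is the one you give: $T(3)=2^2$ with equality, and $T(n)<2^{2n/3}$ for $n\ge 4$; your strictness argument for $n\ge 4$ (factor $\frac{3}{2}\cdot 2^{-2/3}<1$ at modal steps, the lost length unit at negations, the $2^{-4/3}$ slack at conjunctions) is valid. None of this affects the paper's use of the proposition, since Corollary~\ref{cor:upperestimate-improved} only needs the non-strict bound $|\mathcal{T}^X_\varphi|\le 2^{2n/3}$.
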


Actually, Proposition~\ref{prop:number-of-tableau-sets}
can certainly still be improved by showing an even smaller upper bound for $T(n)$. 
One can apply similar considerations in the case $X=\kxs$. But in order to gain something in that case
one should use a slightly different definition of $\kxs$-tableau-sets, and even then the gain in considerably smaller than in the
cases $X\in\{\sxs,\ssl\}$. Therefore, we refrain from treating the case $X=\kxs$ here.

\begin{proof}[Proof of Proposition~\ref{prop:number-of-tableau-sets}]
In the whole proof we consider $X\in\{\sxs,\ssl\}$. As
the $\sxs$-tableau-sets are exactly the $\ssl$-tableau-sets, that is, as
$\mathcal{T}^{\sxs}_\varphi = \mathcal{T}^{\ssl}_\varphi$ for any bimodal formula $\varphi$, in the proof
we will always suppress $X$ and, for example, simply speak about {\em tableau-sets} instead of $X$-tableau-sets and
simply write $\mathcal{T}_\varphi$ instead of $\mathcal{T}^X_\varphi$.

In addition to $T(n)$, for $n\geq 5$ we define
\begin{eqnarray*}
T_\wedge(n) &:=& \max \{ |\mathcal{T}_\varphi| ~:~ \varphi \text{ is a bimodal formula with } \ell(\varphi) \leq n \text{  and there}\\
  && \phantom{\max \{ |\mathcal{T}_\varphi| ~:~}\text{exist bimodal formulas $\chi$ and $\psi$ with } \varphi= ( \chi \wedge \psi) \} . 
\end{eqnarray*}
Note that any bimodal formula $\varphi$ of the form $(\chi \wedge \psi)$ for bimodal formulas $\chi,\psi$
satisfies $\ell(\varphi)\geq 5$.
In addition to the assertions in the proposition we claim
\[ 
 \text{for } n\geq 5, \ T_\wedge(n) < 2^{(2\cdot n/3)-1} . 
\]
This is needed for the proof of the assertions in the proposition.
We are going to show all of these assertions by induction over $n$.

If $\varphi$ is a bimodal formula with $\ell(\varphi)=1$ then $\varphi=A\in AT$.
There are exactly two tableau-sets with respect to $\varphi$: the empty set and the set $\{A\}$.
This proves the assertion for $n=1$.

Let $\varphi$ be a bimodal formula with $\ell(\varphi)=2$. There are three cases.
\begin{enumerate}
\item
$\varphi = \neg A$ where $A\in AT$. Then there are exactly two 
tableau-sets with respect to $\varphi$: the set $\{A\}$ and the set $\{\neg A\}$.
\item
$\varphi = \Box A$ where $A\in AT$. Then there are exactly three 
tableau-sets with respect to $\varphi$: the empty set, the set $\{A\}$, and the set $\{A, \Box A\}$.
\item
$\varphi = K A$ where $A\in AT$. Then there are exactly three 
tableau-sets with respect to $\varphi$: the empty set, the set $\{A\}$, and the set $\{A, K A\}$. 
\end{enumerate}
This proves the assertion for $n=2$.
In the second case we made use of the fact that if for some bimodal formula $\chi$ the
formula $\Box \chi$ is an element of a tableau-set then $\chi$ is an element of that tableau-set as well.
Similarly, in the third case we made use of the fact that if for some bimodal formula $\chi$ the
formula $K \chi$ is an element of a tableau-set then $\chi$ is an element of that tableau-set as well.
We will make use of these facts in the following cases as well.

Let us consider now a bimodal formula $\varphi$ with $n=\ell(\varphi)\geq 3$.
We distinguish several cases.
\begin{itemize}
\item
$\varphi=\neg \chi$ for some formula $\chi$.

Then for any tableau-set $T\in\mathcal{T}_\varphi$ with respect to $\varphi$ the set
$T\cap \subf(\chi)$ is a tableau-set with respect to $\chi$. And whether the formula $\neg \chi$ is an element of a given tableau-set
$T \in \mathcal{T}_\varphi$ is determined by the answer to the question whether $\chi$ is an element of $T \cap \subf(\chi)$.
Hence, $|\mathcal{T}_\varphi| = |\mathcal{T}_\chi|$.
If $\ell(\chi)=2$ then we get $|\mathcal{T}_\varphi|  = |\mathcal{T}_\chi| \leq  3 < 4 = 2^{2 \cdot 3/3}$.
If $\ell(\chi)\geq 3$ then by induction we get $|\mathcal{T}_\varphi|  = |\mathcal{T}_\chi| < 2^{2 \cdot (n-1)/3} < 2^{2 \cdot n/3}$.
\item
$\varphi=\circ \neg \chi$ for some formula $\chi$ and $\circ\in\{\Box,K\}$.

If $\ell(\varphi)=3$ then $\chi=A$ for some $A\in AT$. In that case there
are exactly three 
tableau-sets with respect to $\varphi$: the set $\{A\}$, the set $\{\neg A\}$, and the set $\{\neg A, \circ \neg A\}$. 
Note that $3 < 4 = 2^{2 \cdot 3/3}$.

If $\ell(\varphi)\geq 4$ then we claim that
$|\mathcal{T}_\varphi| \leq 2 \cdot |\mathcal{T}_\chi|$.
Indeed, if $T$ is a tableau set with respect to $\varphi$ then $T\cap \subf(\chi)$ is a a tableau set with respect to $\chi$.
The only elements in $\subf(\varphi)\setminus \subf(\chi)$ are the two formulas
$\neg \chi$ and $\circ\neg\chi$.
The question whether $\neg \chi$ is an element of $T$ or not is determined already by $T\cap \subf(\chi)$.
We have shown $|\mathcal{T}_\varphi| \leq 2 \cdot |\mathcal{T}_\chi|$.
In the case $\ell(\varphi)= 4$ we obtain $\ell(\chi)=2$, hence,
$|\mathcal{T}_\varphi| \leq 2 \cdot |\mathcal{T}_\chi| \leq 2 \cdot 3 = 6 < 2^{2 \cdot 4/3}$.
In the case $\ell(\varphi) \geq 5$ we obtain $\ell(\chi)=\ell(\varphi)-2\geq 3$, hence, by induction hypothesis,
$|\mathcal{T}_\varphi| \leq 2 \cdot |\mathcal{T}_\chi| < 2 \cdot 2^{2 \cdot (n-2)/3} < 2^{2 \cdot n/3}$.
\item
$\varphi=\circ_1 \circ_2 \neg \chi$ for some formula $\chi$ and $\circ_1,\circ_2 \in \{\Box,K\}$.

We claim that
$|\mathcal{T}_\varphi| \leq 3 \cdot |\mathcal{T}_\chi|$.
Indeed, if $T$ is a tableau set with respect to $\varphi$ then $T\cap \subf(\chi)$ is a a tableau set with respect to $\chi$.
The only elements in $\subf(\varphi)\setminus \subf(\chi)$ are the three formulas
$\neg \chi$, $\circ_2\neg\chi$, and $\circ_1\circ_2\neg\chi$.
The question whether $\neg \chi$ is an element of $T$ or not is determined already by $T\cap \subf(\chi)$.
And for the two formulas $\circ_2 \neg \chi$ and $\circ_1\circ_2 \neg \chi$ we observe that if
$\circ_1\circ_2 \neg \chi$ is an element of $T$ then so is $\circ_2 \neg \chi$.
We have shown $|\mathcal{T}_\varphi| \leq 3 \cdot |\mathcal{T}_\chi|$.

It is clear that $\ell(\varphi)=\ell(\circ_1\circ_2\neg \chi) \geq 4$.
In the case $\ell(\varphi)=4$ we obtain $\ell(\chi)=1$, hence,
$|\mathcal{T}_\varphi| \leq 3 \cdot |\mathcal{T}_\chi| \leq 3 \cdot 2 = 6 < 2^{2 \cdot 4/3}$.
In the case $\ell(\varphi)=5$ we obtain $\ell(\chi)=2$, hence,
$|\mathcal{T}_\varphi| \leq 3 \cdot |\mathcal{T}_\chi| \leq 3 \cdot 3 = 9 < 2^{2 \cdot 5/3}$.
In the case $\ell(\varphi) \geq 6$ we obtain $\ell(\chi)=\ell(\varphi)-3\geq 3$, hence, by induction hypothesis,
$|\mathcal{T}_\varphi| \leq 3 \cdot |\mathcal{T}_\chi| < 3 \cdot 2^{2 \cdot (n-3)/3} < 2^{2 \cdot n/3}$.
\item
$\varphi=\circ_1 \circ_2 \circ_3 \chi$ for some formula $\chi$ and $\circ_1,\circ_2,\circ_3 \in\{\Box,K\}$.

Again, we will use the already mentioned fact for any subformula $\circ_i \chi$ of $\varphi$: if $\circ_i \chi$ is an element
of a tableau set with respect to $\varphi$ then $\chi$ is an element of the same tableau set.

First, let us consider the cases $\ell(\varphi)=4$ and $\ell(\varphi)=5$.
If $\ell(\varphi)=4$ then $\chi=A$ for some $A\in AT$, and one checks
that there are exactly five tableau sets with respect to $\varphi$: the sets
$\emptyset$, $\{A\}$, $\{A, \circ_3 A\}$, $\{A, \circ_3 A, \circ_2\circ_3 A\}$, $\{A, \circ_3 A, \circ_2\circ_3 A, \circ_1\circ_2\circ_3 A\}$.
Note that $5 < 2^{2 \cdot 4/3}$.
Next, let us consider the case $\ell(\varphi)=5$.
Then there exists some $A\in AT$ such that either $\chi=\neg A$ or $\chi= \circ_4 A$ for some $\circ_4 \in \{\Box,K\}$.
One checks that in the first case there are again exactly five tableau sets with respect to $\varphi$
and in the second case there are exactly six tableau sets with respect to $\varphi$.
Note that $6 < 2^{2 \cdot 5/3}$.

For the case $\ell(\varphi)\geq 6$ we claim that
$|\mathcal{T}_\varphi| \leq 4 \cdot |\mathcal{T}_\chi|$.
Indeed, if $T$ is a tableau set with respect to $\varphi$ then $T\cap \subf(\chi)$ is a a tableau set with respect to $\chi$.
And for the three formulas $\circ_3 \chi$ and $\circ_2\circ_3 \chi$ and $\circ_1\circ_2\circ_3 \chi$
there are only four possibilities: (1) none of them is an element of $T$,
(2) only $\circ_3 \chi$ is an element of $T$ (3) only $\circ_2\chi$ and $\circ_2\circ_3\chi$ are elements of $T$,
(4) all three of them are elements of $T$.
We have shown $|\mathcal{T}_\varphi| \leq 4 \cdot |\mathcal{T}_\chi|$.
In the case $\ell(\varphi)\geq 6$ we obtain $\ell(\chi)=\ell(\varphi)-3\geq 3$, hence,
$|\mathcal{T}_\varphi| \leq 4 \cdot |\mathcal{T}_\chi| < 4 \cdot 2^{2 \cdot (n-3)/3} = 2^{2\cdot n/3}$.
\item
$\varphi=\circ (\chi \wedge \psi)$ for some formulas $\chi, \psi$ and $\circ \in \{\Box,K\}$.

Then $\ell(\varphi)\geq 6$ and $\ell( (\chi \wedge\psi) ) = \ell(\varphi)-1  \geq 5$.
Using the induction hypothesis for $T_\wedge(n-1))$ we obtain 
\[
   |\mathcal{T}_\varphi|
  \leq 2 \cdot |\mathcal{T}_{(\chi \wedge \psi)}| 
  < 2 \cdot 2^{(2 \cdot (n-1)/3)-1} 
  < 2^{2 \cdot n/3} .  
\]
\end{itemize}  

Finally, let us consider the case $\varphi=(\chi \wedge \psi)$, for some formulas $\chi,\psi$.
As before, let $n:=\ell(\varphi)$. Note that $n=3+\ell(\chi)+\ell(\psi)$.
It is sufficient to prove $|\mathcal{T}_\varphi| < 2^{(2 \cdot n/3)-1}$.
We observe by induction hypothesis:
\begin{align*}
  |\mathcal{T}_\varphi|
  &\leq |\mathcal{T}_\chi| \cdot |\mathcal{T}_\psi| \\
  &\leq \begin{cases}
     2 \cdot 2 = 4 < 2^{(2 \cdot 5/3)-1}& \text{if } \ell(\chi)=1 \text{ and } \ell(\psi)=1, \\
     2 \cdot 3 = 6 < 2^{(2 \cdot 6/3)-1}& \text{if } \ell(\chi)=1 \text{ and } \ell(\psi)=2, \\
     3 \cdot 2 = 6 < 2^{(2 \cdot 6/3)-1}& \text{if } \ell(\chi)=2 \text{ and } \ell(\psi)=1, \\
     3 \cdot 3 = 9 < 2^{(2 \cdot 7/3)-1}& \text{if } \ell(\chi)=2 \text{ and } \ell(\psi)=2, \\
     2 \cdot 2^{2\cdot \ell(\psi)/3} < 2^{(2 \cdot n/3)-1}& \text{if } \ell(\chi)=1 \text{ and } \ell(\psi)\geq 3, \\
     2^{2\cdot \ell(\chi)/3} \cdot 2 < 2^{(2 \cdot n/3)-1}& \text{if } \ell(\chi)\geq 3 \text{ and } \ell(\psi)=1, \\
     3 \cdot 2^{2\cdot \ell(\psi)/3} < 2^{(2 \cdot n/3)-1}& \text{if } \ell(\chi)=2 \text{ and } \ell(\psi)\geq 3, \\
     2^{2\cdot \ell(\chi)/3} \cdot 3< 2^{(2 \cdot n/3)-1}& \text{if } \ell(\chi)\geq 3 \text{ and } \ell(\psi)=2, \\
     2^{2\cdot \ell(\chi)/3} \cdot 2^{2\cdot \ell(\psi)/3} < 2^{(2 \cdot n/3)-1}& \text{if } \ell(\chi)\geq 3 \text{ and } \ell(\psi)\geq 3.
%\qedhere
\end{cases}
\end{align*}
\end{proof}

\begin{corollary}
\label{cor:upperestimate-improved}
Let $X\in\{\sxs,\ssl\}$.
		The algorithm $\mathit{ALG}_X$ can be implemented on a multi-tape Turing machine
		so that it, given a bimodal formula $\varphi$ of length $n$, 
		does not use more than $O(n \cdot 2^{2\cdot n})$ space.
\end{corollary}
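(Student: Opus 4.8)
The plan is to combine the generic space bound from Proposition~\ref{prop:upperestimate} with the sharper counting bound from Proposition~\ref{prop:number-of-tableau-sets}. Proposition~\ref{prop:upperestimate} already tells us that, for $X\in\{\sxs,\ssl\}$, the algorithm $\mathit{ALG}_X$ can be implemented to run in space $O(n\cdot(n+|\mathcal{T}^X_\varphi|)^3)$, where $n$ is the length of $\varphi$ and $|\mathcal{T}^X_\varphi|$ is the number of $X$-tableau-sets with respect to $\varphi$. So the only remaining task is to feed in the improved estimate $|\mathcal{T}^X_\varphi|<2^{2n/3}$ and to check that the resulting expression simplifies to $O(n\cdot 2^{2n})$.

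For the tableau-set count I would invoke Proposition~\ref{prop:number-of-tableau-sets}. Since the simplified length $\ell(\varphi)$ is at most the full length $n$ and the function $T$ is monotone nondecreasing, we obtain $|\mathcal{T}^X_\varphi|\le T(\ell(\varphi))\le T(n)<2^{2n/3}$ for all $n\ge 3$; here Remark~\ref{remark:simplified-length} justifies that passing between the full length and the simplified length $\ell$ does not affect the asymptotics. The finitely many cases $n\le 2$ are absorbed into the $O$-notation.

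The only thing left is the arithmetic. I would note that $n\le 2^{2n/3}$ holds for all $n\ge 1$ (an exponential-versus-linear comparison, trivial to verify for the small initial values and clear thereafter), so that
\[ n+|\mathcal{T}^X_\varphi| < n + 2^{2n/3} \le 2\cdot 2^{2n/3} = 2^{1+2n/3}. \]
Cubing gives $(n+|\mathcal{T}^X_\varphi|)^3 < 2^{3+2n} = 8\cdot 2^{2n}$, and therefore
\[ O\bigl(n\cdot(n+|\mathcal{T}^X_\varphi|)^3\bigr) = O(n\cdot 2^{2n}), \]
which is exactly the claimed bound.

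There is essentially no obstacle in the present statement: all the difficulty has already been discharged in Proposition~\ref{prop:number-of-tableau-sets} (the delicate inductive counting of tableau-sets by structural cases) and in Proposition~\ref{prop:upperestimate} (the implementation and recursion-depth analysis). The corollary is a routine substitution, and the only points requiring a moment's care are the transition between the full length and the simplified length $\ell$ via Remark~\ref{remark:simplified-length}, and the elementary observation that the term $2^{2n/3}$ dominates the additive $n$, so that the cube collapses to $2^{2n}$ up to a multiplicative constant.
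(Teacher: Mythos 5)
Your proposal is correct and takes the same route as the paper, whose proof consists precisely of citing Propositions~\ref{prop:upperestimate} and~\ref{prop:number-of-tableau-sets} and declaring the corollary immediate. You merely make the substitution explicit — the monotonicity of $T$, the passage via $\ell(\varphi)\le n$, and the estimate $n+2^{2n/3}\le 2^{1+2n/3}$ so that the cube gives $8\cdot 2^{2n}$ — all of which is sound and exactly the arithmetic the paper leaves to the reader.
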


\begin{proof}
This follows immediately from Propositions~\ref{prop:upperestimate} and \ref{prop:number-of-tableau-sets}.
\end{proof}

\begin{proof}[{Proof of Theorem~\ref{theorem:upperbound} in the cases $X\in\{\sxs,\ssl\}$}]
Let us assume $X\in \{\sxs,\ssl\}$. We have presented an algorithm $\mathit{ALG}_X$ that,
according to Proposition~\ref{prop: ALG correct}, accepts a bimodal formula $\varphi$ if, and only if, $\varphi$ is
$X$-satisfiable.
And according to Corollary~\ref{cor:upperestimate-improved} the algorithm $\mathit{ALG}_X$
can be implemented in such a way that it works in space $O(n \cdot 2^{2\cdot n})$
where $n$ is the length of the input formula $\varphi$.
\end{proof}

\bibliographystyle{abbrv}
\bibliography{dis}

\end{document}